\newcommand{\CHENG}{}
\newcommand{\JIANYANG}{}
\newcommand{\CHENGB}{}
\newcommand{\JIANYANGB}{}
\newcommand{\JIANYANGLAST}{}
\newcommand{\CHENGC}{}
\newcommand{\JIANYANGREVISION}{}
\newcommand{\chengr}{}
\newcommand{\JIANYANGCAMERA}{}
\newcommand{\chengf}{}
  \providecommand\BibTeX{{%
    \normalfont B\kern-0.5em{\scshape i\kern-0.25em b}\kern-0.8em\TeX}}}
\begin{document}
\sloppy


\title{High-Dimensional Approximate Nearest Neighbor Search: with Reliable and Efficient Distance Comparison Operations}

\author{Jianyang Gao}
\affiliation{%
  \institution{Nanyang Technological University}
  \country{Singapore}}
\email{jianyang.gao@ntu.edu.sg}

\author{Cheng Long}
\affiliation{%
  \institution{Nanyang Technological University}
  \country{Singapore}}
\email{c.long@ntu.edu.sg}
\renewcommand{\shortauthors}{Jianyang Gao and Cheng Long}

\begin{abstract}
Approximate K nearest neighbor (AKNN) search {\JIANYANGREVISION in the high-dimensional Euclidean vector space} is a fundamental and challenging problem. {\CHENGB We observe that} in high-dimensional space, the time consumption of {\CHENGB nearly all} AKNN algorithms is dominated by {\CHENGB that of} the distance comparison operations (DCOs). For each operation, it scans full dimensions of an object and thus, runs in linear time wrt the dimensionality.
To speed it up, we propose a randomized algorithm named \texttt{ADSampling} which runs in logarithmic time {\CHENGC wrt} the dimensionality {\CHENGB for the majority of DCOs} and succeeds with 
high probability. 
In addition, based on \texttt{ADSampling} we develop one {\chengf generic} and two algorithm-specific techniques as plugins to enhance existing AKNN algorithms.
Both theoretical and empirical studies confirm that: (1) our techniques introduce nearly no accuracy loss and (2) they consistently improve the efficiency.
\end{abstract}

\maketitle

\section{Introduction}
\label{sec:introduction}


K nearest neighbor (KNN) search in the high-dimensional Euclidean {\JIANYANGREVISION vector} space
is a fundamental problem and has {\CHENGB a wide range of} applications in information retrieval~\cite{KNNimageretrieval}, data mining~\cite{KNNClassficiation} and recommendations~\cite{KNNrecommendation}. However, due to the curse of dimensionality~\cite{indyk1998approximate}, exact KNN query usually requires unacceptable response time. To achieve better time and accuracy tradeoff, many researchers turn to its relaxed version, {\CHENGB namely} approximate K nearest neighbor (AKNN) search~\cite{datar2004locality, muja2014scalable, jegou2010product, malkov2018efficient, ge2013optimized, guo2020accelerating}.

{\CHENG Many algorithms have been developed for AKNN, including 
{\JIANYANGREVISION (1) graph-based~\cite{malkov2018efficient, NSW, li2019approximate, fu2019fast, fu2021high, SISAP_graph}}, (2) quantization-based~\cite{jegou2010product, ge2013optimized, guo2020accelerating, ITQ, additivePQ, imi}, (3) {\JIANYANGREVISION tree-based ~\cite{muja2014scalable, dasgupta2008random, ram2019revisiting, beygelzimer2006cover, reviewer_M_tree}} and (4) hashing-based~\cite{indyk1998approximate, datar2004locality, c2lsh, tao2010efficient, huang2015query, sun2014srs, lu2020vhp, zheng2020pm, james_cheng}.
{\JIANYANGREVISION In particular, we focus on in-memory AKNN algorithms which assume that all raw data vectors and indexes can be hosted in the main memory~\cite{malkov2018efficient, li2019approximate, fu2019fast,NSW, jegou2010product, muja2014scalable, beygelzimer2006cover, zheng2020pm}.}
These algorithms all adopt the strategy of first generating {\JIANYANGLAST some} candidates {\JIANYANGLAST for} KNNs and then finding {\JIANYANGLAST out} the KNNs among 
{\JIANYANGLAST them}.~\footnote{Graph-based methods generate candidates and find {\JIANYANGLAST out} the KNNs among the candidates generated so far \emph{iteratively}.} \underline{First}, they differ in their ways of generating candidates of KNNs. 
For example, graph-based methods organize the vectors with a graph and conduct a heuristic-based search (e.g., greedy search) on the graph for generating candidates.
\underline{Second}, these algorithms largely share their ways of finding KNNs among the candidates. 
Specifically, they maintain a KNN set $\mathcal Q$~\footnote{In graph-based methods, the size of $\mathcal Q$ is set to be an integer $N_{ef} > K$ since the the distance of the $N_{ef}$th 
NN is required for generating candidates. In other AKNN methods, the size of $\mathcal Q$ is set to be $K$.} (technically, a max-heap), and for a new candidate, they check whether its distance {\JIANYANGLAST {\CHENGC from the query}} is 
{\JIANYANGLAST no greater than}
the maximum in $\mathcal Q$. If so, they include the candidate to $\mathcal Q$ with the distance as a key;~\footnote{{\CHENGB {\JIANYANGCAMERA When} the distance is equal to the maximum in $\mathcal{Q}$, {\chengf they} can choose not to include {\JIANYANGCAMERA it.}}} otherwise, the candidate is discarded. 
We call the computation of \textbf{checking whether an object has its distance from a query 
{\JIANYANGLAST no greater than}
a distance threshold and providing its distance if so} a \emph{distance comparison operation} (DCO).
Given an object $\mathbf{o}$ and a distance threshold $r$, we say that $\mathbf{o}$ is a \emph{positive} object (wrt $r$) if $\mathbf{o}$'s distance from the query is at most $r$ and a \emph{negative} object otherwise. 


All existing AKNN algorithms adopt the following method for the DCO for an object and a threshold. It first computes the object's distance (from the query) and then compares the computed distance against the threshold. We call this method \texttt{FDScanning} since it scans full dimensions of the object for the operation. Clearly, \texttt{FDScanning} has the time complexity of $O(D)$, where $D$ is the number of dimensions of an object. 
Based on \texttt{FDScanning}, nearly all AKNN algorithms have their time costs dominated by that of performing DCOs. We consider one of the most popular AKNN algorithms, \texttt{HNSW}~\cite{malkov2018efficient}, 
for illustration (elaborations on other algorithms will be provided in Section~\ref{subsec:aknn}). 
Let $N_s$ be the number of generated candidates of KNNs. The total time cost of \texttt{HNSW} is $O(N_s D + N_s \log N_s)$, where $O(N_s D)$ is the cost of performing DCOs and $O(N_s \log N_s)$ is the cost of other computations (detailed analysis can be found in Section~\ref{subsec:aknn}). 
Since $D$ can be 
{\JIANYANGREVISION hundreds}
while $\log N_s$ is a few dozens only for a big dataset involving millions of objects in practice, the cost of performing DCOs dominates the total cost of \texttt{HNSW} {\JIANYANG (we empirically verify the statement in Section~\ref{subsec:aknn}.)}. {\JIANYANG For example, on {\CHENG a real dataset DEEP, which involves 256} dimensions, the DCOs take 77.2\% of the total running time.}

We have two observations.
%
\underline{First}, for {\CHENGB DCOs on} negative objects, we only need to confirm that the objects' distances are larger than the threshold distance \emph{without returning their exact distances} - recall that negative objects would be discarded.
{\CHENGB Therefore, \texttt{FDScanning}, which always computes the distance of an object for a DCO on the object, conducts more than necessary computation for negative objects.}
\underline{Second}, among the DCOs involved in an AKNN algorithm, often most would be for negative objects. A clue for this is that when the DCO is conducted on an object, the distance threshold corresponds to some small distance (e.g., the $K$th smallest distance seen so far in many AKNN algorithms). As a result, the object would likely have its distance larger than the threshold and correspond to a negative object. We verify this statement empirically (details can be found in Section~\ref{subsec:aknn}). For example, 
for a representative 
algorithm \texttt{IVF}~\cite{jegou2010product}, the number of negative objects is 60x to 869x more than that of the positive ones.
These observations collectively show that \texttt{FDScanning} is an over-kill for most of the DCOs that are involved for answering a KNN query, and thus there exists much room to achieve 
{\CHENGB reliable (nearly-exact) DCOs{\CHENGB~\footnote{We target reliable DCOs since DCOs are used for finding KNNs among the generated candidates, and if their accuracy is compromised, the quality of the found KNNs would be largely affected.}} with better efficiency.


In the literature, no efforts have been devoted to achieving 
reliable DCOs with better efficiency
than \texttt{FDScanning}, to the best of our knowledge. 
An immediate attempt is to use some distance approximation techniques such as \emph{product quantization}~\cite{ge2013optimized, jegou2010product,ITQ} and \emph{random projection}~\cite{johnson1984extensions} for DCOs in order to achieve better efficiency. However, as widely found in the literature~\cite{learningtohashsurvey, adaptive2020ml} and also empirically verified in our experiments, these techniques cannot avoid accuracy sacrifice in order to achieve some remarkable time cost savings, and thus they can hardly be used to achieve \emph{reliable} DCOs with better efficiency. For example, {\JIANYANG according to \cite{learningtohashsurvey}, on the dataset SIFT1M with one million 128-dimensional vectors, none of the 
{\JIANYANGLAST quantization}
algorithms achieve more than 60\% recall without re-ranking.} Indeed, these techniques have only been used for generating candidates of KNNs~\cite{learningtohashsurvey}, but not (in DCOs) for 
{\JIANYANGB finding them out from the generated candidates}.}

Therefore, in this paper, we develop a new method called \texttt{ADSampling} to fulfill this goal. 
At its core, \texttt{ADSampling} projects the objects to 
spaces with different dimensionalities
and conduct DCOs based on the projected vectors for better efficiency.
Different from the conventional 
random projection technique~\cite{johnson1984extensions, datar2004locality, sun2014srs, c2lsh}, which projects \emph{all} objects to vectors with \emph{equal} dimensions, \texttt{ADSampling} is novel in the following aspects. \underline{First}, it projects \emph{different} objects to vectors with \emph{different} 
numbers of dimensions
{\CHENGB during the query phase} \emph{flexibly}. The rationale is that for {\JIANYANG negative} objects that are farther away from the query, it would be sufficient to project them to a space with fewer dimensions for {\CHENGC reliable} DCOs; whereas for {\JIANYANG negative} objects that are closer to the query, they should be projected to a space with more dimensions for {\CHENGC reliable} DCOs. We note that for positive objects (i.e., those that have their distances from the query at most a threshold), their distances should ideally not be distorted. \texttt{ADSampling} achieves this flexibility with two steps. (1) It first {\JIANYANG preprocesses} the objects via a \emph{random orthogonal transformation}~\cite{choromanski2017unreasonable, randomortho} {\CHENGB during the index phase (i.e., before a query comes)}. This step merely {\JIANYANG randomly} rotates the objects without distorting their distances. 
{\JIANYANG 
(2) Then during the query phase, when handling DCOs {\CHENGB on different objects}, it samples different numbers of dimensions of their transformed vectors. {\CHENGB We verify that the sampled vectors produced by these two steps 
{\JIANYANG are identically distributed with}
those} 
obtained from random projection, and thus, the approximate distances based on the sampled vectors, like those based on random projection, {\CHENGB correspond to} good estimations of the true distances while {\CHENGB achieving} the aforementioned flexibility of dimensionality.} 

\underline{Second}, {\CHENGB it decides the number of dimensions to be sampled for each object \emph{adaptively} based on 
{\CHENGC the DCO on the object}
during the query phase, but not pre-sets it to a certain number during the index phase (which is knowledge demanding and difficult to set in practice).}
Specifically, it 
{\JIANYANGREVISION \emph{incrementally}}
samples the dimensions of a transformed vector until it can confidently conduct the DCO on the object based on 
{\JIANYANGLAST the sampled vector}. 
With 
{\JIANYANGLAST the sampled vector}, 
{\CHENGC it determines whether there has been enough evidence for a reliable DCO by computing an approximate distance of the object and then conducting a \emph{hypothesis testing} based on the computed approximate distance.}
This is possible due to 
{\JIANYANGB the fact that there is a theoretical error bound on the approximate distance}
{\CHENGB(recall the aforementioned equivalence between a projected vector by \texttt{ADSampling} and that by the conventional random projection)}.

{\CHENGB \texttt{ADSampling} achieves reliable DCOs with better efficiency
than \texttt{FDScanning}, which we explain as follows.}
First, for each negative object, {\JIANYANG we prove that} \texttt{ADSampling} would always return the correct answer 
and run in \emph{logarithmic} time wrt $D$ {\JIANYANG in expectation}
{\JIANYANG (recall that \texttt{FDSanning} runs in linear time wrt $D$). }
Second, {\JIANYANG for each positive object, it succeeds with high probability and runs in $O(D)$ {\CHENGC time}.}
Third, there are much more negative objects than positive objects.

We summarize the major contributions of this paper as follows.
\begin{enumerate}
\item We systematically review existing AKNN algorithms and identify the \emph{distance comparison operation} (DCO), which is ubiquitous in AKNN algorithms. With the existing method \texttt{FDScanning}, the costs of DCOs dominate the overall costs for nearly all AKNN algorithms, which we verify both theoretically and empirically. (Section~\ref{sec:dco})

\item We propose a new method \texttt{ADSampling}, 
which achieves 
{\CHENGB reliable DCOs with better efficiency}
{\JIANYANGREVISION for the high-dimensional Euclidean space}. 
Specifically, in most of the cases (i.e., for negative objects), \texttt{ADSampling} runs in \emph{logarithmic} time wrt $D$ and always returns a correct answer. (Section~\ref{sec:adsampling})

\item For a general AKNN algorithm (which we denote by \texttt{AKNN}), we replace \texttt{FDScanning} with \texttt{ADSampling} for DCOs and achieve a new algorithm (which we denote by \texttt{AKNN+}). 
{\JIANYANG We prove that an \texttt{AKNN+} algorithm preserves the results of its corresponding \texttt{AKNN} algorithm with high probability and significantly reduces the time complexity. (Section~\ref{sec:aknn+})}

\item 
{\JIANYANGCAMERA We further develop two AKNN-algorithm-specific techniques to improve the cost-effectiveness of \texttt{AKNN+} algorithms.}
For example, for 
{\JIANYANG graph-based algorithms (with \texttt{HNSW+} as a representative)}, we incorporate more approximations and obtain a new algorithm (which we call \texttt{HNSW++}); for other algorithms (with \texttt{IVF+} as a representative), we improve their cost-effectiveness with cache-friendly storage. (Section~\ref{sec:aknn++})

\item We conduct extensive experiments on real datasets, which verify our techniques.
{\JIANYANG
For example, \texttt{ADSampling} brings up to 2.65x speed-up on \texttt{HNSW} and 5.58x on \texttt{IVF} {\CHENG while providing} the same accuracy. 
Besides, 
it helps to save up to 75.3\% of the evaluated dimensions for \texttt{HNSW} and up to 89.2\% of those for \texttt{IVF} with the accuracy loss of no more than 0.14\%. (Section~\ref{section:experiment})
}


\end{enumerate}
%
{\JIANYANGREVISION For the rest of the paper, we review the related work in Section~\ref{sec:related work} and conclude the paper in Section~\ref{sec:conclusion}.}
}

\if 0
{\CHENG In this paper, we propose to conduct \emph{nearly-exact} DCOs based on approximate distances with probabilistic guarantees for better efficiency. 
We identify four desiderata of the method to fully unleash the power of using approximate distances for reliable distance comparisons.
\underline{First}, it should be able to provide an error guarantee for an approximate distance since with the error guarantees of approximate distances, the results of distance comparisons would be accurate (with guarantees) and even exact.
For example, suppose the maximum distance in $\mathcal{K}$ is 1 and a data vector $\mathbf{o}_1$ has an approximate distance $1.05$ (with a relative error bound $5\%$. Then, we know that the exact distance of $\mathbf{o}_1$, which is at least $1.05 / (1+0.05)=1$, is not smaller than the maximum in $\mathcal{K}$ (i.e., 1).
\underline{Second}, it should have the \emph{flexibility} of achieving different ``resolutions'' of approximate distances for different data objects as candidates of KNNs, and correspondingly, it would provide different error bounds for different objects.
Back to the aforementioned example, we consider another candidate $\mathbf{o}_2$. Suppose $\mathbf{o}_2$'s approximate distance to the query, when measured at the same resolution as that of $\mathbf{o}_1$'s approximate distance, is $1.01$, and the corresponding error bound is $5\%$. In this case, we are not able to conclude on whether $\mathbf{o}_2$ has a smaller distance than the maximum one in $\mathcal{K}$. Instead, we need to increase the resolution somehow so that it would provide a better error bound to be able to make a conclusion, e.g., at a resolution with its corresponding error bound of $1\%$, we know that the exact distance of $\mathbf{o}_2$, which is at least $1.01 / (1+0.01)=1$, is not smaller than the maximum in $\mathcal{K}$ (i.e., 1).
}
{\CHENG \underline{Third}, it should have the ability to \emph{adaptively} determine an appropriate resolution of the approximate distance for an object such that the it is neither more than enough (which means some computation of distance evaluation could be saved otherwise) nor lower than necessary (which means a firmed result of distance comparison cannot be obtained).}
%
{\CHENG\underline{Fourth}, it should have the ability to \textit{recover} the exact distance by reaching the highest resolution with the error bound of 
{\JIANYANG 0\%}.
This is because some candidates have their distances extremely fragile to approximation, and in these cases, exact distances are not avoidable to obtain firmed results of distance comparison.}
In summary, to efficiently produce exact distance comparison results with approximate distances, a method should have (1) guaranteed error bound, (2) flexibility of multiple resolutions, (3) adaptivity of reaching an appropriate resolution and (4) recoverability of the exact distance. 

\begin{table}[h]\small
  \caption{Methods of Distance Approximation}
  \label{tab:freq}
  \begin{tabular}{c|cccc}
    \toprule
    &DR &QT &RP &\textbf{ours}\\
    \midrule
   Error Bound &NO &NO & \textbf{Probabilistic} &\textbf{Probabilistic}\\
   Flexibility &NO &\textbf{Limited} & NO &\textbf{YES}\\
   Adaptivity &NO &NO & NO &\textbf{YES}\\
   Recoverability &NO &NO  & NO &\textbf{YES}\\
  \bottomrule
\end{tabular}
\end{table}

There are two main forms of distance approximation that have been used for AKNN: 
1) quantization (QT)~\cite{ge2013optimized, jegou2010product,ITQ, additivePQ, guo2020accelerating}, and 2) dimension reduction, in which dimension reduction can be further categorized into optimization-based dimension reduction (DR)~\cite{wold1987principal, kruskal1964multidimensional} and random projection (RP)~\cite{johnson1984extensions, blockjlt, fftjlt}.
{\CHENG Unfortunately, none of these existing methods has all of the aforementioned four desiderata} (a summary is presented in Table~\ref{tab:freq}). 
First, QT and DR (e.g., PCA~\cite{wold1987principal}) optimize a compressed representation to minimize the \emph{total} approximation error instead of the \emph{maximum} one, which fails to guarantee an error bound. {\CHENG Only RP provides some probabilistic error bounds.
Second, DR and RP do not support flexible resolutions of approximate distances.
Only QT supports flexible resolutions to some extent with a three-stage strategy \cite{jegou2010product, imi, surveyl2hash}:}
1) generate candidate lists with coarse code; 2) shrink the list with finer code; 3) re-rank the list with exact distance~\footnote{Some stages could be skipped according to specific requirements, e.g., memory constraint~\cite{johnson2019billion, jegou2011searching}.}. 
{\CHENG Nevertheless, the list size and code size at each stage are preset hyper-parameters and fixed for all queries, and thus QT provides very limited flexibility only.}
Third, to the best of our knowledge, no existing methods achieve adaptivity and recoverability on resolution of distance for high-dimensional nearest neighbor search. It's worth noting that hashing-based methods~\cite{indyk1998approximate, datar2004locality, c2lsh, tao2010efficient, huang2015query, lu2020vhp} are also popular for data compression. They target to map close vectors to similar hash codes and {\CHENG use code comparison as a proxy of distance comparison}.
However, since hashing 
does not explicitly approximate distances, they're not within the scope of our discussion.  \footnote{Also, since different vectors may be mapped to the same code, hashing cannot 
{\CHENG help with exact distance comparison}.
}
{\JIANYANG We also emphasize that since hashing and quantization cannot provide guarantee for distance comparison. They can be used only in the first stage of AKNN query, i.e., generating candidates but not the second.}
\fi

\if 0
In this paper, we propose \textbf{adaptive dimension sampling} (ADSampling),
which is a probabilistic distance comparison framework with {\CHENG guaranteed error bounds,} flexibility, adaptivity and recoverability. 
{\CHENG ADSampling involves two major ideas. \underline{First}, it first performs the \emph{random orthogonal transformation}~\cite{choromanski2017unreasonable, yu2016orthogonal, ITQ, jegou2010product} (geometrically, random rotation) over all data vectors as a pre-processing step. It guarantees that (1) the distance between 
{\JIANYANG any two vectors}
is preserved as the distance between their rotated vectors and (2) randomly sampling the dimensions of a rotated vector is equivalent to performing random projection over the original vector.
\underline{Second}, when conducting distance comparison for a data vector, it \emph{progressively} samples dimensions of its rotated vector (and those of the query vector), compares the approximate distance computed based on the sampled dimensions of the two vectors, and adopts a \emph{sequential hypothesis testing}~\cite{wald1945sequential, Berger2017sequential, network, computervision, satuluri2011bayesian, sequentialLSH} strategy to decide when to stop sampling dimensions
such that the distance comparison results are correct with sufficient reliability.
ADSampling achieves (1) probabilistic error bounds (because of its equivalence to random projection), (2) flexibility of resolution (because it can sample different numbers of dimensions of rotated vectors), (3) adaptivity of setting an appropriate resolution (because of its progressive sampling manner and smart strategy for terminating sampling dimensions), and (4) recoverability of exact distances (because it preserves the distances during the rotation step and can recover the exact distance by sampling all dimensions of rotated vectors).}
\fi

\section{The Distance Comparison Operation}
\label{sec:dco}

\subsection{KNN Query and Distance Comparison Operation}
\label{subsec:definition}

Let $\mathcal O$ be a database of $N$ objects in a $D$-dimensional Euclidean space $\mathbb{R}^D$ and $\mathbf{q}$ be a query. {\CHENG In this paper, we use ``object'' (resp. ``query'') and ``data vector'' (resp. ``query vector'') interchangeably}. 
{\CHENG We note that operations on $\mathcal{O}$ can be conducted before the query $\mathbf{q}$ comes (i.e., the index phase) while those on the query $\mathbf{q}$ can only be conducted after it comes (i.e., the query phase). 
For an object $\mathbf{o}$, we define its difference from the query $\mathbf{q}$ as $\mathbf{x}$, i.e., $\mathbf{x} = \mathbf{o} - \mathbf{q}$. We refer to an object $\mathbf{o}$ by its corresponding vector $\mathbf{x}$ when the context is clear.}
Without ambiguity, by ``the distance of an object $\mathbf{o}$'', we refer to its distance {\CHENGC from} the query vector $\mathbf{q}$, {\CHENG which we denote by $dis$}. 
%
%
The \textbf{K nearest neighbor (KNN)} query is to find the top-K objects with the minimum distance {\CHENGC from} the query $\mathbf{q}$. 


{\CHENG In this paper, we study the \emph{distance comparison operation} (DCO), which is defined as follows.}
\begin{definition}[Distance Comparison Operation]
{\CHENG Given an object $\mathbf{o}$ and a distance threshold $r$, the \textbf{distance comparison operation} (DCO) is to decide whether object $\mathbf{o}$ has its distance $dis$ {\JIANYANGLAST no greater than $r$}
and if so, return $dis$.
In particular, we say object $\mathbf{o}$ is a \emph{positive} object if {\CHENGC $dis\le r$} and a \emph{negative} object otherwise.}
\end{definition}
%

{\CHENG As mentioned in Section~\ref{sec:introduction}, DCOs are heavily involved in many AKNN algorithms. These algorithms conduct the DCO for an object $\mathbf{o}$ and a distance $r$ naturally by computing $\mathbf{o}$'s distance and comparing the distance against $r$. We call this conventional method \texttt{FDScanning} since it uses \emph{all} dimensions of $\mathbf{o}$ {\CHENGC for computing} the distance. Clearly, \texttt{FDScanning} has the time complexity of $O(D)$. 

Next, we review the existing AKNN algorithms and validate both theoretically and empirically the critical role of DCOs in these algorithms.}

\subsection{AKNN Algorithms and Their DCOs}
\label{subsec:aknn}

\begin{figure}[thb]
    \vspace{-4mm}
    \centering
    \begin{subfigure}[b]{0.5\linewidth}
        \centering
        \includegraphics[width=0.8\textwidth]{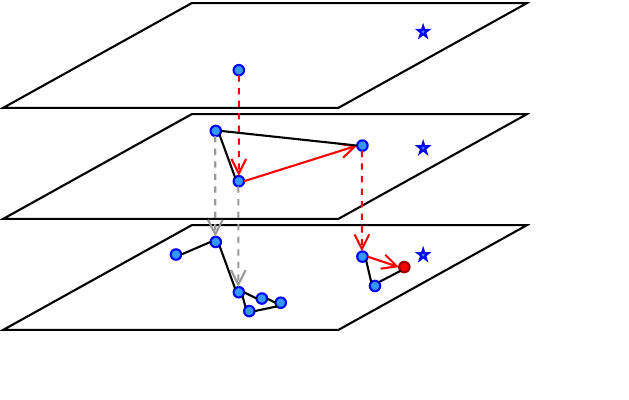}
        \caption{\texttt{HNSW}}
        \label{fig:routing}
    \end{subfigure} 
    \begin{subfigure}[b]{0.45\linewidth}
        \centering
        \includegraphics[width=0.8\textwidth]{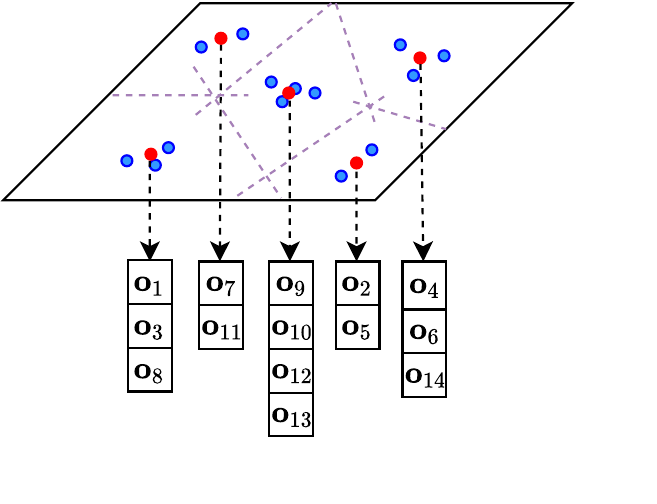}
        \caption{\texttt{IVF}}
	\label{fig:ivf}
    \end{subfigure} 
    \vspace{-4mm}
    \caption{Illustrations of AKNN algorithms.}
    \vspace{-4mm}
    \label{fig:aknn}
\end{figure}

\subsubsection{Graph-Based Methods}
{\JIANYANG Graph-based methods are one family of state-of-the-art AKNN algorithms that exhibit dominant performance on the time-accuracy tradeoff for 
{\JIANYANGREVISION in-memory AKNN query~\cite{malkov2018efficient, NSW, li2019approximate, fu2019fast, fu2021high, SISAP_graph}}.
{\CHENGC These methods construct graphs based on the data vectors, where} a vertex corresponds to a data vector. }
One famous graph-based method is the hierarchical navigable small world graphs (\texttt{HNSW})~\cite{malkov2018efficient}. It's composed of several layers. Layer 0 (base layer) contains all data vectors and layer $i+1$ only keeps a subset of the vectors in layer $i$ randomly. 
The size of each layer decays exponentially as it goes up.
In particular, the top layer contains only one vertex. Within each layer, a vertex is connected to its several approximate nearest neighbors,
{\JIANYANG while between adjacent layers, two vertexes are connected only if they represent the same vector. }
{\CHENG An illustration of the \texttt{HNSW} graph is provided in Figure~\ref{fig:routing}.}

During the query phase, greedy search is first performed on upper layers to find a good entry at layer 0 (the base layer). 
Specifically, the search starts from the only vertex of the top layer. 
{\JIANYANG Within each layer, it does greedy search iteratively. At each iteration, it accesses all the neighbors of its currently located vertex and goes to the one with the minimum distance. It terminates the search when {\CHENG none of the neighbors has a smaller distance than the currently located vertex.}} Then it goes to the next layer and repeats the process until {\CHENG it arrives at} layer 0. 
{\CHENG At layer 0, it conducts \textit{greedy beam search}~\cite{graphbenchmark} (\textit{best first search}), which is adopted by 
most graph-based methods~\cite{malkov2018efficient, li2019approximate, fu2019fast, diskann, NSW}.
}
To be specific, {\CHENG greedy beam search maintains two sets}: a search set $\mathcal S$ (a min-heap by exact distances) and a result set $\mathcal R$ (a max-heap by exact distances). The search set $\mathcal S$ {\CHENG has its size unbounded and} maintains candidates yet to be searched. The result set $\mathcal R$ {\CHENG has its size bounded by $N_{ef}$ and} maintains $N_{ef}$ nearest neighbors {\CHENG visited so far}, where the size $N_{ef}$ is the parameter to control time-accuracy trade-off. 
At the beginning, a start point {\CHENG at layer 0} is inserted into both $\mathcal S$ and $\mathcal R$. Then {\CHENG it proceeds in iterations. At each iteration, it pops 
{\JIANYANG the object with the smallest distance in set $\mathcal{S}$}
and enumerates the neighbors of the object. 
For each neighbor, it \textbf{checks whether its distance from the query object is 
{\JIANYANGLAST no greater than}
the maximum distance in set $\mathcal{R}$ and if so, it computes the distance} (i.e., it conducts a DCO).
In addition, 
{\CHENGB if the distance is smaller than the maximum distance in $\mathcal{R}$,}
it (1) pushes the object {\JIANYANGLAST into} both set $\mathcal{S}$ and set $\mathcal{R}$ (using the computed distance as the key)
and (2) pops {\CHENGC the object with the maximum distance} from set $\mathcal{R}$ whenever $\mathcal{R}$ involves more than $N_{ef}$ objects so that the size of $\mathcal{R}$ is bounded by $N_{ef}$.
It returns $K$ objects in $\mathcal{R}$ with the smallest distances when the minimum distance in $\mathcal S$ becomes larger than the maximum distance in $\mathcal R$ and stops.
%
We note that the greedy search at upper layers corresponds to a greedy beam search process with $N_{ef}=1$.

\begin{figure}[thb]
    \captionsetup[subfigure]{aboveskip=-3pt}
    \centering
    \vspace{-4mm}
    \begin{subfigure}[b]{0.32\linewidth}
        \includegraphics[width=\textwidth]{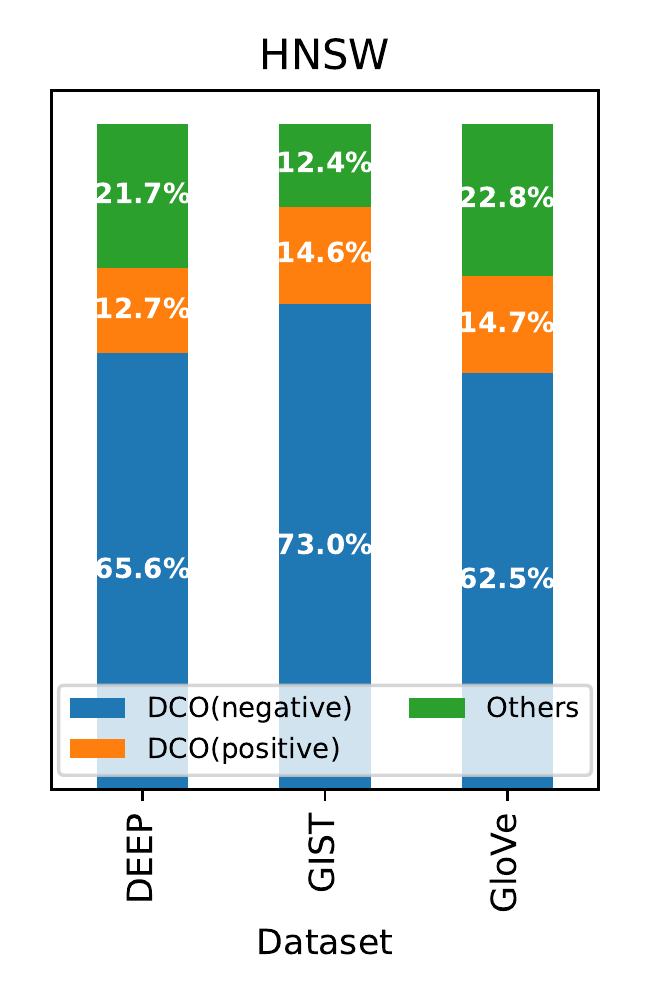}
        \caption{\texttt{HNSW}}
        \label{fig:cost HNSW}
    \end{subfigure} 
    \begin{subfigure}[b]{0.32\linewidth}
        \includegraphics[width=\textwidth]{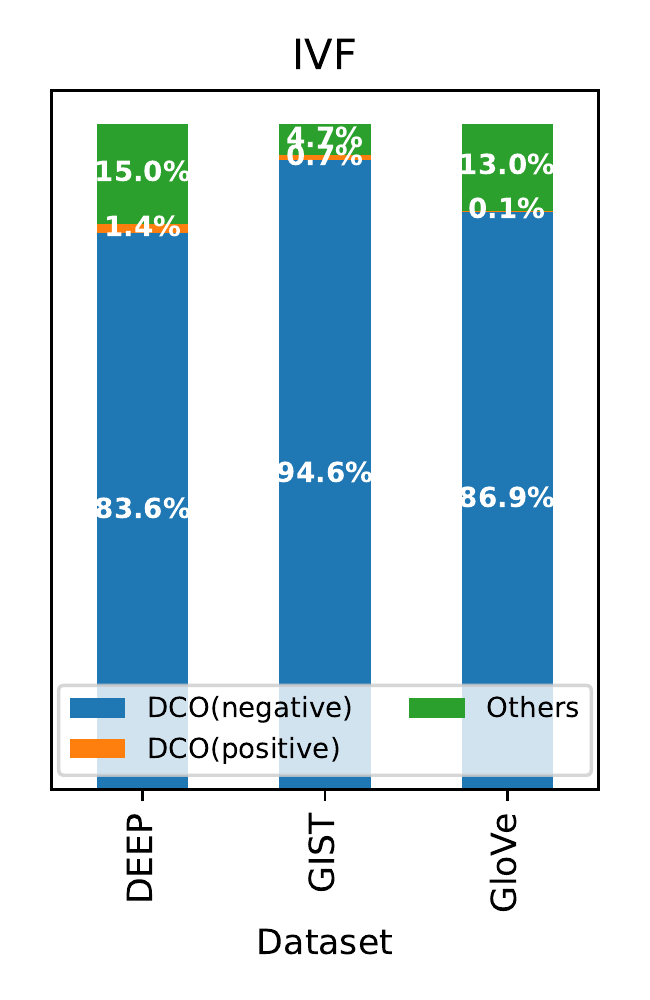}
        \caption{\texttt{IVF}}
	  \label{fig:cost IVF}
    \end{subfigure} 
    \begin{subfigure}[b]{0.32\linewidth}
        \includegraphics[width=\textwidth]{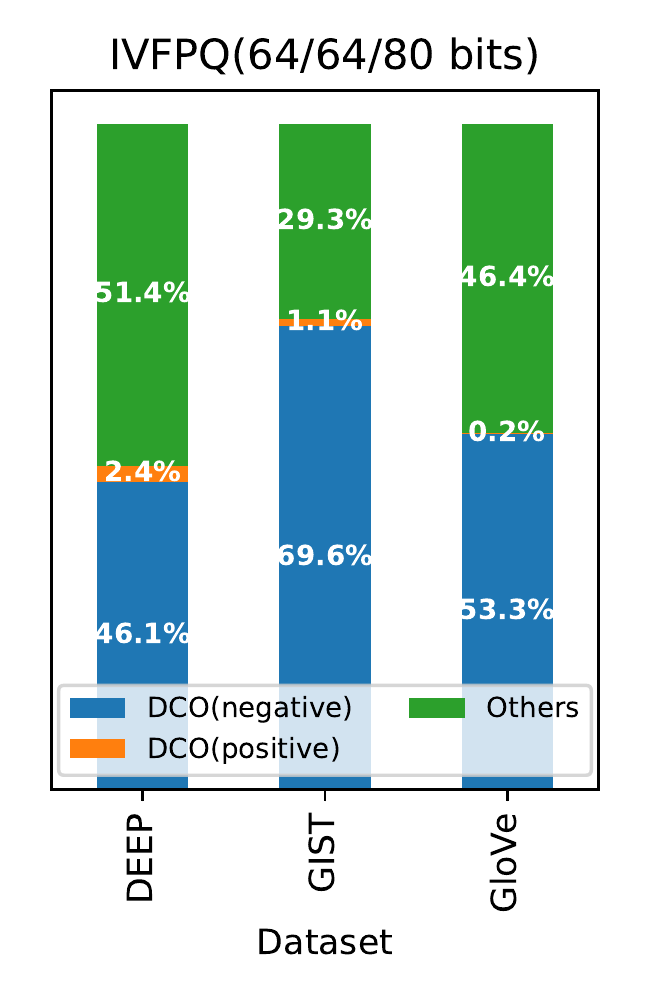}
        \caption{\texttt{IVFPQ}}
	  \label{fig:cost IVFPQ}
    \end{subfigure} 
    \vspace{-4mm}
    \caption{{\JIANYANGREVISION {\CHENG Breakdown of Running Times of} AKNN Algorithms.}}
    \vspace{-4mm}
    \label{fig:cost statistics}
\end{figure}

\smallskip
\noindent\textbf{DCO v.s. Overall Time Costs.} We review the time complexity of \texttt{HNSW} 
assuming {\JIANYANGLAST that} it adopts \texttt{FDScanning} for DCOs. Let $N_s$ be the number of {\JIANYANGLAST the} candidates of KNN objects, which are visited by \texttt{HNSW}.
Then, the total cost of the DCOs is $O(N_s D)$ and that of updating the sets $\mathcal{S} $ and $\mathcal{R} $ is $O(N_s \log N_s)$. Therefore, the time complexity of \texttt{HNSW} is $O(N_s D + N_s \log N_s)$. In practice, the total cost of DCOs should be the dominating part since $D$ can be 
{\JIANYANGREVISION hundreds}
while $\log N_s$ is a few dozens only for a big dataset involving millions of objects.
We verify this empirically as well. 
Figure~\ref{fig:cost HNSW} profiles the time consumption of \texttt{HNSW} on three real-world datasets when targeting 95\% recall with $K = 100$. 
According to the results, on datasets with various dimensions from 256 to 960, DCOs take from 77.2\% to 87.6\% of the total running time of \texttt{HNSW} (as indicated by the blue and orange portions of the bars). 
%

\smallskip
\noindent\textbf{Positive v.s. Negative Objects.} We verify empirically that for \texttt{HNSW}, the number of DCOs on negative objects is significantly larger than that of DCOs on positive objects. 
The results are shown in Figure~\ref{fig:cost HNSW}. We note that in the figure, the ratio between the cost of DCOs (on  negative objects) and that (on positive objects) reflects the ratio between the numbers of negative and positive objects since a DCO on a negative object and that on a positive object have the same cost. 
According to the results, the number of negative objects is 4.3x to 5.2x times more than that of the positive ones.
}




\subsubsection{Inverted File Index}

Inverted file~\cite{jegou2010product} index is another popular index method for AKNN query. {\CHENG According to~\cite{adaptive2020ml}, \texttt{IVF} is one of the state-of-the-art approaches for AKNN. Indeed, according to our experimental results in Section~\ref{section:time-accuracy}, it outperforms \texttt{HNSW} on some datasets. 
} 
During the index phase, the algorithm clusters data vectors with the K-means algorithm, builds a bucket for each cluster and assigns each data vector to its corresponding bucket. Then during the query phase, for a given query, the algorithm first selects the $N_{probe}$ nearest clusters based on their centroids, retrieves all vectors in these corresponding buckets as candidates, and then 
finds out KNNs {\CHENG among the retrieved vectors.}
Here,
$N_{probe}$ is a user parameter which controls 
the time-accuracy trade-off. 
{\CHENG When {\JIANYANGLAST finding out} KNNs, a commonly used method is to maintain a KNN set $\mathcal K$ with a max-heap of size $K$. It then scans all candidates, and for each one, it \textbf{checks whether its distance is 
{\JIANYANGLAST no greater than}
the maximum of $\mathcal{K}$ and if so, it computes the distance} (i.e., it conducts a DCO). Here, the maximum distance is defined to be $+\infty$ if $\mathcal{K}$ is not full.
{\CHENGB If the distance is smaller than the maximum distance in $\mathcal{K}$,}
it 
{\JIANYANG updates $\mathcal{K}$ with the candidate}
(by using the computed distance as the key). It returns the objects in $\mathcal{K}$ at the end. An illustration of the \texttt{IVF} structure is provided in Figure~\ref{fig:ivf}.

\smallskip
\noindent\textbf{DCO v.s. Overall Time Costs.} We review the time complexity of \texttt{IVF}
assuming {\JIANYANGLAST that} it adopts \texttt{FDScanning} for DCOs.
Let $N_s$ be the number of candidate objects. The total cost of \texttt{IVF} is $O(N_s D + N_s\log K)$, where the first term is the cost of the DCOs and the second term is that of updating $\mathcal K$. As can be noticed, the cost of DCOs is the dominating part. }
We verify this empirically as we did for \texttt{HNSW}. 
Figure~\ref{fig:cost IVF} shows the results.
According to the results, on datasets with various dimensions from 256 to 960, DCOs take from 85.0\% to 95.3\% of the total running time of \texttt{IVF}. 
%

\smallskip
\noindent\textbf{Positive v.s. Negative Objects.} We verify empirically that for \texttt{IVF}, the number of DCOs on negative objects is significantly larger than that of DCOs on positive objects. 
The results are shown in Figure~\ref{fig:cost IVF}. 
According to the results, the number of negative objects is 60x to 869x more than that of the positive ones.

\subsubsection{Other AKNN Algorithms}

{\JIANYANG
In other AKNN algorithms {\CHENGB including tree-based, hashing-based, and quantization-based methods}, DCOs are also ubiquitous. 
{\JIANYANGREVISION Tree-based methods~\cite{muja2014scalable, dasgupta2008random, ram2019revisiting, beygelzimer2006cover, reviewer_M_tree}} generate candidate vectors through tree routing and {\JIANYANGLAST find out} KNNs with DCOs {\JIANYANGB (similarly as \texttt{IVF} does).}
Hashing-based methods~\cite{indyk1998approximate, datar2004locality, c2lsh, tao2010efficient, huang2015query, sun2014srs, lu2020vhp, zheng2020pm} generate candidate vectors via hashing {\CHENGC codes} and {\JIANYANGLAST find out} KNNs with DCOs (similarly as \texttt{IVF} does). 
Quantization-based methods~\cite{jegou2010product, learningtohash,ge2013optimized, ITQ, imi, additivePQ} generate candidates with short quantization codes, and conduct re-ranking (for {\JIANYANGLAST finding out} KNNs) with DCOs (similarly as \texttt{IVF} does).
%
{\CHENG For tree-based and hashing-based methods,}
the cost of DCOs is dominant because (1) one time tree routing or hashing bucket probing generates multiple candidates ({\CHENG which entail} multiple DCOs) and (2) tree routing and hashing bucket probing are much faster than a DCO (which has the time complexity of $O(D)$). 
{\CHENG For product quantization-based methods~\cite{jegou2010product, ge2013optimized, ITQ, imi}, DCOs are involved in its re-ranking stage, which is less dominant because the main cost lies in evaluating quantization codes.} {\JIANYANGREVISION {\chengr For comparison, we show the time decomposition results of \texttt{IVFPQ}, which is a quantization-based method~\cite{jegou2010product},}
in Figure~\ref{fig:cost IVFPQ} under the typical setting of \cite{learningtohash, jegou2010product}.}
}



\section{The \texttt{ADSampling} Method}
\label{sec:adsampling}

{\CHENGB Recall that our goal is to achieve \emph{reliable} DCOs with better efficiency than \texttt{FDScanning}. To this end,}
we develop a new method called \texttt{ADSampling}. 
At its core, \texttt{ADSampling} projects the objects to vectors with \emph{fewer} dimensions and conduct DCOs based on the projected vectors for better efficiency.
Different from the conventional and widely-adopted random projection technique~\cite{johnson1984extensions, datar2004locality, sun2014srs, c2lsh}, which projects \emph{all} objects to vectors with \emph{equal} dimensions, \texttt{ADSampling} is novel in the following aspects. 
First, it projects \emph{different} objects to vectors with \emph{different} numbers of dimensions during the query phase \emph{flexibly}.
We will elaborate on details of how this idea is implemented in Section~\ref{subsec:idea-1}. 
Second, it decides the number of dimensions to be sampled for each object \emph{adaptively} based on 
{\CHENGC the DCO on the object}
during the query phase, but not pre-sets it to a certain number during the index phase (which is knowledge demanding and difficult to set in practice).
We will elaborate on details of how this idea is implemented in Section~\ref{subsec:idea2}. 
In addition, we summarize \texttt{ADSampling} and 
{\JIANYANG prove}
that it has its time \emph{logarithmic} wrt $D$ for negative objects (which is significantly better than the time complexity $O(D)$ of \texttt{FDScanning}) in Section~\ref{subsection:theoretical analysis of ADSampling}.
\subsection{\textbf{Dimension Sampling over {\JIANYANG Randomly} Transformed Vectors}}
\label{subsec:idea-1}

{\CHENG 
{\CHENGB For better efficiency of a DCO, a natural idea is} to conduct a random projection~\cite{johnson1984extensions, vershynin_2018} on an object (i.e., to multiply the object (specifically its vector) with a $\mathbb{R}^{d\times D} $ random matrix $P$ where $d < D$~\footnote{{\JIANYANG There are multiple types of random matrices used for random projection~\cite{blockjlt, fftjlt, datar2004locality, johnson1984extensions}. 
In the present work, by random projection, we refer to the random projection based on random orthogonal matrix, which can be generated through orthonormalizing a random Gaussian matrix, {\JIANYANGB whose entries are independent standard Gaussian random variables}~\cite{choromanski2017unreasonable, randomortho, johnson1984extensions, vershynin_2018}.}}),
{\CHENGB and then conduct the DCO using the approximate distance that can be computed based on the projected vector, namely $\sqrt {D/d } \| P \mathbf{x}\|$.}
It is well-known that there exists a concentration inequality on the approximate distance as presented in the following lemma~\cite{vershynin_2018}.
\begin{lemma}
    For a given object $\mathbf{x} \in \mathbb{R}^D  $, a random projection $P \in \mathbb{R}^{d\times D} $ preserves its Euclidean norm with $\epsilon $ multiplicative error bound with the probability of
\begin{align}
    \mathbb{P} \left\{ \left| \sqrt {\frac{D}{d} } \| P \mathbf{x}\| -\| \mathbf{x} \| \right| \le \epsilon \| \mathbf{x} \|  \right\}   \ge 1 - 2e^{-c_0 d \epsilon ^2} 
    \label{equ:concentration}
\end{align}
    where $c_0$ is a constant factor and {\JIANYANGREVISION $\epsilon \in (0, +\infty)$}. \label{eq:fail_concen}\label{eq:concen}
    \label{lemma:concen}
\end{lemma}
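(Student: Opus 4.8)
The plan is to prove this as a Johnson--Lindenstrauss-type norm-preservation bound, exploiting the stated construction of $P$: its rows are the first $d$ rows of a Haar-random orthogonal matrix, so geometrically $P$ projects $\mathbf{x}$ onto a uniformly random $d$-dimensional subspace and reports the result in an orthonormal basis. First I would reduce to the unit-norm case. Since both $\sqrt{D/d}\,\|P\mathbf{x}\|$ and the tolerance $\epsilon\|\mathbf{x}\|$ scale linearly with $\|\mathbf{x}\|$, dividing through by $\|\mathbf{x}\|$ shows that the event and its probability depend only on the direction $\mathbf{x}/\|\mathbf{x}\|$; hence I may assume $\|\mathbf{x}\| = 1$.

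The key step is a distributional reduction. By invariance of the Haar measure, for a fixed unit vector $\mathbf{x}$ the image $P\mathbf{x}$ has the same law as the first $d$ coordinates of a uniformly random point on the sphere $S^{D-1}$. Representing such a point as $\mathbf{g}/\|\mathbf{g}\|$ with $\mathbf{g} = (g_1,\dots,g_D)$ a vector of i.i.d.\ standard Gaussians gives
\[
\frac{D}{d}\,\|P\mathbf{x}\|^2 \;\overset{d}{=}\; \frac{\tfrac{1}{d}\sum_{i=1}^{d} g_i^2}{\tfrac{1}{D}\sum_{i=1}^{D} g_i^2} \;=:\; \frac{N}{M}.
\]
Then, for $\epsilon \le 1$, the target event $\{\,|\sqrt{D/d}\,\|P\mathbf{x}\| - 1| \le \epsilon\,\}$ is exactly $\{\,(1-\epsilon)^2 \le N/M \le (1+\epsilon)^2\,\}$, while for $\epsilon > 1$ the lower constraint is vacuous (the norm is nonnegative) and only the upper bound survives. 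I would control the complement by a union bound: if $N$ and $M$ each lie in an interval $[1-\delta,1+\delta]$ with $\delta$ of order $\epsilon$, then $N/M$ lies in the required window, so it suffices to bound the probabilities that the $\chi^2_d$ variable $N$ and the $\chi^2_D$ variable $M$ deviate from $1$. Applying the Laurent--Massart chi-squared tail bounds, the $d$-dimensional numerator contributes a failure probability of order $e^{-c d \epsilon^2}$, whereas the $D$-dimensional denominator contributes $e^{-c D \epsilon^2}$, which is smaller since $D \ge d$. The numerator therefore dominates, and after collecting the (at most four) tail events into a constant factor, the bound takes the form $2e^{-c_0 d \epsilon^2}$ for a suitable universal $c_0$.

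The main obstacle is pinning down a single constant $c_0$ that makes the estimate valid across the entire range $\epsilon \in (0,+\infty)$ claimed in the statement, because the chi-squared tail changes regime. For small $\epsilon$ one is in the sub-Gaussian regime where $t - \ln(1+t) \approx t^2/2$, which yields the $e^{-c d \epsilon^2}$ behavior directly. For large $\epsilon$ the chi-squared upper tail is only sub-exponential, yet the relevant threshold is $(1+\epsilon)^2 = 1 + (2\epsilon + \epsilon^2)$, so the effective deviation $s = 2\epsilon + \epsilon^2$ is itself of order $\epsilon^2$; since $s - \ln(1+s) = \Theta(\epsilon^2)$, the exponent stays of order $d\epsilon^2$ throughout, which is precisely why the squared form of the tolerance makes the clean $\epsilon^2$ rate survive at all scales. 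Verifying this crossover carefully, together with justifying the Haar-invariance identity for $P\mathbf{x}$, is the crux; the remaining interval bookkeeping is routine. As a cleaner alternative I would note that the same conclusion follows from L\'evy's concentration of measure on the sphere: the map $\mathbf{u} \mapsto (\sum_{i \le d} u_i^2)^{1/2}$ is $1$-Lipschitz on $S^{D-1}$, hence concentrates around its mean (which is close to $\sqrt{d/D}$) with Gaussian tails $2e^{-cDt^2}$, and taking $t = \epsilon\sqrt{d/D}$ reproduces $2e^{-c_0 d \epsilon^2}$ in one stroke, at the cost of separately bounding the small gap between this mean and $\sqrt{d/D}$.
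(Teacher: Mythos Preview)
Your proof sketch is correct and follows a standard route to Johnson--Lindenstrauss-type norm preservation: reduce to unit vectors, use Haar invariance to identify $\|P\mathbf{x}\|^2$ with the squared norm of the first $d$ coordinates of a uniform point on $S^{D-1}$, represent this via a ratio of chi-squared variables, and apply chi-squared tail bounds (or, as you note, L\'evy's spherical concentration). Your handling of the large-$\epsilon$ regime --- observing that the effective deviation for the squared quantity is of order $\epsilon^2$, so the sub-exponential chi-squared tail still yields an exponent $\Theta(d\epsilon^2)$ --- is the right way to make the bound uniform over $\epsilon\in(0,\infty)$.

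However, there is nothing to compare against: the paper does not prove this lemma. It is stated as a known concentration inequality and attributed to the reference \cite{vershynin_2018} (Vershynin's \emph{High-Dimensional Probability}), with the surrounding text explicitly calling it ``well-known.'' So your proposal supplies a self-contained argument where the paper simply cites the literature; both are legitimate, and your sketch is essentially how such results are established in that reference.
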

Nevertheless, once an object is projected, the corresponding approximate distance would have a certain resolution {\CHENGC that would be fixed}.
Therefore, it lacks of flexibility of achieving different reduced dimensionalities for different objects (correspondingly different resolutions of approximate distances) during the query phase.

{\CHENGB We aim to project \emph{different} objects to vectors with \emph{different} numbers of dimensions during the query phase \emph{flexibly}.}
To this end, we propose to \emph{randomly transform} an object (with \emph{random orthogonal transformation}~\cite{johnson1984extensions, vershynin_2018, ITQ}, geometrically, to randomly rotate it) and then flexibly sample dimensions of the transformed vector for computing an approximate distance.
Formally, given an object $\mathbf{x}$, we first apply a \emph{random orthogonal matrix} $P' \in \mathbb{R}^{D \times D} $
to $\mathbf{x}$ and then sample $d$ rows on it (for simplicity, the first $d$ rows).
The result is denoted by $(P' \mathbf{x})|_{[1,2,...,d]}$. 
%
%
{\CHENGB This method entails two benefits.} 
First, we achieve the flexibility since we can sample $d$ dimensions of a rotated vector for different $d$'s during the query phase. Second, we achieve a guaranteed error bound since sampling $d$ dimensions on a transformed vector is equivalent to obtaining a $d$-dimensional vector via random projection, which we explain as follows. 

Recall that a random projection on $\mathbf{x}$ is to apply a random projection matrix $P \in \mathbb{R}^{d\times D}$ to $\mathbf{x}$, and the result is denoted by $P \mathbf{x} $.}
We claim that $(P' \mathbf{x})|_{[1,2,...,d]}$ (the result of our proposed method) and $P \mathbf{x}$ (the result of a random projection) are identically distributed. This is based on an elementary property of matrix multiplication 
that row samplings before and after a matrix multiplication are identical:
\begin{align}
    (P'\mathbf{x} )|_{[1,2,...,d]} = P'|_{[1,2,...,d]} \mathbf{x} 
\end{align}
%
We note that {\CHENGB $P'|_{[1,2,...,d]}$ corresponds to a random matrix $P$ for random projection since} one conventional way to generate a random projection matrix $P$ is to sample rows of a $D\times D$ random orthogonal matrix~\cite{choromanski2017unreasonable}.
%
%
Therefore, the concentration inequality for random projection over raw objects (as given in Equation~(\ref{equ:concentration})) can be applied to dimension sampling over {\JIANYANG randomly} transformed vectors, which provides solid foundation for our following discussion. 
%

{\CHENGB We denote the transformed vector as $\mathbf{y} := P' \mathbf{x}$.}
{\CHENG Based on the sampled dimensions, we can compute an approximate distance of $\mathbf{x}$, denoted by $dis'$, as follows,
\begin{align}
    dis' := \sqrt { \frac{D}{d}} \left\| \mathbf{y}|_{[1,2,...,d]} \right\|  
\end{align}
where $d$ is the number of sampled dimensions. We note that the time complexity of computing an approximate distance based on $d$ sampled dimensions is $O(d)$.
Furthermore, when all $D$ dimensions are sampled, the distance $dis'$ computed based on the sampled dimensions would be equal to the true distance $dis$, which is due to the fact
{\JIANYANG that random orthogonal transformation preserves the norm of any vector (since it simply rotates the space without distorting the distances). 
}
}
\subsection{{\JIANYANGREVISION \textbf{Incremental Sampling with Hypothesis Testing}}}
\label{subsec:idea-2}
\label{subsec:idea2}

{\CHENG 
One remaining issue is how to determine the number of dimensions of $\mathbf{y}$ we need to sample in order to make a sufficiently confident conclusion for the DCO (i.e., to decide whether $dis \le r$). Intuitively, with more sampled dimensions, the approximate distance $dis'$ would be 
{\JIANYANGB more accurate, }
and we would be able to make a more confident conclusion. On the other hand, sampling more dimensions would result in higher cost of computing the approximate distance (since the cost is linear wrt the number of sampled dimensions). We aim to sample the minimum possible number of dimensions, which are sufficient to make a confident conclusion. 

Specifically, we propose to sample the dimensions of $\mathbf{y}$ in an 
{\JIANYANGREVISION \emph{incremental}}
manner, i.e., we start with a few dimensions. If with the current sampled dimensions, we cannot make a confident conclusion, we continue to sample 
{\JIANYANGLAST some more}
until we can make a confident conclusion or we have sampled all dimensions. 
%
As a result, the problem reduces to the one of deciding whether we can make a sufficiently confident conclusion with a certain, say $d$, sampled dimensions? In a statistics language, the observed distance $dis'$ (computed based on the sampled dimensions) is an estimator of the true distance $dis$ and its distribution depends only on the true value $dis$ and the number of sampled dimensions $d$. The task is to draw a conclusion about a true value $dis$ (i.e., whether $dis \le r$) with an observed value $dis'$. It's exactly what \emph{hypothesis testing} typically does. Motivated by this, we propose to leverage hypothesis testing to solve the problem.
Specifically, we conduct the hypothesis testing as follows.

\begin{enumerate}
    \item We define a null hypothesis $H_0:dis \le r$ and its alternative $H_1: dis > r$.
    \item We use $dis'$ as the estimator of $dis$. The relationship between $dis'$ and $dis$ is provided in Lemma~\ref{eq:concen} (i.e., the difference between $dis'$ and $dis$ is bounded by $\epsilon\cdot dis$ with the failure probability at most $2\exp (-c_0 d  \epsilon^2)$).
    \item We set the significance level $p$ to be $2 \exp(-c_0  \epsilon_0^2)$, where $\epsilon_0$ is a parameter to be tuned empirically. 
    {\CHENGB With this,} the event that the observed $dis'$ is much larger than $r$ (i.e., $dis' > (1 + \epsilon_0/\sqrt{d})\cdot r$) has its probability below the significance level $p$ (which can be verified based on Lemma~\ref{eq:concen} with $\epsilon = \epsilon_0/\sqrt{d}$ and $H_0:dis \le r$).
    \item We check whether 
    {\CHENGB the event happens} ($dis' > (1 + \epsilon_0/\sqrt{d})\cdot r$). If so, {\CHENGB we can reject $H_0$ and conclude} $H_1: dis > r$ {\CHENGB with sufficient confidence}; otherwise, we cannot.
\end{enumerate}

There are three cases for the outcome of the hypothesis testing. \underline{Case 1}: we reject the hypothesis (i.e., we conclude $dis > r$) and $d < D$. In this case, the time cost (which is mainly for evaluating the approximate distance) is $O(d)$, which is smaller than that of computing the true distance in $O(D)$ time. \underline{Case 2}: we cannot reject the hypothesis and $d < D$. In this case, we would continue to sample some more dimensions of $\mathbf{y}$ \emph{incrementally} and conduct another hypothesis testing. \underline{Case 3}: $d = D$. In this case, we have sampled all dimensions of $\mathbf{y}$ and the approximate distance based on the sampled vector is equal to the true distance. Therefore, we can conduct an \emph{exact} DCO. We note that the 
{\JIANYANGREVISION incremental} dimension sampling process with (potentially sequential) hypothesis testing would have its time cost strictly smaller than $O(D)$ (when it terminates in Case 1) and equal to $O(D)$ (when it terminates in Case 3). 
}

{\CHENGB We note that hypothesis testing has also been used for deciding a certain number of hashes for LSH in the context of similarity search~\cite{sequentialLSH, satuluri2011bayesian}. The differences between our technique and \cite{sequentialLSH, satuluri2011bayesian} include: (1) ours is based on a random process of sampling dimensions of a transformed vector while \cite{sequentialLSH, satuluri2011bayesian} are on one of sampling hash functions, which entail significantly different hypothesis testings and (2) ours targets the Euclidean distance function while \cite{sequentialLSH, satuluri2011bayesian} target similarity functions such as Jaccard and Cosine similarity measures (it remains non-trivial to adapt the latter to the Euclidean space),}
{\JIANYANGB and (3) ours guarantees to be no worse than the method of evaluating exact distances (in our case, i.e., \texttt{FDScanning}) because it obtains exact distances when it has sampled all the dimensions while \cite{sequentialLSH, satuluri2011bayesian} have no such guarantee (when they have sampled all the hash functions and still cannot produce a firmed result, they would have to re-evaluate exact similarities from scratch).}

\subsection{Summary and Theoretical Analysis}
\label{subsection:theoretical analysis of ADSampling}

{\CHENG \noindent\textbf{Summary.} 
We summarize the process of \texttt{ADSampling} in Algorithm~\ref{code:adasampling}.
{\JIANYANGREVISION It takes a transformed data vector $\mathbf{o}'$, a transformed query vector $\mathbf {q}'$ and a distance threshold $r$ as inputs and outputs the result of the DCO of whether $dis \le r$: 1 for yes {\CHENGC (in this case, it returns $dis$ as well)} and 0 for no. 
We note that the transformation of the data vectors is conducted in the index phase and its cost can be amortized by all the subsequent queries on the same database. The transformation of the query vector is conducted in the query phase when a query comes and its cost can be amortized by all the DCOs involved for answering the same query.} 
Specifically, the algorithm maintains the number of sampled dimensions with a variable $d$ with $d = 0$ initially (line 1). 
It then performs an iterative process if $d < D$ (line 2). 
At each iteration, it samples some more dimensions incrementally and updates $d$ {\JIANYANGB and the approximate distance $dis'$} accordingly (line 3-4)
and conducts a hypothesis testing with the null hypothesis as $dis \le r$ based on the approximate distance $dis'$ (line 5). 
It then returns the result in three cases as explained in Section~\ref{subsec:idea-2} (line 6 - 11). 

\begin{algorithm}[tbh]
\DontPrintSemicolon
\SetKwFunction{LevelUp}{LevelUp}
\SetKwData{and}{and}
\SetKwInOut{Input}{Input}\SetKwInOut{Output}{Output}
\Input{{\JIANYANGREVISION A transformed data vector $\mathbf{o}'$, a transformed query vector $\mathbf q'$ and a distance threshold $r$}}
\Output{The result of DCO (i.e., whether $dis \le r$): 1 means yes and 0 means no; {\CHENG In case of the result of 1, an {\CHENGB exact} distance is also returned}}
\BlankLine
{\CHENGB Initialize the number of sampled dimensions $d$ to be 0\;}
\While{$d < D$}{
        {\JIANYANGREVISION Sample some more dimensions $y_i$ {\chengr incrementally with} $y_i=\mathbf{o}_i'-\mathbf{q}_i'$\;}
        Update $d$ and the approximate distance $dis'$ accordingly\;
	Conduct a hypothesis testing with the null hypothesis {\JIANYANGB $H_0$} as $dis \le r$ based on the approximate distance $dis'$\;
	\If(\tcp*[f]{Case 1}){{\JIANYANGB $H_0$} is rejected and $d < D$}{
			\textbf{return} 0  \;
		}
	\ElseIf(\tcp*[f]{Case 2}){{\JIANYANGB $H_0$} is not rejected and $d < D$}{
		\textbf{continue}\;
		}
	\Else(\tcp*[f]{Case 3}){
	    \textbf{return} 1 {\CHENG (and $dis'$)} if $dis' \le r$ and 0 otherwise\;
	}
}
\caption{\texttt{ADSampling}}
\label{code:adasampling}
\end{algorithm}

\vspace{-4mm}
\smallskip\noindent\textbf{Failure Probability Analysis.}
Note that \texttt{ADSampling} terminates in either Case 1 (with the hypothesis being rejected and $d < D$) or Case 3 (with $d = D$). When it terminates in Case 3, there would be no failure since in this case, the approximate distance $dis'$ is equal to the true distance $dis$ and the DCO result is exact. When it terminates in Case 1, a failure would happen if $dis \le r$ holds since in this case, it concludes that $dis > r$ (by rejecting the null hypothesis). We analyze the probability of the failure. {\JIANYANG As discussed in Section~\ref{subsec:idea-2}, we can control the failure probability with $\epsilon_0$. The following lemma presents the relationship between $\epsilon_0$ and the failure probability of a DCO with \texttt{ADSampling}. } 
}

\begin{lemma}
For a DCO in $D$-dimensional space, the failure probability of \texttt{ADSampling} is given by
\begin{align}
    &\mathbb{P}\left\{ {\CHENG failure}  \right\}  =0 {\CHENG \text{~~if $dis > r$}}
    \\&
    \mathbb{P}\left\{ {\CHENG failure} \right\}  \le \exp \left( -c_0 \epsilon_0^2 + \log D \right) 
    {\CHENG \text{~~if $dis \le r$}}
\end{align}
\label{theorem:ADSampling accuracy}
\label{lemma:ADSampling accuracy}
\end{lemma}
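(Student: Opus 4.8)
The plan is to first pin down what constitutes a \emph{failure} and to observe that \texttt{ADSampling} can only terminate in Case~1 (it rejects $H_0$ with $d < D$) or Case~3 (it has sampled all $D$ dimensions). In Case~3 the approximate distance $dis'$ equals the exact distance $dis$, so the returned answer is always correct; hence no failure can originate there. This immediately disposes of the regime $dis > r$: if the object is negative, rejecting $H_0$ yields the correct output $0$, and reaching $d = D$ yields the exact (correct) output, so \emph{every} terminating branch is correct and $\mathbb{P}\{\text{failure}\} = 0$. The only way to fail is therefore to be in Case~1 while $dis \le r$, i.e.\ to reject the (true) null hypothesis at some sampling step $d < D$.

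Next I would bound the probability of an erroneous rejection at a single fixed step $d$. The rejection rule is $dis' > (1 + \epsilon_0/\sqrt{d})\, r$; since $dis \le r$ implies $(1+\epsilon_0/\sqrt{d})\,dis \le (1+\epsilon_0/\sqrt{d})\,r$, the rejection event is contained in $\{dis' > (1+\epsilon_0/\sqrt{d})\,dis\}$. Applying the (one-sided, upper-tail) form of the concentration inequality of Lemma~\ref{lemma:concen} with $\epsilon = \epsilon_0/\sqrt{d}$, the exponent becomes $c_0\, d\, \epsilon^2 = c_0\, d\,(\epsilon_0^2/d) = c_0\,\epsilon_0^2$, which is \emph{independent} of $d$. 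Thus each step rejects erroneously with probability at most $e^{-c_0 \epsilon_0^2}$. This cancellation of $d$ --- engineered by scaling the per-step tolerance as $\epsilon_0/\sqrt{d}$ --- is the crux of the argument.

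Finally I would aggregate over steps by a union bound. Since the incremental sampling visits at most $D-1$ distinct values of $d$ with $d < D$, the probability that \emph{any} of these steps triggers an erroneous rejection is at most $(D-1)\,e^{-c_0 \epsilon_0^2} \le D\, e^{-c_0\epsilon_0^2} = \exp(-c_0\epsilon_0^2 + \log D)$, giving the claimed bound. The main obstacle to watch is that the hypothesis tests across successive steps are statistically \emph{dependent} (they are computed from nested sampled prefixes of the same transformed vector $\mathbf{y}$), so one cannot multiply per-step success probabilities; the union bound is the right tool precisely because it needs no independence, and the $d$-uniform per-step bound from the previous step keeps the sum clean. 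A secondary point to verify carefully is the factor-of-two bookkeeping between the two-sided significance level $2\exp(-c_0\epsilon_0^2)$ used to set the test and the one-sided tail actually governing the rejection event, which is what lets the final exponent be exactly $-c_0\epsilon_0^2 + \log D$ rather than $-c_0\epsilon_0^2 + \log(2D)$.
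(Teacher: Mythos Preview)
Your proposal is correct and follows essentially the same argument as the paper: you characterize failure as an erroneous rejection at some $d<D$, use $dis\le r$ to pass from the threshold $r$ to $dis$, apply Lemma~\ref{lemma:concen} with $\epsilon=\epsilon_0/\sqrt{d}$ so that the per-step bound $e^{-c_0\epsilon_0^2}$ is uniform in $d$, and then union-bound over the at most $D-1$ steps. Your additional remarks on the dependence between steps and the one-sided versus two-sided tail are accurate and go slightly beyond what the paper spells out, but the core proof is identical.
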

\begin{proof}
{\CHENG The correctness for the case of $dis > r$ is obvious and that of the other case ($dis \le r$) can be verified as follows.}
\begin{align}
    &\mathbb{P}\left\{ {\CHENG failure} \right\} = \mathbb{P} \left\{ \exists d < D, dis' > (1 + \epsilon_0 / \sqrt {d} ) \cdot r \right\} \label{eq:rejection}
    \\\le &\sum_{d=1}^{D-1} \mathbb{P} \left\{ dis' > (1 + \epsilon_0 / \sqrt {d} ) \cdot dis \right\}  \label{eq:union bound}
    \\\le &\sum_{d=1}^{D-1} \exp \left( -c_0 \epsilon_0^{2}  \right)\le \exp \left( -c_0 \epsilon_0^2 + \log D \right) \label{eq:concentrationlemma3.1}
\end{align}
{\JIANYANG 
where (\ref{eq:rejection}) is because a failure happens if and only if we accidentally reject the hypothesis for some $d<D$; 
(\ref{eq:union bound}) applies union bound and the fact that $dis \le r$; and (\ref{eq:concentrationlemma3.1}) is due to Lemma~\ref{eq:concen}.
}
\end{proof}




{\CHENG

\smallskip\noindent\textbf{Time Complexity Analysis.}
Let $\hat{D}$ be the number of sampled dimensions by \texttt{ADSampling}. Clearly, the time complexity \texttt{ADSampling} is $O(\hat{D})$. Given the stochastic nature of the method, $\hat{D}$ is a random variable. Next, we analyze the expectation of $\hat{D}$, denoted by $\mathbb{E}[\hat{D}]$. 
First of all, since $\hat{D}$ is always at most $D$, we know $\mathbb{E}[\hat{D}] \le D$. Furthermore, for the DCO on a negative object with $dis > r$, we can derive that $\mathbb{E}[\hat{D}]$ relies on $\epsilon_0$ and $\alpha = (dis-r)/r$ {\JIANYANG (which we call the \emph{distance gap} between $dis$ and $r$)}, as presented below (detailed proof can be found in 
{\JIANYANGREVISION Appendix~\ref{section:theory}).}
%
\begin{lemma}
When \texttt{ADSampling} is used for the DCO on an object and a threshold $r$ with $dis > r$, letting $\alpha =(dis-r)/r$, we have
\begin{align}
    \mathbb{E} \left[ \hat D  \right]  = O \left[ \min \left( D, \alpha _{}^{-2} \cdot \epsilon _{0}^{2}  \right)  \right] 
\end{align}
\label{theorem:ADSampling efficiency}
\end{lemma}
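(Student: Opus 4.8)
The plan is to bound the expected number of sampled dimensions $\mathbb{E}[\hat D]$ by combining the trivial bound $\hat D \le D$ with a tail-sum argument that exploits the concentration inequality of Lemma~\ref{lemma:concen}. Throughout I would assume, without loss of generality since batching only affects constants, that \texttt{ADSampling} samples one dimension per iteration, and write $dis'_d := \sqrt{D/d}\,\|\mathbf{y}|_{[1,\ldots,d]}\|$ for the approximate distance after $d$ sampled dimensions, together with $dis = (1+\alpha)r$. Recall that for a negative object the test fires (and the algorithm correctly stops) at step $d$ exactly when $dis'_d > (1+\epsilon_0/\sqrt{d})\,r$.

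First I would express the expectation as a tail sum, $\mathbb{E}[\hat D] = \sum_{d=1}^{D}\mathbb{P}(\hat D \ge d)$, and observe that the event $\{\hat D \ge d\}$ forces the test to have failed to fire at step $d-1$; hence $\mathbb{P}(\hat D \ge d) \le \mathbb{P}\{dis'_{d-1} \le (1+\epsilon_0/\sqrt{d-1})\,r\}$. The crucial point here is that this single-step bound requires no independence across iterations, so the nested and hence correlated sampling across dimensions causes no difficulty.

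Next I would split the sum at the threshold $d^\star := 4\epsilon_0^2/\alpha^2$, which is precisely the point where $\epsilon_0/\sqrt{d} = \alpha/2$, i.e.\ where the rejection threshold $(1+\epsilon_0/\sqrt{d})\,r$ drops a constant fraction of $dis$ below $dis$. For the head $d \le d^\star$ I would use the trivial bound $\mathbb{P}(\hat D \ge d) \le 1$, which contributes $O(d^\star) = O(\alpha^{-2}\epsilon_0^2)$. For the tail $d > d^\star$ the gap $dis - (1+\epsilon_0/\sqrt{d})\,r = (\alpha - \epsilon_0/\sqrt{d})\,r \ge (\alpha/2)\,r$ is a fixed fraction $\delta^\star := (\alpha/2)/(1+\alpha)$ of $dis$, so the failure event is contained in the lower-tail event $\{dis'_{d-1} \le (1-\delta^\star)\,dis\}$. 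Applying Lemma~\ref{lemma:concen} with $\epsilon = \delta^\star$ gives $\mathbb{P}(\hat D \ge d) \le 2\exp(-c_0 (d-1)(\delta^\star)^2)$, and summing this geometric series yields $O(1/(\delta^\star)^2) = O(1/\alpha^2)$. Since $\epsilon_0 = \Omega(1)$ (it is chosen so that the significance level $2\exp(-c_0\epsilon_0^2)$ is small), this tail term is absorbed into $O(\alpha^{-2}\epsilon_0^2)$; combining with $\hat D \le D$ then gives $\mathbb{E}[\hat D] = O(\min(D, \alpha^{-2}\epsilon_0^2))$.

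The main obstacle I anticipate is the bookkeeping around the $d$-dependent margin $\delta_d = (\alpha - \epsilon_0/\sqrt{d})/(1+\alpha)$: I must choose the split point $d^\star$ so that $\delta_d$ is bounded below by a constant multiple of $\alpha/(1+\alpha)$ uniformly over the tail, ensuring the per-step exponent $c_0 d (\delta^\star)^2$ grows linearly in $d$ and the geometric sum converges cleanly. A secondary subtlety is verifying that the tail rate $O(1/(\delta^\star)^2)$ is genuinely dominated by the head term $d^\star$, which hinges on $\epsilon_0 = \Omega(1)$; I would state this assumption explicitly. The remaining constant tracking, namely the factor $(1+\alpha)^2$ in $(\delta^\star)^{-2}$ and the geometric-series denominator $1 - \exp(-c_0(\delta^\star)^2) \approx c_0(\delta^\star)^2$, is routine and I would relegate it to the appendix.
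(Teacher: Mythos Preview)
Your proposal is correct and follows essentially the same approach as the paper: tail-sum for $\mathbb{E}[\hat D]$, bound $\mathbb{P}(\hat D\ge d)$ by the single-step non-rejection probability, split at $\Theta(\epsilon_0^2/\alpha^2)$ with a trivial head bound, and control the tail via Lemma~\ref{lemma:concen} under the assumption $\epsilon_0\ge 1$. The only cosmetic difference is that the paper keeps the $d$-dependent margin and bounds the tail by an integral after the substitution $u=x/\tilde d$, whereas you freeze the margin at $\delta^\star$ and sum a geometric series; both give the same $O(\alpha^{-2}\epsilon_0^2)$ outcome.
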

The above result is well aligned with the intuitions that (1) when the distance gap between $dis$ and $r$, i.e., $(dis-r)/r$, is larger, fewer dimensions would be sampled for making a sufficiently confident conclusion and (2) when $\epsilon_0$ is larger (i.e., the significance value of the hypothesis testings is smaller, which means a higher requirement on the confidence), more dimensions would be sampled.

We further derive the time-accuracy trade-off of \texttt{ADSampling}.
\begin{theorem}
When \texttt{ADSampling} is used for the DCO on an object and a threshold $r$ with $dis > r$, letting $\alpha =(dis-r)/r$, we have
\begin{align}
    \mathbb{E} \left[ \hat D \right]  = O \left[ \min \left( D, \frac{1}{\alpha ^2} \log \frac{D}{\delta}  \right)  \right] 
\end{align}
for achieving its failure probability {\JIANYANG (of positive objects)} at most $\delta$.
\label{theorem:time-accuracy of ADSampling}
\end{theorem}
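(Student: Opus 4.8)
The plan is to derive this theorem as a direct corollary of the two preceding results, namely the failure-probability bound in Lemma~\ref{lemma:ADSampling accuracy} and the expected-dimension bound in Lemma~\ref{theorem:ADSampling efficiency}. The strategy is to treat $\epsilon_0$ as a free tuning parameter: first pin it down via the accuracy requirement, then feed the resulting value into the efficiency bound. The key observation is that $\epsilon_0$ appears in both lemmas—controlling the failure probability in one and the expected cost in the other—so eliminating it between the two gives the stated time-accuracy trade-off.

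First I would impose the accuracy constraint. By Lemma~\ref{lemma:ADSampling accuracy}, for a positive object ($dis \le r$) the failure probability is at most $\exp(-c_0 \epsilon_0^2 + \log D)$. To guarantee this is at most $\delta$, I would require
\begin{equation}
    \exp\left(-c_0 \epsilon_0^2 + \log D\right) \le \delta,
\end{equation}
which rearranges to $c_0 \epsilon_0^2 \ge \log(D/\delta)$, i.e. $\epsilon_0^2 \ge \frac{1}{c_0}\log(D/\delta)$. I would then choose $\epsilon_0$ at (a constant multiple of) this threshold, so that $\epsilon_0^2 = \Theta\bigl(\log(D/\delta)\bigr)$, the smallest value of $\epsilon_0^2$ consistent with the target failure probability (taking the smallest such value is what keeps the cost low, since cost is increasing in $\epsilon_0$).

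Next I would substitute this choice into the efficiency bound. Lemma~\ref{theorem:ADSampling efficiency} states $\mathbb{E}[\hat D] = O\bigl[\min(D,\, \alpha^{-2}\epsilon_0^2)\bigr]$ for $dis > r$, where $\alpha = (dis-r)/r$. Plugging in $\epsilon_0^2 = \Theta(\log(D/\delta))$ and absorbing the constant factor $1/c_0$ into the $O(\cdot)$ immediately yields
\begin{equation}
    \mathbb{E}[\hat D] = O\left[\min\left(D,\; \frac{1}{\alpha^2}\log\frac{D}{\delta}\right)\right],
\end{equation}
which is exactly the claimed bound.

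There is essentially no hard step here; the result is a clean algebraic composition of the two lemmas, with all of the genuine probabilistic and combinatorial work already carried out in their proofs (in particular the union bound over $d$ and the concentration inequality of Lemma~\ref{lemma:concen}). The only point requiring a little care is conceptual rather than computational: one must observe that a single tuning parameter $\epsilon_0$ simultaneously governs both quantities, so that fixing it from the accuracy side determines the cost with no further freedom. The constant $c_0$ from the concentration inequality is harmless, as it is swallowed by the $O(\cdot)$ notation.
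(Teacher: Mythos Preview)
Your proposal is correct and follows exactly the same approach as the paper: set the failure-probability bound from Lemma~\ref{lemma:ADSampling accuracy} equal to $\delta$ to solve for $\epsilon_0$, then substitute the resulting $\epsilon_0^2 = \Theta(\log(D/\delta))$ into the expected-dimension bound of Lemma~\ref{theorem:ADSampling efficiency}. The paper's proof is even terser (two sentences), but the content is identical.
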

}

{\JIANYANG 
\begin{proof}
Making the failure probability in Lemma~\ref{theorem:ADSampling accuracy} be equal to $\delta$, we obtain the corresponding $\epsilon_0$. Then by substituting $\epsilon_0$ in Lemma~\ref{theorem:ADSampling efficiency}, we have the theorem. 
\end{proof}

\smallskip
\noindent\textbf{\texttt{ADSampling v.s. FDScanning.}}
Compared with \texttt{FDScanning}, \texttt{ADSampling} improves the complexity for negative objects from being linear to being logarithmic wrt $D$ at the cost of the accuracy for positive objects (Theorem~\ref{theorem:time-accuracy of ADSampling}).
%
We emphasize that the failure probability (of positive objects) decays \textbf{quadratic-exponentially} (Lemma~\ref{theorem:ADSampling accuracy}) while the time complexity (of negative objects) grows \textbf{quadratically} (Lemma~\ref{theorem:ADSampling efficiency}), both with respect to $\epsilon_0$. It indicates that to achieve \emph{nearly-exact} DCOs, we only need sample a few dimensions.  
{\JIANYANG 
We empirically verify these results in Section~\ref{subsubsec:theoretical results}. It shows that with \texttt{ADSampling} as a plugin, an exact KNN algorithm, namely linear scan, needs only on average 55 dimensions per vector on GIST (originally 960 dimensions) to achieve >99.9\% recall. }

}

{\CHENG
\section{\texttt{AKNN+}: Improving AKNN Algorithms with \texttt{ADSampling} as a Plug-in Component}
\label{sec:aknn+}

\label{subsec:adsampling-any-aknn}
Recall that an AKNN algorithm, which we denote by \texttt{AKNN} and could be any one among many existing algorithms~\cite{malkov2018efficient, jegou2010product, muja2014scalable, fu2019fast, datar2004locality}, involves many DCOs. 
In the literature, \texttt{FDScanning} is typically adopted for DCOs and runs in $O(D)$ time.
Given that \texttt{ADSampling} can conduct 
{\CHENGB reliable DCOs with better efficiency,}
a natural idea is to improve the AKNN algorithms by adopting \texttt{ADSampling} for the DCOs.
{\JIANYANG Specifically,} since \texttt{ADSampling} is based on randomly transformed data vectors and query vectors, before any query comes, we randomly transform all data vectors, and when a query comes, we randomly transform the query vector. Then, we run the AKNN algorithm based on the transformed data and query vectors. Recall that the time cost of transforming the data vectors can be amortized across different queries and the time cost of transforming the query vector can be amortized across many different DCOs involved for answering the query. During the running process of the AKNN algorithm, whenever it conducts a DCO, we use the \texttt{ADSampling} method. For example, for graph-based methods such as \texttt{HNSW}, we use the \texttt{ADSampling} method when comparing {\CHENGC the distance of} a newly visited object with the maximum in the result set $\mathcal R$. 
{\chengf For other AKNN algorithms such as} \texttt{IVF}, we apply \texttt{ADSampling} when {\CHENGB comparing the distance of a candidate and the maximum in the currently maintained KNN set $\mathcal K$} for selecting the final KNNs from the generated candidates.

For an AKNN algorithm \texttt{AKNN}, which adopts \texttt{ADSampling} for DCOs, we call it \texttt{AKNN+}. For example, we call \texttt{HNSW} and \texttt{IVF} with \texttt{ADSampling} adopted for DCOs \texttt{HNSW+} and \texttt{IVF+}, respectively. 

{\JIANYANG
\smallskip\noindent\textbf{{\CHENGB Theoretical Analysis}.} 
Recall that \texttt{ADSampling} improves the efficiency of DCOs on negative objects at the cost of the accuracy of those on positive objects. 
{\CHENGB We 
show the relationship between the probability that \texttt{AKNN+} fails to return the same results as \texttt{AKNN} and the time complexity of the DCO on a negative object involved in \texttt{AKNN+} below.} 
{\JIANYANGB Basically, to preserve the returned results of \texttt{AKNN}, it suffices to produce correct results for all DCOs, whose number is at most $N$. Then with union bound, the failure probability of \texttt{AKNN+} is upper {\CHENG bounded} by the sum of the failure probability of each single DCO. Thus, making the failure probability of \texttt{ADSampling} be $\delta = \delta' / N$ yields the following corollary.}
\begin{corollary}
Let $\delta'$ be the probability that \texttt{AKNN+} fails to return the same results as \texttt{AKNN}. The expected time complexity of the DCO on a negative object with distance gap $\alpha$ is reduced to
\begin{align}
    \mathbb{E} \left[ \hat D \right]  = O \left[ \min \left( D, \frac{1}{\alpha ^2} \log \frac{DN}{\delta'}  \right)  \right] 
\end{align}
and the remaining time cost ({\CHENGC for} DCOs on positive objects and other {\CHENGC computations}) is unchanged.
\end{corollary}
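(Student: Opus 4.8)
The plan is to combine a \emph{coupling} argument that reduces result-preservation to the correctness of individual DCOs with a union bound over the (at most $N$) DCOs, and then to invoke Theorem~\ref{theorem:time-accuracy of ADSampling} with a per-DCO failure budget of $\delta'/N$. This is essentially the two-line sketch given just before the statement, and my task is to make the reduction precise.

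First I would fix the query and all index-phase randomness, so that the execution of \texttt{AKNN} is deterministic and performs a fixed set $\mathcal{D}$ of DCOs with $|\mathcal{D}| \le N$ (each database object is subjected to at most one DCO during the search). Since \texttt{AKNN} and \texttt{AKNN+} differ only in \emph{how} a DCO is computed (\texttt{ADSampling} versus \texttt{FDScanning}) and share all other logic (candidate generation, heap updates, termination condition), I would couple the two executions and argue by induction on the search steps that \texttt{AKNN+} follows exactly the same trajectory as \texttt{AKNN}, and hence returns the same result, as long as every DCO it performs agrees with the exact answer. Consequently the event that \texttt{AKNN+} deviates from \texttt{AKNN} is contained in the event that \emph{some} DCO in $\mathcal{D}$ fails.

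Next I would apply the union bound. Crucially this does not require independence of the per-DCO failure events --- they in fact share the single random orthogonal transformation $P'$ --- only that the deviation event is covered by their union; hence $\mathbb{P}\{\text{\texttt{AKNN+} deviates}\} \le \sum_{i \in \mathcal{D}} \mathbb{P}\{\text{DCO } i \text{ fails}\} \le N \cdot \delta$, where $\delta$ is the per-DCO failure bound controlled by $\epsilon_0$ via Lemma~\ref{lemma:ADSampling accuracy} (recall that negative-object DCOs never fail, so only positive-object DCOs contribute). Setting the per-DCO budget to $\delta = \delta'/N$ then guarantees $\mathbb{P}\{\text{\texttt{AKNN+} deviates}\} \le \delta'$.

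Finally I would substitute $\delta = \delta'/N$ into Theorem~\ref{theorem:time-accuracy of ADSampling}, which immediately yields $\mathbb{E}[\hat D] = O[\min(D, \alpha^{-2}\log(DN/\delta'))]$ for the DCO on a negative object with gap $\alpha$. For the remaining cost I would observe that \texttt{ADSampling} on a positive object samples at most all $D$ dimensions before falling back to an exact comparison, so it is $O(D)$ just like \texttt{FDScanning}, and that the non-DCO bookkeeping is identical between \texttt{AKNN} and \texttt{AKNN+}; hence that cost is unchanged. I expect the main obstacle to be the coupling/induction step: one must argue carefully that the trajectory and the set of DCOs performed by \texttt{AKNN+} coincide with those of \texttt{AKNN} precisely up to the first failing DCO, so that the \emph{fixed} set $\mathcal{D}$ (rather than a random, trajectory-dependent set) is the correct object over which to union-bound.
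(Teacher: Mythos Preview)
Your proposal is correct and follows essentially the same approach as the paper: reduce result-preservation to all DCOs being correct, union-bound over the at most $N$ DCOs, set the per-DCO failure budget to $\delta'/N$, and plug into Theorem~\ref{theorem:time-accuracy of ADSampling}. The paper's own argument is the terse two-sentence version of exactly this; your explicit coupling/induction to justify that the set of DCOs is the fixed set $\mathcal{D}$ of the deterministic \texttt{AKNN} run is a welcome elaboration but not a departure.
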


Furthermore, for those \texttt{AKNN+} algorithms which generate candidates all at once (e.g., {\CHENGC{\texttt{IVF}}}), 
producing correct DCO results for KNN objects ($K$ objects) rather than for all objects (at most $N$ objects) is sufficient to return the same results as its corresponding \texttt{AKNN} algorithm. This is because once we produce correct results for {\CHENGB the DCOs on} the true KNN objects, we also obtain their exact distances {\CHENGC (note that all of them would be positive objects)}. It ensures to return them as the final answers. Thus, we have the following corollary.

\begin{corollary}
Let $\delta'$ be the probability that \texttt{AKNN+} fails to return the KNNs of the generated candidates. The expected time complexity of DCO on a negative object with distance gap $\alpha$ is reduced to
\begin{align}
    \mathbb{E} \left[ \hat D \right]  = O \left[ \min \left( D, \frac{1}{\alpha ^2} \log \frac{DK}{\delta'}  \right)  \right] 
\end{align}
and the remaining time cost ({\CHENGC for} DCOs on positive objects and other {\CHENGC computations}) is unchanged.
\label{corollary: find out KNNs}
\end{corollary}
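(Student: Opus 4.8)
The plan is to mirror the argument behind the preceding corollary (which applied a union bound over all $N$ DCOs), but to exploit the special structure of algorithms that fix their candidate set up front (such as \texttt{IVF}) in order to replace the factor $N$ by $K$. Concretely, I would first establish the combinatorial claim stated informally in the text: \emph{if the DCOs on the $K$ true nearest neighbors among the generated candidates all succeed, then \texttt{AKNN+} returns exactly the same result as \texttt{AKNN}}. Once this reduction is in place, I only need to control the failure of $K$ DCOs rather than of all $N$, and the union-bound-plus-Theorem~\ref{theorem:time-accuracy of ADSampling} machinery carries the rest.

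For the combinatorial claim I would rely on two structural properties of \texttt{ADSampling}. First, the algorithm returns $1$ only after sampling all $D$ dimensions (Case~3), so the key it inserts into the maintained KNN set $\mathcal K$ is always the \emph{exact} distance; hence every object ever placed in $\mathcal K$ carries its true distance. Second, by Lemma~\ref{lemma:ADSampling accuracy} a DCO on a negative object (one with $dis > r$) never fails. I would then argue by counting that a true $i$-th nearest neighbor $\mathbf{o}_i$, with exact distance $d_i$, can neither be wrongly discarded nor later evicted: at any moment at most $i-1 \le K-1$ candidates have distance strictly smaller than $d_i$, so a size-$K$ set $\mathcal K$ holding exact distances can never be full of objects all strictly closer than $\mathbf{o}_i$. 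Consequently the threshold $r$ that $\mathbf{o}_i$ faces always satisfies $r \ge d_i$, i.e. $\mathbf{o}_i$ is a positive object at its processing time; if its (positive) DCO succeeds it is inserted with exact key $d_i$ and, by the same counting bound, can never afterwards be pushed out. Since this holds for every $i \in \{1,\dots,K\}$ and $|\mathcal K| = K$, the final $\mathcal K$ coincides with the true KNNs of the candidate pool.

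With the reduction established, the probabilistic step is routine: I would set the per-DCO failure probability of \texttt{ADSampling} to $\delta = \delta'/K$, so that a union bound over the (at most) $K$ relevant positive DCOs bounds the probability that \texttt{AKNN+} deviates from \texttt{AKNN} by $K \cdot (\delta'/K) = \delta'$. Substituting this $\delta$ into Theorem~\ref{theorem:time-accuracy of ADSampling} turns the $\log(D/\delta)$ term into $\log(DK/\delta')$, yielding the claimed expected cost $O[\min(D, \alpha^{-2}\log(DK/\delta'))]$ for a negative DCO; the cost of positive DCOs (which still sample up to all $D$ dimensions) and of the heap maintenance is untouched, which is precisely the ``remaining time cost unchanged'' clause.

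The main obstacle is the combinatorial claim, not the probability calculus: one must verify that, because the candidates are fixed before any DCO is performed, no failed DCO can change \emph{which} objects are examined (unlike the iterative graph-based setting, where an erroneous DCO can cascade into a different candidate set and so forces the weaker $N$-factor bound). The two subtle points to handle with care are the \emph{dynamic} threshold $r$ (the current maximum of $\mathcal K$, which may be inflated when some non-KNN positive object was erroneously dropped) and ties in distance; I expect both to be dispatched by the counting bound together with a fixed, \texttt{AKNN}-consistent tie-breaking rule, but they are where the rigor must be spent.
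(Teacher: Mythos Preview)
Your proposal is correct and follows the same approach as the paper. The paper's own justification is just the short paragraph preceding the corollary: it asserts (without proof) that the true KNNs are always positive objects, so correct DCOs on those $K$ objects yield their exact distances and hence the correct final heap; a union bound with $\delta=\delta'/K$ then feeds into Theorem~\ref{theorem:time-accuracy of ADSampling}. Your write-up supplies the missing combinatorial argument---the counting bound showing that $\mathcal K$ can never be filled with objects all strictly closer than the $i$-th NN, hence $r\ge d_i$ at processing time and $\mathbf o_i$ is positive---which the paper leaves implicit. The tie-breaking caveat you flag is real but is also glossed over in the paper.
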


}
}

{\CHENG
\section{\texttt{AKNN++}: Improving \texttt{AKNN+} Algorithms with Algorithm Specific Optimizations}
\label{sec:aknn++}

\subsection{\texttt{HNSW++}: Towards More Approximation}
\label{subsec:hnsw++}

Recall that \texttt{HNSW+} maintains a result set $\mathcal{R}$ with a max-heap of size $N_{ef}$ and distances as keys, where $N_{ef} > K$. For each newly generated candidate object, it checks whether its distance is 
{\JIANYANGLAST no greater than}
the largest distance (of an object) in $\mathcal{R}$ and if so, it inserts the object in the set $\mathcal{R}$. Specifically, it uses \texttt{ADSampling} to conduct the DCO for each candidate object with the largest distance in $\mathcal{R}$ as the threshold distance. We identify two roles played by the set $\mathcal{R}$. 
\underline{First}, it maintains the KNNs with the smallest distances among those candidates generated so far. These KNNs would be returned as the outputs of the algorithm at the end. 
\underline{Second}, it maintains the $N_{ef}^{th}$ largest distance among the candidates generated so far. This distance is used as the threshold distance of the DCOs through the course of the algorithm, whose results would affect how the candidates are generated. 
{\JIANYANG Specifically, if a candidate {\CHENGC generated by} \texttt{HNSW+} has its distance at most the $N_{ef}^{th}$ distance in $\mathcal{R}$, it would be {\CHENGB added to} the search set $\mathcal S$ for further candidate generation.}

This dual-role design is attributed to the fact that in \texttt{HNSW+}, \emph{exact} distances are used for fulfilling both roles. 
{\JIANYANG As shown in 
Figure~\ref{fig:illustration HNSW+}, \texttt{HNSW+} always maintains $\mathcal{R} $ and $\mathcal{S} $ with exact distances (dark green), and the first $K$ objects in $\mathcal{R}$ are the KNN objects.}
Using the exact distances is desirable for the first role (of maintaining the KNNs) 
since the outputs of the algorithms are defined based on the exact distances. Yet we argue that it may not be cost-effective for the second role (of maintaining the $N_{ef}^{th}$ largest distance) since the procedure that uses this distance for generating candidates is a heuristic one (i.e., greedy beam search) and may still work well with an approximate distance. 

Therefore, we propose to decouple the two roles of $\mathcal{R}$ by maintaining two sets $\mathcal{R}_1$ and $\mathcal{R}_2$, one for each role {\JIANYANG (as illustrated in 
Figure~\ref{fig:illustration HNSW++})}. 
Set $\mathcal{R}_1$ has a size of $K$ and is based on exact distances {\JIANYANG (dark green)}. Set $\mathcal{R}_2$ has its size of $N_{ef}$ and is based on distances, {\CHENGC which could be either exact or approximate}. Specifically, for each newly generated candidate, it checks whether its distance is 
{\JIANYANGLAST no greater than}
the maximum distance in set $\mathcal{R}_1$, and if so, it inserts the candidate in set $\mathcal{R}_1$. 
{\JIANYANG Furthermore, this DCO produces a by-product, namely the observed distance $dis'$ (light green) when {\CHENG \texttt{ADSampling} terminates}, which could be exact (if all $D$ dimensions are sampled) or approximate (if it terminates with $d<D$). }
It then maintains the set $\mathcal{R}_2$ and the set $\mathcal{S}$ based on the 
{\JIANYANG observed} distances similarly as \texttt{HNSW+} maintains $\mathcal{R}$ and $\mathcal{S}$, respectively. We call the resulting algorithm that is based on this decoupled-role design \texttt{HNSW++}. 

\begin{figure}[thb]
    \centering
    \vspace{-4mm}
    \captionsetup[subfigure]{aboveskip=-1pt}
    \begin{subfigure}[b]{0.45\linewidth}
        \centering
        \includegraphics[width=0.8\textwidth]{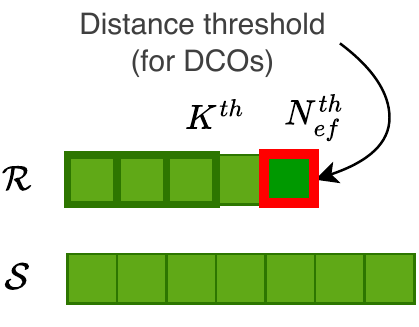}
        \caption{\texttt{HNSW+}}
        \label{fig:illustration HNSW+}
    \end{subfigure}   
    \begin{subfigure}[b]{0.45\linewidth}
        \centering
        \includegraphics[width=0.8\textwidth]{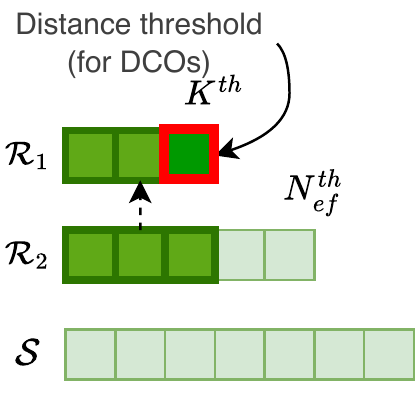}
        \caption{\texttt{HNSW++}}
        \label{fig:illustration HNSW++}
    \end{subfigure}   
    \vspace{-4mm}
    \caption{\texttt{HNSW+} v.s. \texttt{HNSW++}}
    \vspace{-4mm}
    \label{fig:illustration HNSW+ v.s. HNSW++}
\end{figure}
\smallskip\noindent\textbf{{\CHENGB Theoretical Analysis}.}
We note that different from \texttt{HNSW+}, which would return the same results as \texttt{HNSW} with high probability, \texttt{HNSW++} does not aim to return the same results as \texttt{HNSW} {\JIANYANG (though in practice, it returns nearly the same results as verified in Section~\ref{subsub:dimensions and recall}). 
{\CHENG Specifically, \texttt{HNSW++} would generate a set of candidates, which might be different from that of \texttt{HNSW+} or \texttt{HNSW}. 
{
Among the generated candidates, \texttt{HNSW++} guarantees to return their KNNs with high probability because it still maintains KNNs with \texttt{ADSampling}, {\CHENG and its guarantee is the same as the one in Corollary~\ref{corollary: find out KNNs}}. 
}
}
}

\smallskip\noindent\textbf{\texttt{HNSW++} v.s. \texttt{HNSW+}.}
Compared with \texttt{HNSW+}, \texttt{HNSW++} is expected to have a better time-accuracy trade-off, which we explain as follows. \underline{First}, consider the time cost. In \texttt{HNSW++}, for each DCO, the threshold distance is the $K^{th}$ largest distance, which is smaller than that used in \texttt{HNSW+} (i.e., the $N_{ef}^{th}$ largest distance). Correspondingly, in \texttt{HNSW++}, the $\alpha$ value, which is defined as $(dis - r)/r$, is larger than that in \texttt{HNSW+}. Therefore, the time cost for this DCO would be smaller than that in \texttt{HNSW+} according to the time complexity analysis of \texttt{ADSampling} in Section~\ref{subsection:theoretical analysis of ADSampling}. \underline{Second}, consider the effectiveness. While \texttt{HNSW++} and \texttt{HNSW+} use different distances for generating the candidates, we expect that they would generate candidates with similar qualities given that (1) the distances used by the two algorithms should be close (or the same in some cases) and (2) the method used for generating candidates, i.e., greedy beam-search, has a heuristic nature and there is no strong clue that it favors exact distances over approximate ones.

{ \JIANYANG
\noindent\textbf{Remarks.} We note that the technique of \texttt{HNSW++} can also be used in other 
graph-based methods~\cite{malkov2018efficient, li2019approximate, fu2019fast, diskann, NSW}.
This is because these algorithms also apply the greedy beam search {\CHENGC based on a set $\mathcal{R}$} in the query phase.
}

\subsection{\texttt{IVF++}: Towards Cache Friendliness}
\label{subsec:ivf++}
}

{\JIANYANG




In the original \texttt{IVF} algorithm, the vectors in the same cluster are stored sequentially. When evaluating their distances, the algorithm scans all the dimensions of these vectors \emph{sequentially}, which exhibits strong locality of reference, {\CHENG and thus it is} cache-friendly. {\CHENG Figure~\ref{fig:data layout plain} illustrates the corresponding data layout {\JIANYANGB (as indicated by the arrow)} and 
{\CHENGB the data needed (as indicated by the colored background).}
} In {\CHENG \texttt{IVF+}}, 
though {\CHENG it scans} fewer dimensions than {\texttt{IVF}}, 
it would not be cache-friendly with the same data layout. 
Specifically, when {\CHENG \texttt{IVF+}} terminates {\CHENG the dimension sampling process} for a data vector, 
the subsequent dimensions {\CHENG would probably have been} loaded into cache from main memory though they are not needed. {\CHENG Figure~\ref{fig:data layout ARSearch} illustrates the corresponding data layout {\CHENGB and data needed.}
}
}

{\CHENG
We propose to re-organize the data layout of the candidates and adjust the order of the dimensions of the candidates to be fed to \texttt{ADSampling} accordingly so as to achieve more cache-friendly data accesses. Recall that for each candidate, \texttt{ADSampling} would definitely sample a few, say $d_1$, dimensions of the candidate first and then 
{\JIANYANGREVISION incrementally}
sample more dimensions depending on the hypothesis testing outcomes. That is, the first $d_1$ dimensions of each candidate would be accessed for sure. Motivated by this, we store the first $d_1$ dimensions of all candidates sequentially in an array $A_1$ and the remaining $D - d_1$ dimensions of all candidates sequentially in another array $A_2$. 
We note that the process of re-organizing the data layout can be conducted during the index phase. During the query phase, when using \texttt{ADSampling} for DCOs on the candidates, we follow the following order of the dimensions of the candidates: the first $d_1$ dimensions of the first candidate, the first $d_1$ dimensions of the second candidate, ..., the first $d_1$ dimensions of the last candidate, the $D-d_1$ dimensions of the first candidate, ..., the $D-d_1$ dimensions of the last candidate. 
Figure~\ref{fig:data layout ARScan} 
illustrates the corresponding data layout {\CHENGB and data needed.}
We call the resulting algorithm \texttt{IVF++}. \texttt{IVF++} and \texttt{IVF+} would produce exactly the same results, but the former is more cache friendly since it utilizes the locality of reference for the first $d_1$ dimensions of all candidates.
}

\begin{figure}[thb]
    \centering
    \vspace{-2mm}
    \begin{subfigure}[b]{0.3\linewidth}
        \centering
        \includegraphics[height=0.9\textwidth]{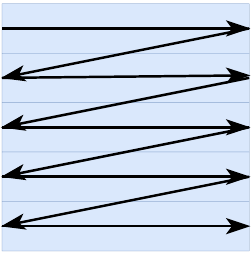}
        \caption{{\CHENG \texttt{IVF}}}
        \label{fig:data layout plain}
    \end{subfigure}       
    \begin{subfigure}[b]{0.3\linewidth}
        \centering
        \includegraphics[height=0.9\textwidth]{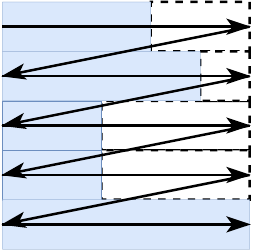}
        \caption{{\CHENG \texttt{IVF+}}}
        \label{fig:data layout ARSearch}
    \end{subfigure}       
    \begin{subfigure}[b]{0.3\linewidth}
        \centering
        \includegraphics[height=0.9\textwidth]{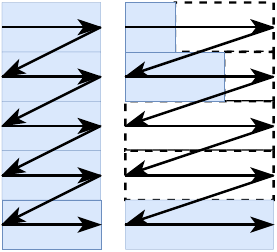}
        \caption{{\CHENG \texttt{IVF++}}}
        \label{fig:data layout ARScan}
    \end{subfigure}       
    \vspace{-4mm}
    \caption{Data Layout {\CHENGC and Data Needed}}
    \vspace{-4mm}
    \label{fig:data layout}
\end{figure}

{\CHENGB
\smallskip
\noindent\textbf{Theoretical Analysis.}}
{\JIANYANG 
Since \texttt{IVF++} and \texttt{IVF+} differ only in data layout, they have the same theoretical guarantee (Corollary~\ref{corollary: find out KNNs}).
}

{\CHENG
\smallskip
\noindent\textbf{Remarks.} We note that the technique used for improving \texttt{IVF+} with cache friendliness can also be used for improving some other \texttt{AKNN+} algorithms, including those of tree-based methods~\cite{muja2014scalable, dasgupta2008random, ram2019revisiting, beygelzimer2006cover}, quantization-based methods~\cite{jegou2010product, imi} and hashing-based methods~\cite{datar2004locality, c2lsh, Dong2020Learning}.
This is because all these algorithms generate the candidates in a batch and then re-rank the candidates for finding out KNNs.
}

\section{Experiment}

\captionsetup[subfigure]{aboveskip=10pt}

\label{section:experiment}


\subsection{Experimental Setup}
\label{subsec: experimental setup}
\noindent
\textbf{Datasets.} We {\CHENG use} six public datasets with varying {\CHENG sizes and dimensionalities}~\footnote{{\JIANYANG Note that our techniques introduce nearly no extra space consumption (the only extra space consumption is brought by a $D\times D$ random orthogonal matrix, which is ignorable compared with the huge $D$-dimensional database of size $N$). Thus, {\CHENGC they do not affect} the scalability of the AKNN algorithms. We thus focus on million-scale datasets to verify {\CHENGC their} effectiveness in speeding up the AKNN algorithms.}}, whose details are shown in Table~\ref{tab:data}. These datasets {\CHENG have been} widely used to benchmark AKNN algorithms ~\cite{lu2021hvs,adaptive2020ml, li2019approximate}.
{\JIANYANGREVISION We note that these  public datasets 
provide both data and query vectors}.

\begin{table}[h]
\caption{Dataset Statistics}
\vspace{-4mm}
\label{tab:data}
\begin{tabular}{c|cccc}
\hline
Dataset & Size      & $D$ & {\JIANYANGREVISION Query Size} & Data Type \\ \hline
Msong   & 992,272   & 420       & {\JIANYANGREVISION 200}     & Audio     \\
DEEP    & 1,000,000 & 256       & {\JIANYANGREVISION 1,000}     & Image     \\
Word2Vec & 1,000,000 & 300      & {\JIANYANGREVISION 1,000}      & Text      \\
GIST    & 1,000,000 & 960       & {\JIANYANGREVISION 1,000}     & Image     \\
GloVe   & 2,196,017 & 300       & {\JIANYANGREVISION 1,000}     & Text      \\
Tiny5M  & 5,000,000 & 384       & {\JIANYANGREVISION 1,000}     & Image     \\ \hline
\end{tabular}
\end{table}

\smallskip
\noindent
{\JIANYANGREVISION \textbf{{\CHENGB Algorithms}.} }
{\CHENGB For reliable DCOs, we compare our proposed method \texttt{ADSampling} with the conventional \texttt{FDScanning} {\JIANYANGREVISION and \texttt{PDScanning} (Partial Dimension Scanning), which we explain below. \texttt{PDScanning} 
incrementally scans the dimensions of {\chengr a} raw vector 
{\chengr and} terminates {\chengr the process} when the distance based on the partially scanned $d$ dimensions, i.e., $\sqrt {\sum_{i=1}^d x_i^2}$, is greater than the distance threshold $r$. {\chengr We note that \texttt{PDScanning} starts with zero dimensions but not a pre-set number of dimensions since (1) it is hard to set the number and (2) starting from a certain number of dimensions or zero dimensions have very similar performance given the fact that the dimensions are scanned incrementally.} We also note that \texttt{PDScanning} is an exact algorithm for {\chengr DCOs} and has the worst-case time complexity of $O(D)$. We name the AKNN algorithms {\chengr with} \texttt{PDScanning} {\chengr for DCOs} as 
\texttt{AKNN}*
and the one with a further optimized data layout as 
\texttt{AKNN}**
(for \texttt{IVF} only).} We exclude those distance approximation methods such as product quantization and random projection from comparison since as explained in Section~\ref{sec:introduction} and further verified in Section~\ref{subsubsec:reliable-dco}, they can hardly achieve reliable DCOs. 
{\JIANYANGREVISION For AKNN algorithms, we  mainly focus on \texttt{HNSW}~\cite{malkov2018efficient} and \texttt{IVF}~\cite{jegou2010product} for providing the contexts of DCOs since they correspond to two state-of-the-art AKNN algorithms as benchmarked in~\cite{annbenchmark, li2019approximate}. We note that these methods are widely adopted in industry (including 
Faiss~\cite{johnson2019billion}, 
Milvus~\cite{milvus} and PASE~\cite{PASE}).}}
{\JIANYANGREVISION For better comprehensiveness, we also consider one of the best tree-based methods \texttt{Annoy}~\cite{annoy} (as benchmarked in \cite{annbenchmark, li2019approximate}) and a hashing-based method \texttt{PMLSH}~\cite{zheng2020pm}. We note that their performance of time-accuracy tradeoff is suboptimal compared with \texttt{HNSW} and \texttt{IVF}. 
Due to the limit of space, 
we include their results in 
Appendix~\ref{appendix:section tree and hashing}.}

\smallskip
\noindent
{\JIANYANGREVISION \textbf{{\CHENGB Performance Metrics}.} }
{\JIANYANGREVISION We use two metrics to measure the accuracy: (1) 
recall~\cite{annbenchmark, li2019approximate, malkov2018efficient, jegou2010product}, i.e., the ratio between the number of successfully retrieved ground truth KNNs and $K$ and (2) {\chengr average distance} ratio~\cite{reviewer_paper, c2lsh, huang2015query, sun2014srs, reviewer_SISAP_metric}, i.e., the average of the {\chengr distance ratios 
(which equals to the average relative error on distance {\chengr plus one})
of}
the retrieved $K$ objects {\chengr wrt} the ground truth KNNs.
We adopt the query-per-second (QPS), i.e., the number of handled queries per second, to measure efficiency. }
{\JIANYANGB Note that the query time is measured \textit{end-to-end} (i.e., including the time of random transformation on query vectors). {\JIANYANGREVISION We decompose the time cost in Section~\ref{subsec: time decompostion}.}}
{\CHENG We also measure the total number of dimensions evaluated by an algorithm. For \texttt{AKNN} algorithms, it means the total number of dimensions of the candidates 
(since for each candidate, all of its dimensions are used for computing its distance). 
{\JIANYANGREVISION For \texttt{AKNN+} (\texttt{AKNN}*) and \texttt{AKNN++} (\texttt{AKNN}**) algorithms, it means the total number of {\chengr sampled (scanned)} dimensions of the candidates (since for a candidate, only those sampled {\chengr (scanned)} dimensions are used for computing its distance approximately). All the mentioned metrics are averaged over the whole query set.}
}

\smallskip
\noindent
\textbf{Implementation.} 
The implementation of an AKNN algorithm consists of two phases. During {\CHENG the} \underline{index phase}, we first generate a random orthogonal transformation matrix with the NumPy library, store it and apply the transformation to all {\CHENG data} vectors. 
{\JIANYANGREVISION Then we feed the transformed vectors (the raw vectors for \texttt{AKNN}, \texttt{AKNN}* and \texttt{AKNN}**) into existing AKNN algorithms. In particular, 
for \texttt{HNSW}, \texttt{HNSW+},  \texttt{HNSW++} and \texttt{HNSW}*
(note that they have the same graph structure), our implementation is based on hnswlib~\cite{malkov2018efficient}, while 
for \texttt{IVF}, \texttt{IVF+}, \texttt{IVF++}, \texttt{IVF}* and \texttt{IVF}**
(note that they have the same cluster structure), our implementation of {\CHENGB K-means} clustering is based on the Faiss library~\cite{johnson2019billion}.} 
Then during {\CHENG the} \underline{query phase}, all {\CHENG algorithms} are implemented in C++.
For a new query, we first transform the query vector with the Eigen library~\cite{eigenweb} for fast matrix multiplication when running \texttt{AKNN+} and \texttt{AKNN}++ algorithms 
{\JIANYANGREVISION (For \texttt{AKNN}, \texttt{AKNN}* and \texttt{AKNN}**, they involve no transformation). Then we feed the vector into 
the \texttt{AKNN}, \texttt{AKNN+},  \texttt{AKNN++}, \texttt{AKNN}* and \texttt{AKNN}**
algorithms.}
Following~\cite{graphbenchmark, li2019approximate}, we disable all hardware-specific optimizations including SIMD, memory prefetching and multi-threading ({\JIANYANG including those in the Eigen library}) {\CHENG so as to focus on the comparison among algorithms themselves}. 


\smallskip
\noindent
\textbf{Parameter Setting.} 
{\JIANYANG For \texttt{HNSW}, two parameters are preset to control {\CHENG the construction of the graph}, namely $M$ to control the number of connected neighbors and $efConstruction$ to control the quality of approximate nearest neighbors. We follow the parameter settings of its original work~\cite{malkov2018efficient} where the parameters are set as $M=16$ and $efConstruction=500$.} 
For \texttt{IVF}, as suggested in the Faiss library~\footnote{\url{https://github.com/facebookresearch/faiss/wiki/Guidelines-to-choose-an-index}}, the number of clusters should be around the square root of the cardinality of the database. Since we focus on million-scale datasets, it's set to be {\CHENG 4,096}. 
For \texttt{ADSampling}, 
we vary $\epsilon_0$ {\CHENG with values of 
1.5, 1.8, 2.1, 2.4, 2.7 and 3.0
and study its effects in Section~\ref{subsub:parameter}. Based on the results, we adopt the setting of 
$\epsilon_0 = 2.1$
as the default one}.
{\CHENG 
{\JIANYANG Recall that in \texttt{ADSampling}, it 
{\JIANYANGREVISION incrementally samples} some dimensions of a data vector and performs a hypothesis testing in iterations. }
To avoid the overhead of 
frequent hypothesis testings, {\JIANYANG we sample $\Delta_d$ dimensions at each iteration. By default, we set $\Delta_d=32$.}
{\JIANYANG Its parameter study is also provided in Section~\ref{subsub:parameter}.}
}

All C++ source codes are complied by g++ 9.4.0 with 
\texttt{-O3} optimization under
Ubuntu 20.04LTS. The Python source codes (which are used during the index phase) are run on Python 3.8. All experiments are {\CHENG conducted} on a machine with AMD Threadripper PRO 3955WX 3.9GHz 16C/32T processor and 64GB RAM. The code and datasets are available at \url{https://github.com/gaoj0017/ADSampling}.


\begin{figure*}[ht]
  \centering 
    \includegraphics[width=17cm]{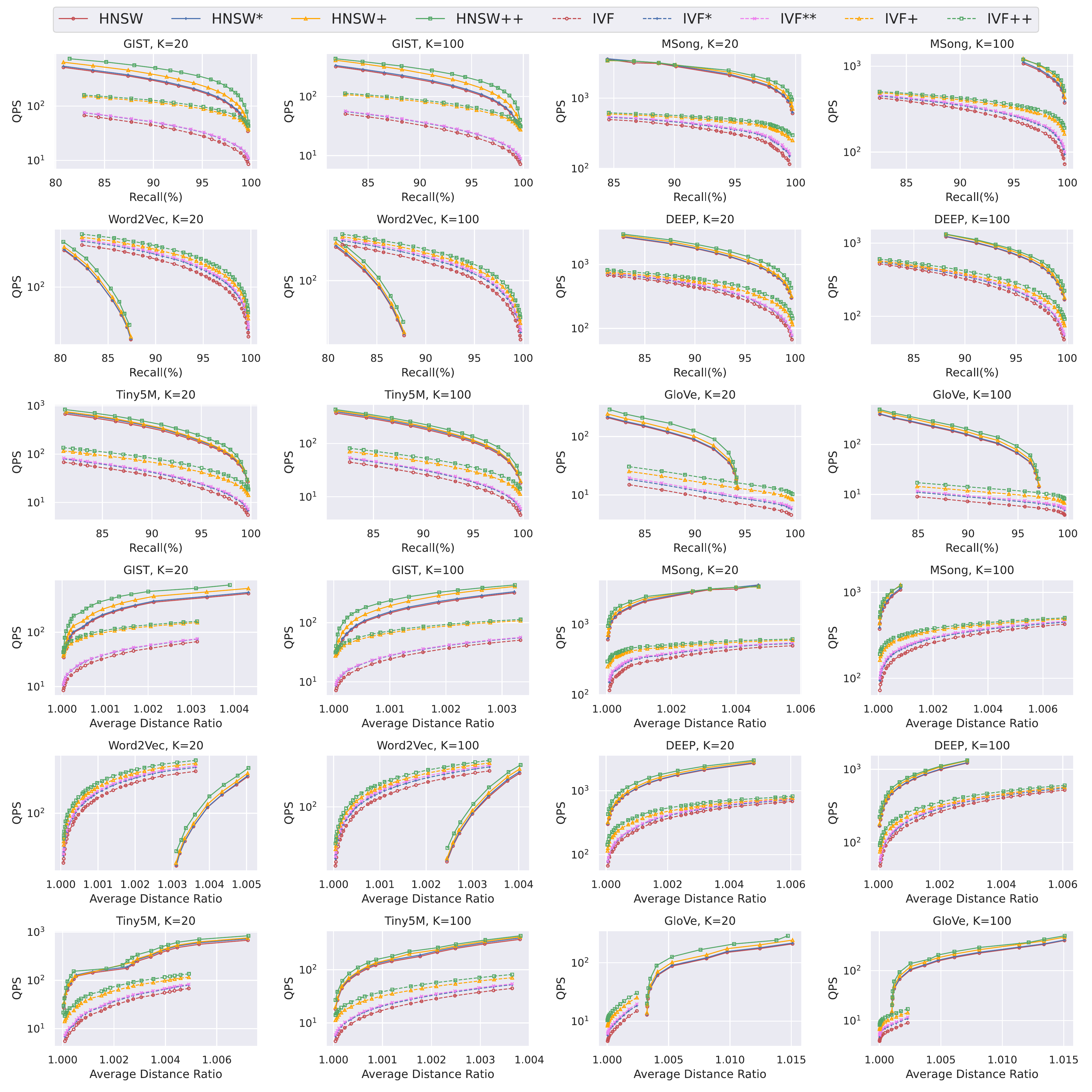}
  \vspace*{-4mm}
  \caption{{\JIANYANGREVISION Time-Accuracy Tradeoff (\texttt{HNSW} and \texttt{IVF}).}}
  \label{figure:time-accuracy}
\end{figure*}

\begin{figure*}[ht]

   \captionsetup[subfigure]{aboveskip=-5pt}
\begin{subfigure}[b]{0.48\linewidth}
    \includegraphics[width=\textwidth]{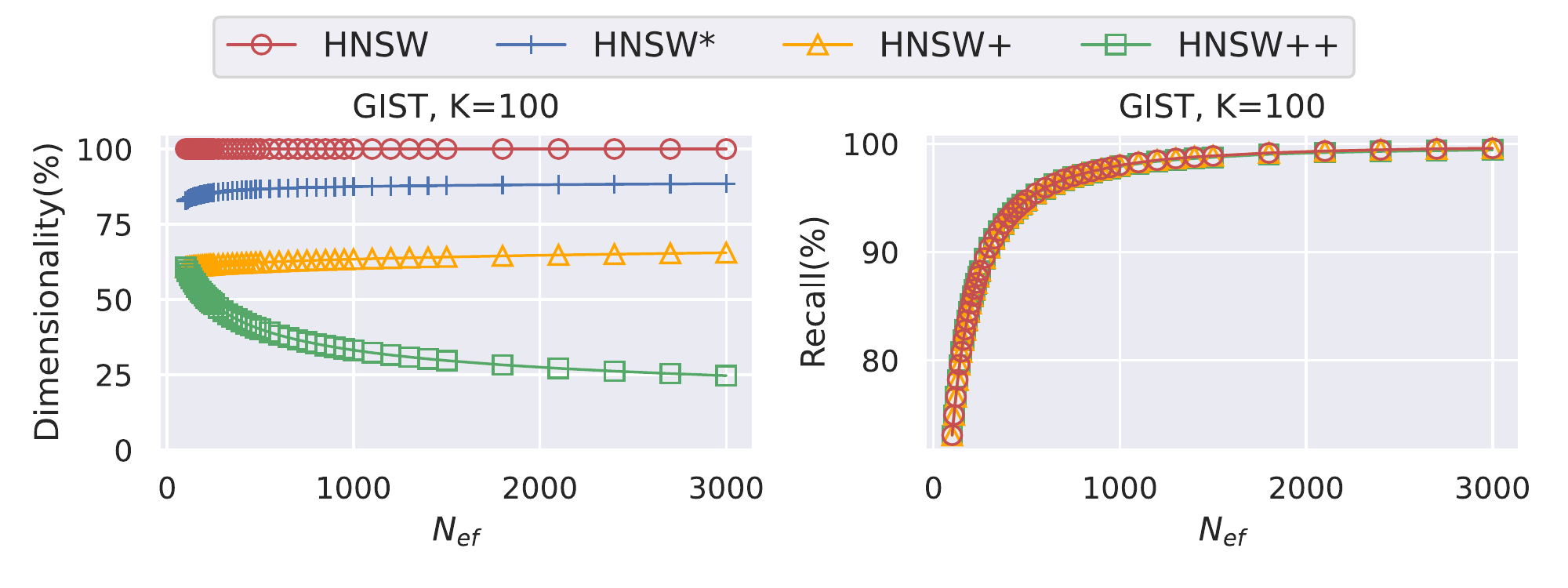}
    \caption{\texttt{HNSW}}
    \label{figure:evaluated_dimension HNSW}
\end{subfigure}  
\begin{subfigure}[b]{0.48\linewidth}
    \includegraphics[width=\textwidth]{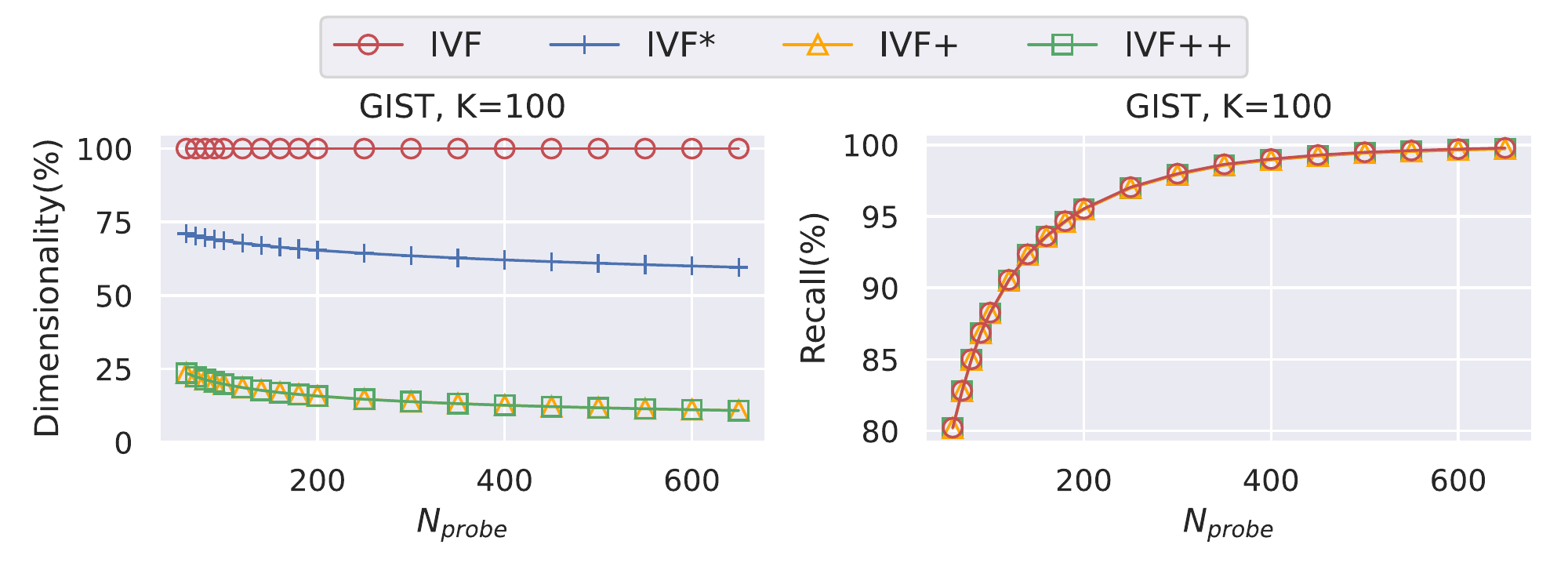}
    \caption{\texttt{IVF}}
    \label{figure:evaluated_dimension IVF}
\end{subfigure} 
\vspace*{-4mm}
\caption{\JIANYANGREVISION Evaluated Dimensionality and Accuracy.}
\vspace*{-4mm}
\label{figure:evaluated_dimension}
\end{figure*}

\subsection{{\CHENG Experimental Results}}
\label{section:time-accuracy}
\subsubsection{\textbf{{\CHENG Overall Results (Time-Accuracy Trade-Off)}}}
\label{subsec:main result}
{\JIANYANGREVISION We {\CHENG plot} the QPS-recall curve (upper panels, upper-right is better) and the QPS-ratio curve (lower panels, upper-left is better) by varying $N_{ef}$ for \texttt{HNSW}/\texttt{HNSW}*/\texttt{HNSW+}/\texttt{HNSW++} and $N_{probe}$ for \texttt{IVF}/\texttt{IVF}*/\texttt{IVF}**/\texttt{IVF+}/\texttt{IVF++} in Figure~\ref{figure:time-accuracy}.} 
We focus only on the region with the recall at least 80\% {\CHENG based on} practical needs.
Overall, with the results in Figure~\ref{figure:time-accuracy}, we can observe clearly 
%
{\JIANYANG that (1) the \texttt{AKNN+} algorithms (represented by the orange curves) outperform the plain \texttt{AKNN} algorithms (represented by the red curves), (2) the \texttt{AKNN++} algorithms (represented by the green curves) further outperform the \texttt{AKNN+} algorithms}, {\JIANYANGREVISION (3) the baseline method \texttt{HNSW}* (represented by the blue curves) brings very minor improvements on \texttt{HNSW} for all the tested datasets and (4) the baseline methods \texttt{IVF}* {\chengr (represented by the blue curves)} and \texttt{IVF}** (represented by the violet curves) are outperformed by \texttt{IVF+} consistently and significantly ({\chengr and by} \texttt{IVF++} {\chengr with an} even larger margin).}

{\JIANYANG
Besides, we have the following observations. (1) Our techniques bring more improvements on \texttt{IVF} than on \texttt{HNSW} (even when \texttt{IVF} performs better than \texttt{HNSW}, e.g., on Word2Vec). We ascribe it to the fact that other computations than DCOs of \texttt{HNSW} are heavier than {\CHENG those} of \texttt{IVF} (as shown in Figure~\ref{fig:cost statistics}). (2) Our techniques in general bring more improvements on high accuracy region than on low accuracy region (e.g., GIST 95\% v.s. 85\%). This is because {\CHENGB when} an AKNN algorithm targets higher accuracy, it unavoidably generates more low-quality candidates with larger distance gap $\alpha$, 
{\CHENGB for which it needs fewer dimensions for reliable DCOs.} {\JIANYANGREVISION (3) The data layout optimization brings more improvements on \texttt{IVF+} (i.e., \texttt{IVF++} v.s. \texttt{IVF+}) than on \texttt{IVF}* (i.e., \texttt{IVF}** v.s. \texttt{IVF}*). This is because \texttt{ADSampling} has the logarithmic complexity while the baseline \texttt{PDScanning} has the linear complexity. Specifically, the first $\Delta_d$ dimensions are sufficient for many DCOs when using \texttt{ADSampling} and thus, many accesses to the second array $A_2$ in Figure~\ref{fig:data layout ARScan} can be avoided. When using \texttt{PDScanning} for the DCOs, it will still access the second array frequently because it needs {\chengr more than $\Delta_d$} dimensions.}
}




\subsubsection{\textbf{{\CHENG Results of Evaluated Dimensions and Recall}}}
\label{subsub:dimensions and recall}

{\JIANYANG 
We then study the number of evaluated dimensions and the recall of 
{\JIANYANGREVISION \texttt{AKNN}/\texttt{AKNN+}/\texttt{AKNN++}/\texttt{AKNN}* (\texttt{AKNN}** has exactly the same curve as \texttt{AKNN}* and thus, is omitted)}
under the same search parameter setting ($N_{ef}$ for \texttt{HNSW} and $N_{probe}$ for \texttt{IVF}). {\CHENG For the number of evaluated dimensions, we measure its ratio over that of 
\texttt{AKNN} in percentage for the ease of comparison. The results are shown in Figure~\ref{figure:evaluated_dimension}.}
}

\smallskip
\noindent
\textbf{{\CHENG Overall Results.}} 
{\JIANYANG
In Figure~\ref{figure:evaluated_dimension}, we can observe clearly that \texttt{AKNN+} and \texttt{AKNN++} {\CHENG evaluate} much fewer dimensions than \texttt{AKNN} while {\CHENG reaching} nearly the same recall. Specifically, on GIST, for all tested values of $N_{ef}$, the accuracy loss of \texttt{HNSW+} and \texttt{HNSW++} (compared with \texttt{HNSW}) is no more than 0.14\% and that of \texttt{IVF+} and \texttt{IVF++} is no more than 0.1\%. At the same time,
\texttt{HNSW++} saves from 39.4\% to 75.3\% of the total dimensions, 
\texttt{HNSW+} saves from 34.5\% to 39.4\% and 
\texttt{IVF+}/\texttt{IVF++} save from 76.5\% to 89.2\%. {\JIANYANGREVISION The baseline method \texttt{HNSW}* saves from 10.9\% to 15.7\%, which explains its minor improvement on \texttt{HNSW}. \texttt{IVF}*/\texttt{IVF}** saves from 28.9\% to 40.4\%.} 
}

\begin{figure}[thb]
  \centering 
  \includegraphics[width=\linewidth]{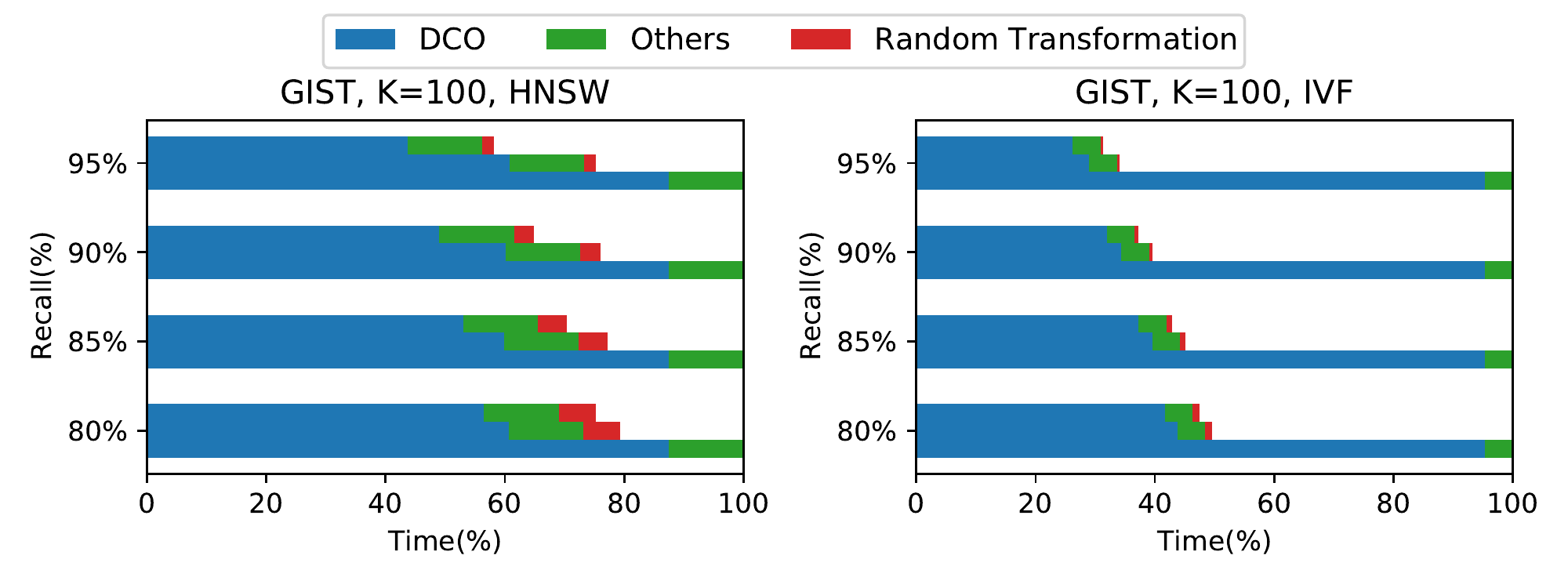}
  \vspace{-8mm}
  \caption{{\JIANYANGREVISION Decomposition of Time Cost (At a particular {\chengr recall}, the three horizontal bars from top to bottom, represent the \texttt{AKNN++}, \texttt{AKNN+} and \texttt{AKNN} algorithms, {\chengr respectively}. The time cost is normalized by the cost of the original \texttt{AKNN} algorithms).}}
  \vspace{-4mm}
  \label{fig:decomposition_simple}
\end{figure}

\smallskip
\noindent
\textbf{{\JIANYANG \texttt{HNSW+} v.s. \texttt{HNSW++}}.}
{\JIANYANG
We further compare \texttt{HNSW+} and \texttt{HNSW++}. According to Figure~\ref{figure:evaluated_dimension HNSW}, we have the following observations.
(1) \texttt{HNSW++} evaluates \emph{fewer dimensions} than \texttt{HNSW+}, 
{\CHENG which largely explains the result that \texttt{HNSW++} runs faster than \texttt{HNSW+}.}
(2) \texttt{HNSW++} reaches \emph{nearly the same recall} as \texttt{HNSW+}, which empirically shows that using approximate distances for graph routing has nearly the same effectiveness as using exact distances. 
(3) The evaluated dimensions (its ratio over those of \texttt{HNSW} in percentage) of \texttt{HNSW+} increases wrt $N_{ef}$ while those of \texttt{HNSW++} decreases wrt $N_{ef}$. This is because \texttt{HNSW+} conducts DCOs with the $N_{ef}^{th}$ NN's distance as the threshold, whose distance increases wrt $N_{ef}$, and thus a larger $N_{ef}$ leads to smaller distance gap $\alpha$, which entails more dimensions for a reliable DCO. 
For \texttt{HNSW++},
as mentioned in \ref{subsec:main result}, when targeting high recall, an AKNN algorithm inevitably generates many low-quality candidates with larger $\alpha$'s, and thus, it needs fewer dimensions for DCOs.
}

\smallskip
\noindent
\textbf{{\JIANYANG \texttt{IVF+} v.s. \texttt{IVF++}}.} \texttt{IVF+} and \texttt{IVF++} differ only in data layout, and thus they have exactly the same accuracy and evaluated dimensions. 



{\JIANYANGREVISION
\subsubsection{\textbf{Results of Time Cost Decomposition}}
\label{subsec: time decompostion}
We note that applying \texttt{ADSampling} entails the extra cost of randomly transforming the data and query vectors. In particular, the cost of transforming the data vectors lies in the index phase and can be amortized by all the subsequent queries on the same database. The transformation of the query vectors is conducted during the query phase when a query comes and its cost can be amortized by all the DCOs involved for answering the same query. 
{\chengr We implement this step {\JIANYANGREVISION (a.k.a, Johnson-Lindenstrauss Transformation~\cite{johnson1984extensions, JL_survey})} as a matrix multiplication operation for simplicity, which takes $O(D^2)$ time. We note that this step can be performed in less time with advanced algorithms~\cite{JL_survey}, e.g., it takes $O(D\log D)$ time with Kac's Walk~\cite{kac_walk}.}
We show the results of time cost decomposition on the dataset GIST. It has the highest dimensionality and correspondingly the largest overhead for {\chengr random transformation}. 
We decompose the time cost in 
Figure~\ref{fig:decomposition_simple}.
We note that the cost of {\chengr random transformation} for \texttt{HNSW+}/\texttt{HNSW++} takes at most 6.18\% of the total cost of the original \texttt{HNSW}. As the accuracy increases, the percentage decreases (e.g., for 95\% recall, it reduces to 2.02\%). For \texttt{IVF+}, the percentage is no greater than 1.11\%.

}


\subsubsection{\textbf{Results for {\CHENGB Evaluating} the Feasibility {\CHENGB of Distance Approximation Techniques} for Reliable DCOs}}
\label{subsubsec:reliable-dco}
\begin{figure}[thb]
  \centering 
   \includegraphics[width=\linewidth]{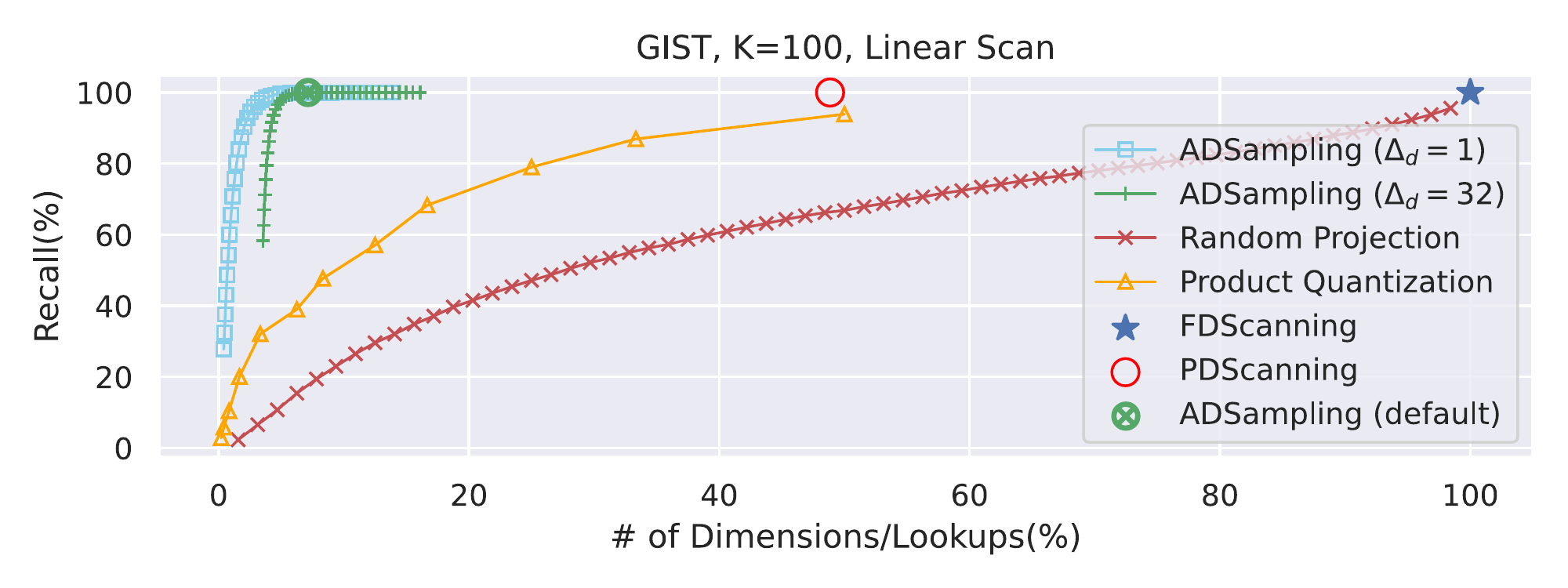}
  \vspace{-8mm}
  \caption{{\JIANYANGREVISION Feasibility for Reliable DCOs (Recall).}}
  \vspace{-4mm}
  \label{figure:feasibility}
\end{figure}

\begin{figure}[thb]
    \centering
    \includegraphics[width=\linewidth]{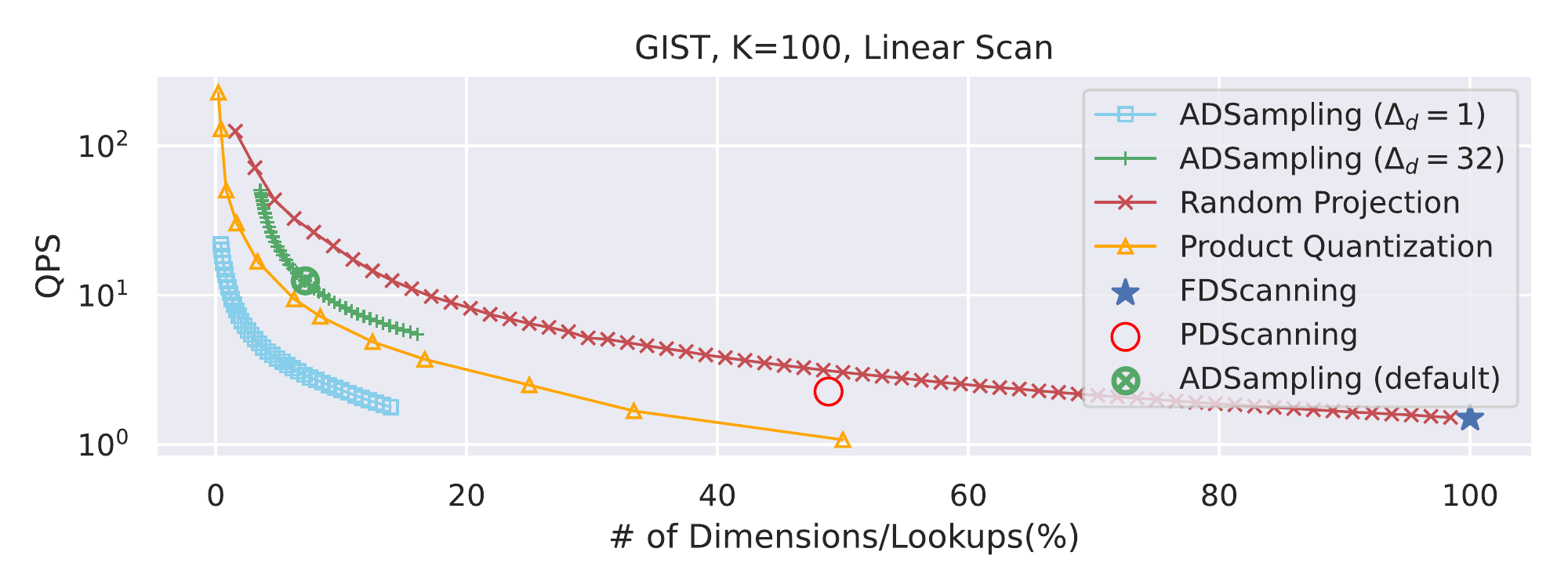}
    \vspace{-8mm}
    \caption{{\JIANYANGREVISION Feasibility for Reliable DCOs (QPS).}}
    \vspace{-4mm}
    \label{fig:qps linear}
\end{figure}

We next study the feasibility of 
{\CHENGB two distance approximation methods, including random projection and product quantization~\cite{jegou2010product} {\JIANYANGREVISION (with the typical setting of 256 centroids per partition~\cite{jegou2010product, learningtohashsurvey})},} for reliable DCOs. {\CHENGB We include the results of \texttt{ADSampling}, {\JIANYANGREVISION \texttt{PDScanning}} and \texttt{FDScanning} for comparison.} To test the best possible recall a method can reach, we conduct this experiment with an exact KNN algorithm, namely linear scan. Specifically, for random projection and product quantization, we scan all the data objects and return the K objects with the minimum approximate distances. For \texttt{ADSampling} and {\JIANYANGREVISION \texttt{PDScanning}}, like \texttt{IVF}, we maintain a KNN set and conduct DCOs for each object sequentially. We plot the recall-number of dimensions/lookups
~\footnote{
For product quantization, it refers to the quantization code size,
where evaluating each code would look up a table in memory (i.e., access memory randomly). For other methods, it refers to the number of dimensions, where evaluating each dimension applies some arithmetic computations. Note that they are not directly comparable because in modern CPUs, the former is much slower than the latter. 

} 
curves in Figure~\ref{figure:feasibility}. For \emph{random projection}, we vary the dimensionality of the projected vectors and observe that (1) it introduces 3.71\% accuracy loss {\CHENGB while} reducing only 1.04\% dimensions (2) {\CHENGB when} reducing half of the dimensionality, its recall is no more than 70\%. For \emph{product quantization}, we vary its quantization code size {(i.e., the number of partitions)} and observe that 
in the best possible case
{\JIANYANGREVISION (i.e., the case of encoding every two dimensions with one code)},
it still introduces 6.2\% accuracy loss. 
{\CHENGC Therefore, neither product quantization nor random projection can achieve reliable DCOs with remarkably better efficiency than \texttt{FDScanning}.}

For \texttt{ADSampling}, we test two settings $\Delta_d=1$ and $\Delta_d=32$. The former represents the best possible recall-dimension tradeoff of our method and the latter represents a practical setting with less frequent hypothesis testing (i.e., our default setting). We plot their curves by varying $\epsilon_0$ from 0.0 to 4.0. We observe that for $\Delta_d=1$, it samples 6.61\% of the total dimensions {\CHENGB while reaching} >99.9\% recall and for $\Delta_d=32$, it samples 7.11\% of the total dimensions {\CHENGB while reaching} > 99.9\% recall. Thus, \texttt{ADSampling} achieves much better recall-dimension tradeoff than \texttt{FDScanning}. 

\begin{figure}[thb]
  \centering 
  \includegraphics[width=\linewidth]{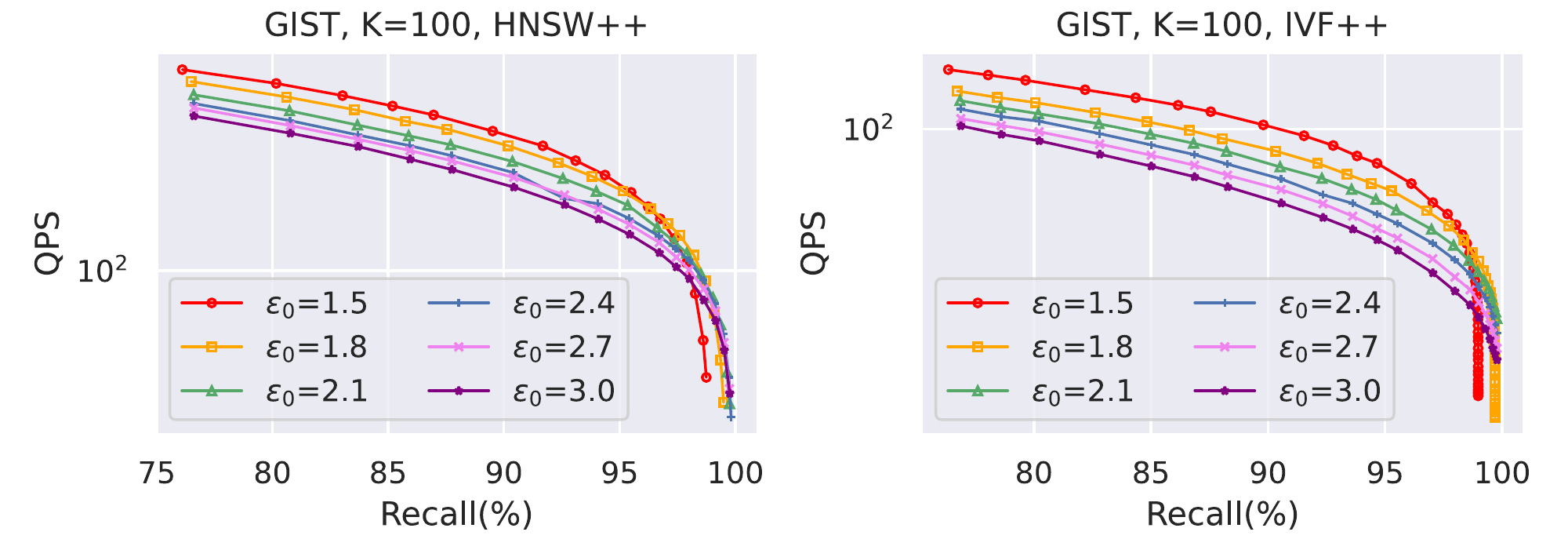}
  \vspace{-8mm}
  \caption{Parameter Study on $\epsilon_0$ of \texttt{AKNN++} Algorithms.}
  \vspace{-4mm}
  \label{figure:parameter}
\end{figure}

\begin{figure}[thb]
  \centering 
  \includegraphics[width=\linewidth]{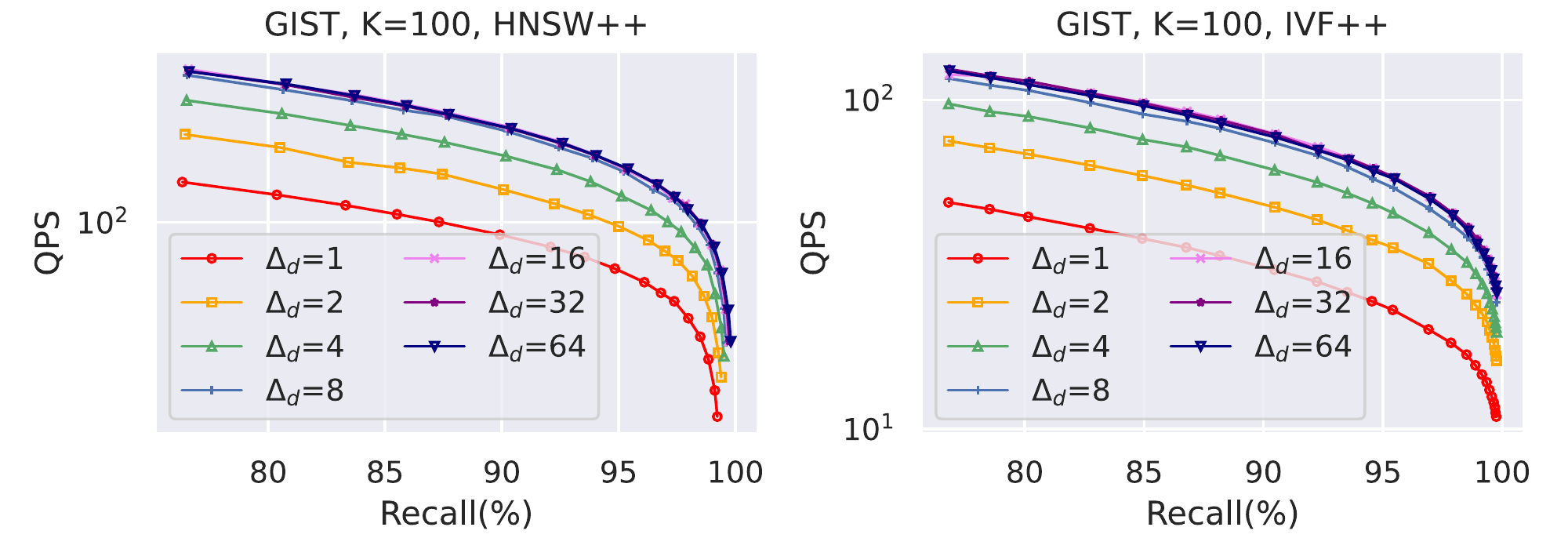}
  \vspace{-8mm}
  \caption{Parameter Study on the $\Delta_d$ of \texttt{AKNN++} Algorithms.}
  \vspace{-4mm}
  \label{figure:parameter gap}
\end{figure}

\begin{figure}[thb]
  \centering 
  \includegraphics[width=\linewidth]{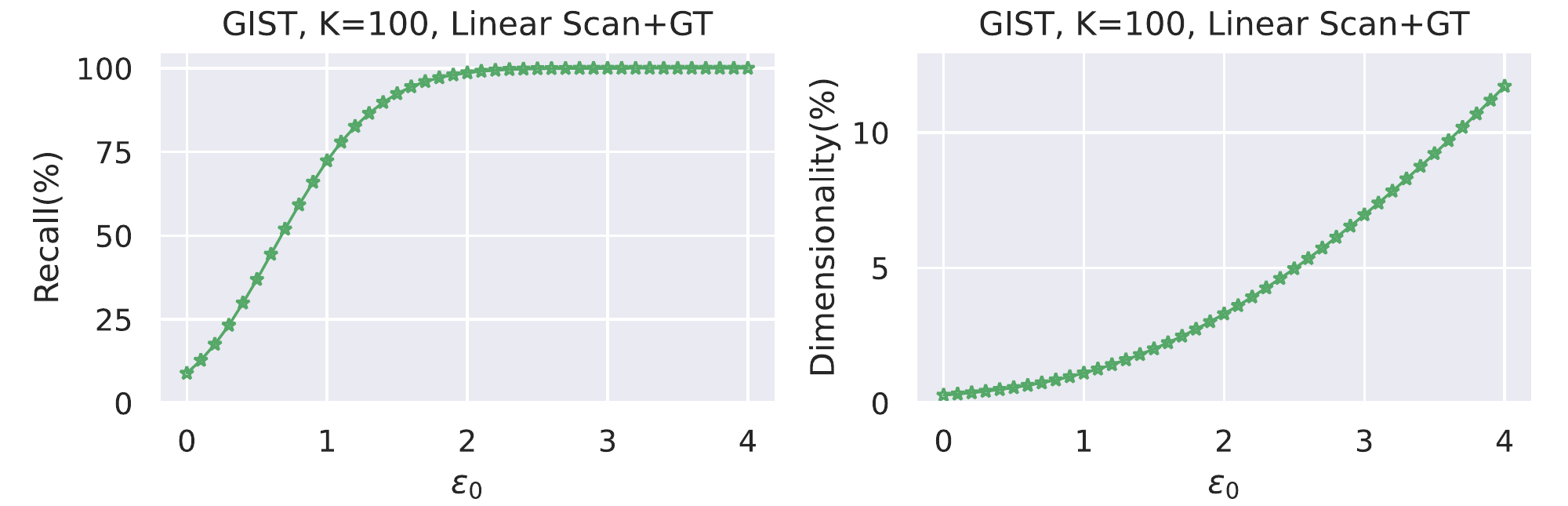}
  \vspace{-8mm}
  \caption{Verification for Theoretical Analysis.}
  \vspace{-4mm}
  \label{figure:verification}
\end{figure}

{\JIANYANGREVISION 
In Figure~\ref{fig:qps linear}, we plot the QPS-dimensions/lookups curves. We have the following observations. 
(1) {\chengr \texttt{ADSampling} (with default setting), which is marked with a green cross within a green circle, has the QPS significantly higher than \texttt{FDScanning} and \texttt{PDScanning}. This is because it exploits only 7.11\% of the total dimensions while achieving a recall over $99.9\%$.} 
(2) At the same dimensionality, random projection has its efficiency better than \texttt{ADSampling}. This is because random projection has fixed dimensionality and can organize the projected vectors sequentially in an array to achieve better cache-friendliness. However, we note that when random projection has the same QPS as \texttt{ADSampling} (default), its recall does not exceed 40\%. 

}

\subsubsection{\textbf{{\CHENG Results of} Parameter Study}}
\label{subsub:parameter}



{\JIANYANG
%

{\CHENG Parameter $\epsilon_0$ is a critical parameter for the \texttt{ADSampling} algorithm since it directly controls the trade-off between the accuracy and the efficiency (recall that a larger $\epsilon_0$ means a smaller significance value for the hypothesis testing, which further implies a more accurate result of the hypothesis testing).}
Figure~\ref{figure:parameter} plots the QPS-recall curves of \texttt{HNSW++} (left panel) and \texttt{IVF++} (right panel) with different $\epsilon_0$. 
In general, we {\CHENGC observe} from the figures that with {\CHENG a larger} $\epsilon_0$, the QPS-recall curves moves lower right. This is because {\CHENG a larger} $\epsilon_0$ leads to better accuracy at the cost of efficiency.
{\JIANYANGB We observe that when $\epsilon_0=2.1$, it introduces little accuracy loss while further increasing $\epsilon_0$ would decrease the efficiency. Thus, in order to improve the efficiency without losing much accuracy, we suggest to set $\epsilon_0$ around $2.1$.}
%
The results for \texttt{HNSW+} and \texttt{IVF+} are similar and omitted due to the page limit.

Figure~\ref{figure:parameter gap} plots the QPS-recall curves of \texttt{HNSW++} and \texttt{IVF++} with different $\Delta_d$. 
We observe that too frequent hypothesis testing (e.g., when $\Delta_d=1$) would do harm to the performance. 
It's worth noting that a small $\Delta_d$ implies that it can terminate sampling immediately when enough information is collected, {\CHENGB but it would} require more arithmetic operations for hypothesis testing. Our empirical study shows that when $\Delta_d=16, 32, 64$, it achieves the best {\CHENGC trade-off.}
}

{\JIANYANG
\smallskip
\noindent
\subsubsection{\textbf{{\CHENG Results for Verifying Theoretical Results}}}
\label{subsubsec:theoretical results}

We further empirically verify Lemma~\ref{theorem:ADSampling accuracy} and ~\ref{theorem:ADSampling efficiency}. 
As a verification study, the experimental setting is different. 
To eliminate the accuracy loss caused by AKNN algorithms, we conduct the verification study based on linear scan, which itself is an exact KNN algorithm.
{\JIANYANGB  Note that in KNN query processing, the result of the former DCOs can affect the distance thresholds of the latter DCOs, which introduces some bias into a verification study. To eliminate it, we provide a fixed distance threshold (the exact distance of the ground truth Kth NN) to make the DCOs independent with each other.}
To test the actual needed dimensionality, we set $\Delta_d=1$.
{\JIANYANGB With this setting, the recall represents the proportion of successful DCOs for positive objects (recall that those for negative objects will never fail), and thus, it empirically reflects the success probability of a single {\CHENGC DCO with} \texttt{ADSampling}.}
The left panel of Figure~\ref{figure:verification} shows that the failure probability indeed decays following a quadratic-exponential trend and reaches near 100\% accuracy around $\epsilon_0=2$. The right panel shows that {\CHENG the number of evaluated dimensions} increases following a quadratic trend, which is slow when $\epsilon_0$ is small (when $\epsilon_0=2$, the total {\CHENG number of} evaluated dimensions {\CHENG is} around $3\%$ of {\CHENG that of} the plain \texttt{FDScanning}). 

}

\section{Related Work}
\label{sec:related work}

\noindent
\textbf{Approximate K Nearest Neighbor Search.} 
Existing AKNN algorithms can be categorized into four types: 
{\JIANYANGREVISION (1) graph-based~\cite{malkov2018efficient, NSW, li2019approximate, fu2019fast, fu2021high, SISAP_graph}}, (2) quantization-based~\cite{jegou2010product, ge2013optimized, guo2020accelerating, ITQ, additivePQ, imi}, (3) {\JIANYANGREVISION tree-based ~\cite{muja2014scalable, dasgupta2008random, ram2019revisiting, beygelzimer2006cover, reviewer_M_tree}} and (4) hashing-based~\cite{indyk1998approximate, datar2004locality, c2lsh, tao2010efficient, huang2015query, sun2014srs, lu2020vhp, zheng2020pm, james_cheng}.
In particular, graph-based methods show superior performance for in-memory AKNN query. 
Quantization-based methods are powerful when memory is limited. Hashing-based methods provide rigorous theoretical guarantee. 
We refer readers to recent tutorials~\cite{tutorialThemis, tutorialXiao}, {\JIANYANGREVISION reviews and benchmarks~\cite{li2019approximate, annbenchmark, SISAP_benchmark, reviewer_paper, graphbenchmark}} for details. 
{\JIANYANG
There are also plentiful {\CHENG studies, which apply}
machine learning (ML) to accelerate AKNN~\cite{learning2route2019ml, reinforcement2route, adaptive2020ml, Dong2020Learning}. \cite{learning2route2019ml, reinforcement2route} apply reinforcement learning in graph routing which substitutes greedy beam search. \cite{adaptive2020ml} learns to early terminate searching. \cite{Dong2020Learning} uses ML to construct an index structure. Note that all above methods apply ML for candidate generation. 
{\CHENG These ML-based methods are orthogonal to our techniques} and our techniques can {\JIANYANGB help them with} 
finding KNNs among the generated candidates.
}

\smallskip
\noindent
\textbf{Random Projection {\CHENGB for AKNN}.} 
{\CHENGB While random projection can hardly be used for reliable DCOs during the phase of re-ranking candidates of KNNs as explained and verified earlier, it has been} widely applied in LSH~\cite{indyk1998approximate, datar2004locality, c2lsh, tao2010efficient, sun2014srs, huang2015query, lu2020vhp} and random partition/projection tree~\cite{ram2019revisiting, dasgupta2008random} {\CHENGB during  the phase of generating candidates of} KNNs. 
{\CHENGB Our study differs from these studies in (1) we project different objects to vectors with different dimensions flexibly while these studies project all objects to vectors with equal dimensions; (2) we set the number of dimensions of a projected vector for an object automatically based on its DCO via hypothesis testing while these studies need to set the number with manual efforts; and (3) we use the projected vectors (in DCOs) during the phase of 
{\JIANYANGB finding out KNNs from the generated}
candidates while these studies use the projected vectors during the phase of generating candidates. Therefore, these studies are orthogonal to our study.}

{\JIANYANGREVISION 
\smallskip
\noindent
\textbf{Dimension Sampling {\CHENGB for AKNN}.} 
We notice that a MAB (multi-armed bandit)-oriented approach~\cite{dimension_sampling_aistat} also applies dimension sampling and claims logarithmic complexity. Our study is different from \cite{dimension_sampling_aistat} in the following aspects. Problem-wise, \cite{dimension_sampling_aistat} targets the AKNN problem itself and aims to find a superset of the set containing the KNNs. It is non-trivial to adapt \cite{dimension_sampling_aistat} to DCOs (the focus of our paper). Theory-wise, \cite{dimension_sampling_aistat}'s logarithmic complexity relies on some strong assumptions on the data (which may not hold in practice) while ours relies on no assumptions. Technique-wise, (1) \cite{dimension_sampling_aistat} samples the original vectors directly while ours first randomly transforms the vectors and then samples the transformed vectors. Our way has the advantage that the error bound of an approximate distance is based on the concentration inequality of random projection and does not rely on any assumptions as \cite{dimension_sampling_aistat} does; and (2) \cite{dimension_sampling_aistat} uses some lower/upper bounds to determine the number of sampled dimensions while ours uses sequential hypothesis testing. Our way has no false positives while \cite{dimension_sampling_aistat} has both false positives and false negatives. In summary, \cite{dimension_sampling_aistat} and our work only share a high-level idea of dimension sampling and differ in many aspects including problem, theory and technique.
}
\section{Conclusion and Discussion}
\label{sec:conclusion}


We identify the distance comparison operation which dominates the time cost of {\CHENGB nearly all} AKNN algorithms and demonstrate opportunities to improve its efficiency. 
We propose a new randomized algorithm for the DCO which runs in logarithmic time wrt $D$ {\CHENGB in most cases} and succeeds with high probability. Based on it, we further develop one {\chengf generic} and two algorithm-specific techniques for AKNN algorithms. 
Our experiments show that the enhanced AKNN algorithms outperform the original ones consistently. 
We also provide rigorous theoretical analysis for all our techniques. 

We would like to highlight the following extensions and applications of our techniques.
(1) Our techniques can be trivially extended to two other {\chengf widely}-adopted similarity metrics, {\chengf namely} cosine similarity and inner product, via simple transformation. 
Specifically, the cosine-based similarity search on some given data and query vectors is equivalent to the Euclidean nearest neighbor search on their normalized data and query vectors where \texttt{ADSampling} is applicable.  
The inner product comparison of whether $\left< \mathbf{o}, \mathbf{q}  \right> \ge r$ can be reduced to the DCO of whether $\left\| \mathbf{o} / \|\mathbf{o} \|- \mathbf{q} / \|\mathbf{q} \| \right\|  \le \sqrt {2 - 2r / (\| \mathbf{o}\| \cdot \|\mathbf{q}\| )} $, where the distance threshold equals to $\sqrt {2 - 2r / (\| \mathbf{o}\| \cdot \|\mathbf{q}\| )} $~\footnote{It can be verified as follows: $\left< \mathbf{o}, \mathbf{q}  \right> \ge r \iff  \left< \mathbf{o} / \|\mathbf{o} \|, \mathbf{q} / \|\mathbf{q} \|\right> \ge r / (\| \mathbf{o}\| \cdot \|\mathbf{q}\| ) $
$\iff \| \mathbf{o} / \|\mathbf{o} \| \|^2 - 2\left< \mathbf{o} / \|\mathbf{o} \|, \mathbf{q} / \|\mathbf{q} \|\right> + \| \mathbf{q} / \|\mathbf{q} \|\|^2 \le 2 - 2r / (\| \mathbf{o}\| \cdot \|\mathbf{q}\| ) \\ \iff \left\| \mathbf{o} / \|\mathbf{o} \|- \mathbf{q} / \|\mathbf{q} \| \right\| ^2 \le 2 - 2r / (\| \mathbf{o}\| \cdot \|\mathbf{q}\| )$.}.
(2) DCOs are also ubiquitous in many other {\chengf tasks} of high-dimensional data management and analysis such as clustering~\cite{kmeans} and outlier detection~\cite{outlier_detection}. Our techniques have the potential to accelerate existing methods for those {\chengf tasks} by reducing the cost of DCOs while keeping the accuracy. 



\section{Acknowledgements}
This research is supported by the Ministry of Education, Singapore, under its Academic Research Fund (Tier 2 Awards MOE-T2EP20220-0011 and MOE-T2EP20221-0013). Any opinions, findings and conclusions or recommendations expressed in this material are those of the author(s) and do not reflect the views of the Ministry of Education, Singapore. We would like to thank Yi Li (SPMS, NTU) for answering many questions about high-dimensional probability and the anonymous reviewers for providing constructive feedback and valuable suggestions.

\bibliographystyle{ACM-Reference-Format}
\bibliography{sample-base}


\begin{thebibliography}{62}


\ifx \showCODEN    \undefined \def \showCODEN     #1{\unskip}     \fi
\ifx \showDOI      \undefined \def \showDOI       #1{#1}\fi
\ifx \showISBNx    \undefined \def \showISBNx     #1{\unskip}     \fi
\ifx \showISBNxiii \undefined \def \showISBNxiii  #1{\unskip}     \fi
\ifx \showISSN     \undefined \def \showISSN      #1{\unskip}     \fi
\ifx \showLCCN     \undefined \def \showLCCN      #1{\unskip}     \fi
\ifx \shownote     \undefined \def \shownote      #1{#1}          \fi
\ifx \showarticletitle \undefined \def \showarticletitle #1{#1}   \fi
\ifx \showURL      \undefined \def \showURL       {\relax}        \fi
\providecommand\bibfield[2]{#2}
\providecommand\bibinfo[2]{#2}
\providecommand\natexlab[1]{#1}
\providecommand\showeprint[2][]{arXiv:#2}

\bibitem[Ailon and Chazelle(2009)]%
        {fftjlt}
\bibfield{author}{\bibinfo{person}{Nir Ailon} {and} \bibinfo{person}{Bernard
  Chazelle}.} \bibinfo{year}{2009}\natexlab{}.
\newblock \showarticletitle{The Fast Johnson–Lindenstrauss Transform and
  Approximate Nearest Neighbors}.
\newblock \bibinfo{journal}{\emph{SIAM J. Comput.}} \bibinfo{volume}{39},
  \bibinfo{number}{1} (\bibinfo{year}{2009}), \bibinfo{pages}{302--322}.
\newblock
\urldef\tempurl%
\url{https://doi.org/10.1137/060673096}
\showDOI{\tempurl}
\showeprint{https://doi.org/10.1137/060673096}


\bibitem[Annoy(2016)]%
        {annoy}
\bibfield{author}{\bibinfo{person}{Annoy}.} \bibinfo{year}{2016}\natexlab{}.
\newblock \bibinfo{title}{Annoy}.
\newblock \bibinfo{howpublished}{\url{https://github.com/spotify/annoy}}.
\newblock


\bibitem[Aum\"{u}ller et~al\mbox{.}(2020)]%
        {annbenchmark}
\bibfield{author}{\bibinfo{person}{Martin Aum\"{u}ller}, \bibinfo{person}{Erik
  Bernhardsson}, {and} \bibinfo{person}{Alexander Faithfull}.}
  \bibinfo{year}{2020}\natexlab{}.
\newblock \showarticletitle{ANN-Benchmarks: A Benchmarking Tool for Approximate
  Nearest Neighbor Algorithms}.
\newblock \bibinfo{journal}{\emph{Inf. Syst.}} \bibinfo{volume}{87},
  \bibinfo{number}{C} (\bibinfo{date}{jan} \bibinfo{year}{2020}),
  \bibinfo{numpages}{13}~pages.
\newblock
\showISSN{0306-4379}
\urldef\tempurl%
\url{https://doi.org/10.1016/j.is.2019.02.006}
\showDOI{\tempurl}


\bibitem[Babenko and Lempitsky(2014)]%
        {additivePQ}
\bibfield{author}{\bibinfo{person}{Artem Babenko} {and} \bibinfo{person}{Victor
  Lempitsky}.} \bibinfo{year}{2014}\natexlab{}.
\newblock \showarticletitle{Additive Quantization for Extreme Vector
  Compression}. In \bibinfo{booktitle}{\emph{2014 IEEE Conference on Computer
  Vision and Pattern Recognition}}. \bibinfo{pages}{931--938}.
\newblock
\urldef\tempurl%
\url{https://doi.org/10.1109/CVPR.2014.124}
\showDOI{\tempurl}


\bibitem[Babenko and Lempitsky(2015)]%
        {imi}
\bibfield{author}{\bibinfo{person}{Artem Babenko} {and} \bibinfo{person}{Victor
  Lempitsky}.} \bibinfo{year}{2015}\natexlab{}.
\newblock \showarticletitle{The Inverted Multi-Index}.
\newblock \bibinfo{journal}{\emph{IEEE Transactions on Pattern Analysis and
  Machine Intelligence}} \bibinfo{volume}{37}, \bibinfo{number}{6}
  (\bibinfo{year}{2015}), \bibinfo{pages}{1247--1260}.
\newblock
\urldef\tempurl%
\url{https://doi.org/10.1109/TPAMI.2014.2361319}
\showDOI{\tempurl}


\bibitem[Baranchuk et~al\mbox{.}(2019)]%
        {learning2route2019ml}
\bibfield{author}{\bibinfo{person}{Dmitry Baranchuk}, \bibinfo{person}{Dmitry
  Persiyanov}, \bibinfo{person}{Anton Sinitsin}, {and} \bibinfo{person}{Artem
  Babenko}.} \bibinfo{year}{2019}\natexlab{}.
\newblock \showarticletitle{Learning to Route in Similarity Graphs}. In
  \bibinfo{booktitle}{\emph{Proceedings of the 36th International Conference on
  Machine Learning}} \emph{(\bibinfo{series}{Proceedings of Machine Learning
  Research}, Vol.~\bibinfo{volume}{97})},
  \bibfield{editor}{\bibinfo{person}{Kamalika Chaudhuri} {and}
  \bibinfo{person}{Ruslan Salakhutdinov}} (Eds.). \bibinfo{publisher}{PMLR},
  \bibinfo{pages}{475--484}.
\newblock
\urldef\tempurl%
\url{https://proceedings.mlr.press/v97/baranchuk19a.html}
\showURL{%
\tempurl}


\bibitem[Bay and Schwabacher(2003)]%
        {outlier_detection}
\bibfield{author}{\bibinfo{person}{Stephen~D. Bay} {and} \bibinfo{person}{Mark
  Schwabacher}.} \bibinfo{year}{2003}\natexlab{}.
\newblock \showarticletitle{Mining Distance-Based Outliers in near Linear Time
  with Randomization and a Simple Pruning Rule}. In
  \bibinfo{booktitle}{\emph{Proceedings of the Ninth ACM SIGKDD International
  Conference on Knowledge Discovery and Data Mining}} (Washington, D.C.)
  \emph{(\bibinfo{series}{KDD '03})}. \bibinfo{publisher}{Association for
  Computing Machinery}, \bibinfo{address}{New York, NY, USA},
  \bibinfo{pages}{29–38}.
\newblock
\showISBNx{1581137370}
\urldef\tempurl%
\url{https://doi.org/10.1145/956750.956758}
\showDOI{\tempurl}


\bibitem[Beygelzimer et~al\mbox{.}(2006)]%
        {beygelzimer2006cover}
\bibfield{author}{\bibinfo{person}{Alina Beygelzimer}, \bibinfo{person}{Sham
  Kakade}, {and} \bibinfo{person}{John Langford}.}
  \bibinfo{year}{2006}\natexlab{}.
\newblock \showarticletitle{Cover trees for nearest neighbor}. In
  \bibinfo{booktitle}{\emph{Proceedings of the 23rd international conference on
  Machine learning}}. \bibinfo{pages}{97--104}.
\newblock


\bibitem[Boytsov and Naidan(2013)]%
        {SISAP_benchmark}
\bibfield{author}{\bibinfo{person}{Leonid Boytsov} {and}
  \bibinfo{person}{Bilegsaikhan Naidan}.} \bibinfo{year}{2013}\natexlab{}.
\newblock \showarticletitle{Engineering Efficient and Effective Non-metric
  Space Library}. In \bibinfo{booktitle}{\emph{Similarity Search and
  Applications}}, \bibfield{editor}{\bibinfo{person}{Nieves Brisaboa},
  \bibinfo{person}{Oscar Pedreira}, {and} \bibinfo{person}{Pavel Zezula}}
  (Eds.). \bibinfo{publisher}{Springer Berlin Heidelberg},
  \bibinfo{address}{Berlin, Heidelberg}, \bibinfo{pages}{280--293}.
\newblock
\showISBNx{978-3-642-41062-8}


\bibitem[Chakrabarti and Parthasarathy(2015)]%
        {sequentialLSH}
\bibfield{author}{\bibinfo{person}{Aniket Chakrabarti} {and}
  \bibinfo{person}{Srinivasan Parthasarathy}.} \bibinfo{year}{2015}\natexlab{}.
\newblock \showarticletitle{Sequential Hypothesis Tests for Adaptive Locality
  Sensitive Hashing}. In \bibinfo{booktitle}{\emph{Proceedings of the 24th
  International Conference on World Wide Web}} (Florence, Italy)
  \emph{(\bibinfo{series}{WWW '15})}. \bibinfo{publisher}{International World
  Wide Web Conferences Steering Committee}, \bibinfo{address}{Republic and
  Canton of Geneva, CHE}, \bibinfo{pages}{162–172}.
\newblock
\showISBNx{9781450334693}
\urldef\tempurl%
\url{https://doi.org/10.1145/2736277.2741665}
\showDOI{\tempurl}


\bibitem[Choromanski et~al\mbox{.}(2017)]%
        {choromanski2017unreasonable}
\bibfield{author}{\bibinfo{person}{Krzysztof~M Choromanski},
  \bibinfo{person}{Mark Rowland}, {and} \bibinfo{person}{Adrian Weller}.}
  \bibinfo{year}{2017}\natexlab{}.
\newblock \showarticletitle{The unreasonable effectiveness of structured random
  orthogonal embeddings}.
\newblock \bibinfo{journal}{\emph{Advances in neural information processing
  systems}}  \bibinfo{volume}{30} (\bibinfo{year}{2017}).
\newblock


\bibitem[Ciaccia et~al\mbox{.}(1997)]%
        {reviewer_M_tree}
\bibfield{author}{\bibinfo{person}{Paolo Ciaccia}, \bibinfo{person}{Marco
  Patella}, {and} \bibinfo{person}{Pavel Zezula}.}
  \bibinfo{year}{1997}\natexlab{}.
\newblock \showarticletitle{M-Tree: An Efficient Access Method for Similarity
  Search in Metric Spaces}. In \bibinfo{booktitle}{\emph{Proceedings of the
  23rd International Conference on Very Large Data Bases}}
  \emph{(\bibinfo{series}{VLDB '97})}. \bibinfo{publisher}{Morgan Kaufmann
  Publishers Inc.}, \bibinfo{address}{San Francisco, CA, USA},
  \bibinfo{pages}{426–435}.
\newblock
\showISBNx{1558604707}


\bibitem[Cover and Hart(1967)]%
        {KNNClassficiation}
\bibfield{author}{\bibinfo{person}{T. Cover} {and} \bibinfo{person}{P. Hart}.}
  \bibinfo{year}{1967}\natexlab{}.
\newblock \showarticletitle{Nearest neighbor pattern classification}.
\newblock \bibinfo{journal}{\emph{IEEE Transactions on Information Theory}}
  \bibinfo{volume}{13}, \bibinfo{number}{1} (\bibinfo{year}{1967}),
  \bibinfo{pages}{21--27}.
\newblock
\urldef\tempurl%
\url{https://doi.org/10.1109/TIT.1967.1053964}
\showDOI{\tempurl}


\bibitem[Dasgupta and Freund(2008)]%
        {dasgupta2008random}
\bibfield{author}{\bibinfo{person}{Sanjoy Dasgupta} {and} \bibinfo{person}{Yoav
  Freund}.} \bibinfo{year}{2008}\natexlab{}.
\newblock \showarticletitle{Random projection trees and low dimensional
  manifolds}. In \bibinfo{booktitle}{\emph{Proceedings of the fortieth annual
  ACM symposium on Theory of computing}}. \bibinfo{pages}{537--546}.
\newblock


\bibitem[Datar et~al\mbox{.}(2004)]%
        {datar2004locality}
\bibfield{author}{\bibinfo{person}{Mayur Datar}, \bibinfo{person}{Nicole
  Immorlica}, \bibinfo{person}{Piotr Indyk}, {and} \bibinfo{person}{Vahab~S
  Mirrokni}.} \bibinfo{year}{2004}\natexlab{}.
\newblock \showarticletitle{Locality-sensitive hashing scheme based on p-stable
  distributions}. In \bibinfo{booktitle}{\emph{Proceedings of the twentieth
  annual symposium on Computational geometry}}. \bibinfo{pages}{253--262}.
\newblock


\bibitem[Dong et~al\mbox{.}(2020)]%
        {Dong2020Learning}
\bibfield{author}{\bibinfo{person}{Yihe Dong}, \bibinfo{person}{Piotr Indyk},
  \bibinfo{person}{Ilya Razenshteyn}, {and} \bibinfo{person}{Tal Wagner}.}
  \bibinfo{year}{2020}\natexlab{}.
\newblock \showarticletitle{Learning Space Partitions for Nearest Neighbor
  Search}. In \bibinfo{booktitle}{\emph{International Conference on Learning
  Representations}}.
\newblock
\urldef\tempurl%
\url{https://openreview.net/forum?id=rkenmREFDr}
\showURL{%
\tempurl}


\bibitem[Echihabi et~al\mbox{.}(2021)]%
        {tutorialThemis}
\bibfield{author}{\bibinfo{person}{Karima Echihabi}, \bibinfo{person}{Kostas
  Zoumpatianos}, {and} \bibinfo{person}{Themis Palpanas}.}
  \bibinfo{year}{2021}\natexlab{}.
\newblock \showarticletitle{New Trends in High-D Vector Similarity Search:
  Al-Driven, Progressive, and Distributed}.
\newblock \bibinfo{journal}{\emph{Proc. VLDB Endow.}} \bibinfo{volume}{14},
  \bibinfo{number}{12} (\bibinfo{date}{jul} \bibinfo{year}{2021}),
  \bibinfo{pages}{3198–3201}.
\newblock
\showISSN{2150-8097}
\urldef\tempurl%
\url{https://doi.org/10.14778/3476311.3476407}
\showDOI{\tempurl}


\bibitem[Feng et~al\mbox{.}(2022)]%
        {reinforcement2route}
\bibfield{author}{\bibinfo{person}{Chao Feng}, \bibinfo{person}{Defu Lian},
  \bibinfo{person}{Xiting Wang}, \bibinfo{person}{Zheng Liu},
  \bibinfo{person}{Xing Xie}, {and} \bibinfo{person}{Enhong Chen}.}
  \bibinfo{year}{2022}\natexlab{}.
\newblock \showarticletitle{Reinforcement Routing on Proximity Graph for
  Efficient Recommendation}.
\newblock \bibinfo{journal}{\emph{ACM Trans. Inf. Syst.}} (\bibinfo{date}{jan}
  \bibinfo{year}{2022}).
\newblock
\showISSN{1046-8188}
\urldef\tempurl%
\url{https://doi.org/10.1145/3512767}
\showDOI{\tempurl}
\newblock
\shownote{Just Accepted}.


\bibitem[Freksen(2021)]%
        {JL_survey}
\bibfield{author}{\bibinfo{person}{Casper~Benjamin Freksen}.}
  \bibinfo{year}{2021}\natexlab{}.
\newblock \showarticletitle{An Introduction to Johnson-Lindenstrauss
  Transforms}.
\newblock \bibinfo{journal}{\emph{CoRR}}  \bibinfo{volume}{abs/2103.00564}
  (\bibinfo{year}{2021}).
\newblock
\showeprint[arXiv]{2103.00564}
\urldef\tempurl%
\url{https://arxiv.org/abs/2103.00564}
\showURL{%
\tempurl}


\bibitem[Fu et~al\mbox{.}(2021)]%
        {fu2021high}
\bibfield{author}{\bibinfo{person}{Cong Fu}, \bibinfo{person}{Changxu Wang},
  {and} \bibinfo{person}{Deng Cai}.} \bibinfo{year}{2021}\natexlab{}.
\newblock \showarticletitle{High dimensional similarity search with satellite
  system graph: Efficiency, scalability, and unindexed query compatibility}.
\newblock \bibinfo{journal}{\emph{IEEE Transactions on Pattern Analysis and
  Machine Intelligence}} (\bibinfo{year}{2021}).
\newblock


\bibitem[Fu et~al\mbox{.}(2019)]%
        {fu2019fast}
\bibfield{author}{\bibinfo{person}{Cong Fu}, \bibinfo{person}{Chao Xiang},
  \bibinfo{person}{Changxu Wang}, {and} \bibinfo{person}{Deng Cai}.}
  \bibinfo{year}{2019}\natexlab{}.
\newblock \showarticletitle{Fast Approximate Nearest Neighbor Search with the
  Navigating Spreading-out Graph}.
\newblock \bibinfo{journal}{\emph{Proc. VLDB Endow.}} \bibinfo{volume}{12},
  \bibinfo{number}{5} (\bibinfo{date}{jan} \bibinfo{year}{2019}),
  \bibinfo{pages}{461–474}.
\newblock
\showISSN{2150-8097}
\urldef\tempurl%
\url{https://doi.org/10.14778/3303753.3303754}
\showDOI{\tempurl}


\bibitem[Gan et~al\mbox{.}(2012)]%
        {c2lsh}
\bibfield{author}{\bibinfo{person}{Junhao Gan}, \bibinfo{person}{Jianlin Feng},
  \bibinfo{person}{Qiong Fang}, {and} \bibinfo{person}{Wilfred Ng}.}
  \bibinfo{year}{2012}\natexlab{}.
\newblock \showarticletitle{Locality-Sensitive Hashing Scheme Based on Dynamic
  Collision Counting}. In \bibinfo{booktitle}{\emph{Proceedings of the 2012 ACM
  SIGMOD International Conference on Management of Data}} (Scottsdale, Arizona,
  USA) \emph{(\bibinfo{series}{SIGMOD '12})}. \bibinfo{publisher}{Association
  for Computing Machinery}, \bibinfo{address}{New York, NY, USA},
  \bibinfo{pages}{541–552}.
\newblock
\showISBNx{9781450312479}
\urldef\tempurl%
\url{https://doi.org/10.1145/2213836.2213898}
\showDOI{\tempurl}


\bibitem[Ge et~al\mbox{.}(2013)]%
        {ge2013optimized}
\bibfield{author}{\bibinfo{person}{Tiezheng Ge}, \bibinfo{person}{Kaiming He},
  \bibinfo{person}{Qifa Ke}, {and} \bibinfo{person}{Jian Sun}.}
  \bibinfo{year}{2013}\natexlab{}.
\newblock \showarticletitle{Optimized product quantization for approximate
  nearest neighbor search}. In \bibinfo{booktitle}{\emph{Proceedings of the
  IEEE Conference on Computer Vision and Pattern Recognition}}.
  \bibinfo{pages}{2946--2953}.
\newblock


\bibitem[Gong et~al\mbox{.}(2013)]%
        {ITQ}
\bibfield{author}{\bibinfo{person}{Yunchao Gong}, \bibinfo{person}{Svetlana
  Lazebnik}, \bibinfo{person}{Albert Gordo}, {and} \bibinfo{person}{Florent
  Perronnin}.} \bibinfo{year}{2013}\natexlab{}.
\newblock \showarticletitle{Iterative Quantization: A Procrustean Approach to
  Learning Binary Codes for Large-Scale Image Retrieval}.
\newblock \bibinfo{journal}{\emph{IEEE Transactions on Pattern Analysis and
  Machine Intelligence}} \bibinfo{volume}{35}, \bibinfo{number}{12}
  (\bibinfo{year}{2013}), \bibinfo{pages}{2916--2929}.
\newblock
\urldef\tempurl%
\url{https://doi.org/10.1109/TPAMI.2012.193}
\showDOI{\tempurl}


\bibitem[Guennebaud et~al\mbox{.}(2010)]%
        {eigenweb}
\bibfield{author}{\bibinfo{person}{Ga\"{e}l Guennebaud},
  \bibinfo{person}{Beno\^{i}t Jacob}, {et~al\mbox{.}}}
  \bibinfo{year}{2010}\natexlab{}.
\newblock \bibinfo{title}{Eigen v3}.
\newblock \bibinfo{howpublished}{http://eigen.tuxfamily.org}.
\newblock


\bibitem[Guo et~al\mbox{.}(2020)]%
        {guo2020accelerating}
\bibfield{author}{\bibinfo{person}{Ruiqi Guo}, \bibinfo{person}{Philip Sun},
  \bibinfo{person}{Erik Lindgren}, \bibinfo{person}{Quan Geng},
  \bibinfo{person}{David Simcha}, \bibinfo{person}{Felix Chern}, {and}
  \bibinfo{person}{Sanjiv Kumar}.} \bibinfo{year}{2020}\natexlab{}.
\newblock \showarticletitle{Accelerating large-scale inference with anisotropic
  vector quantization}. In \bibinfo{booktitle}{\emph{International Conference
  on Machine Learning}}. PMLR, \bibinfo{pages}{3887--3896}.
\newblock


\bibitem[Huang et~al\mbox{.}(2015)]%
        {huang2015query}
\bibfield{author}{\bibinfo{person}{Qiang Huang}, \bibinfo{person}{Jianlin
  Feng}, \bibinfo{person}{Yikai Zhang}, \bibinfo{person}{Qiong Fang}, {and}
  \bibinfo{person}{Wilfred Ng}.} \bibinfo{year}{2015}\natexlab{}.
\newblock \showarticletitle{Query-aware locality-sensitive hashing for
  approximate nearest neighbor search}.
\newblock \bibinfo{journal}{\emph{Proceedings of the VLDB Endowment}}
  \bibinfo{volume}{9}, \bibinfo{number}{1} (\bibinfo{year}{2015}),
  \bibinfo{pages}{1--12}.
\newblock


\bibitem[Indyk and Motwani(1998)]%
        {indyk1998approximate}
\bibfield{author}{\bibinfo{person}{Piotr Indyk} {and} \bibinfo{person}{Rajeev
  Motwani}.} \bibinfo{year}{1998}\natexlab{}.
\newblock \showarticletitle{Approximate nearest neighbors: towards removing the
  curse of dimensionality}. In \bibinfo{booktitle}{\emph{Proceedings of the
  thirtieth annual ACM symposium on Theory of computing}}.
  \bibinfo{pages}{604--613}.
\newblock


\bibitem[Iwasaki(2016)]%
        {SISAP_graph}
\bibfield{author}{\bibinfo{person}{Masajiro Iwasaki}.}
  \bibinfo{year}{2016}\natexlab{}.
\newblock \showarticletitle{Pruned Bi-directed K-nearest Neighbor Graph for
  Proximity Search}. In \bibinfo{booktitle}{\emph{Similarity Search and
  Applications}}, \bibfield{editor}{\bibinfo{person}{Laurent Amsaleg},
  \bibinfo{person}{Michael~E. Houle}, {and} \bibinfo{person}{Erich Schubert}}
  (Eds.). \bibinfo{publisher}{Springer International Publishing},
  \bibinfo{address}{Cham}, \bibinfo{pages}{20--33}.
\newblock
\showISBNx{978-3-319-46759-7}


\bibitem[Jain et~al\mbox{.}(2022)]%
        {kac_walk}
\bibfield{author}{\bibinfo{person}{Vishesh Jain}, \bibinfo{person}{Natesh~S.
  Pillai}, \bibinfo{person}{Ashwin Sah}, \bibinfo{person}{Mehtaab Sawhney},
  {and} \bibinfo{person}{Aaron Smith}.} \bibinfo{year}{2022}\natexlab{}.
\newblock \showarticletitle{{Fast and memory-optimal dimension reduction using
  Kac’s walk}}.
\newblock \bibinfo{journal}{\emph{The Annals of Applied Probability}}
  \bibinfo{volume}{32}, \bibinfo{number}{5} (\bibinfo{year}{2022}),
  \bibinfo{pages}{4038 -- 4064}.
\newblock
\urldef\tempurl%
\url{https://doi.org/10.1214/22-AAP1784}
\showDOI{\tempurl}


\bibitem[Jayaram~Subramanya et~al\mbox{.}(2019)]%
        {diskann}
\bibfield{author}{\bibinfo{person}{Suhas Jayaram~Subramanya},
  \bibinfo{person}{Fnu Devvrit}, \bibinfo{person}{Harsha~Vardhan Simhadri},
  \bibinfo{person}{Ravishankar Krishnawamy}, {and} \bibinfo{person}{Rohan
  Kadekodi}.} \bibinfo{year}{2019}\natexlab{}.
\newblock \showarticletitle{DiskANN: Fast Accurate Billion-point Nearest
  Neighbor Search on a Single Node}. In \bibinfo{booktitle}{\emph{Advances in
  Neural Information Processing Systems}},
  \bibfield{editor}{\bibinfo{person}{H.~Wallach},
  \bibinfo{person}{H.~Larochelle}, \bibinfo{person}{A.~Beygelzimer},
  \bibinfo{person}{F.~d\textquotesingle Alch\'{e}-Buc},
  \bibinfo{person}{E.~Fox}, {and} \bibinfo{person}{R.~Garnett}} (Eds.),
  Vol.~\bibinfo{volume}{32}. \bibinfo{publisher}{Curran Associates, Inc.}
\newblock
\urldef\tempurl%
\url{https://proceedings.neurips.cc/paper/2019/file/09853c7fb1d3f8ee67a61b6bf4a7f8e6-Paper.pdf}
\showURL{%
\tempurl}


\bibitem[Jegou et~al\mbox{.}(2010)]%
        {jegou2010product}
\bibfield{author}{\bibinfo{person}{Herve Jegou}, \bibinfo{person}{Matthijs
  Douze}, {and} \bibinfo{person}{Cordelia Schmid}.}
  \bibinfo{year}{2010}\natexlab{}.
\newblock \showarticletitle{Product quantization for nearest neighbor search}.
\newblock \bibinfo{journal}{\emph{IEEE transactions on pattern analysis and
  machine intelligence}} \bibinfo{volume}{33}, \bibinfo{number}{1}
  (\bibinfo{year}{2010}), \bibinfo{pages}{117--128}.
\newblock


\bibitem[Johnson et~al\mbox{.}(2019)]%
        {johnson2019billion}
\bibfield{author}{\bibinfo{person}{Jeff Johnson}, \bibinfo{person}{Matthijs
  Douze}, {and} \bibinfo{person}{Herv{\'e} J{\'e}gou}.}
  \bibinfo{year}{2019}\natexlab{}.
\newblock \showarticletitle{Billion-scale similarity search with {GPUs}}.
\newblock \bibinfo{journal}{\emph{IEEE Transactions on Big Data}}
  \bibinfo{volume}{7}, \bibinfo{number}{3} (\bibinfo{year}{2019}),
  \bibinfo{pages}{535--547}.
\newblock


\bibitem[Johnson and Lindenstrauss(1984)]%
        {johnson1984extensions}
\bibfield{author}{\bibinfo{person}{William~B Johnson} {and}
  \bibinfo{person}{Joram Lindenstrauss}.} \bibinfo{year}{1984}\natexlab{}.
\newblock \showarticletitle{Extensions of Lipschitz mappings into a Hilbert
  space 26}.
\newblock \bibinfo{journal}{\emph{Contemporary mathematics}}
  \bibinfo{volume}{26} (\bibinfo{year}{1984}), \bibinfo{pages}{28}.
\newblock


\bibitem[Jégou et~al\mbox{.}(2010)]%
        {randomortho}
\bibfield{author}{\bibinfo{person}{Hervé Jégou}, \bibinfo{person}{Matthijs
  Douze}, \bibinfo{person}{Cordelia Schmid}, {and} \bibinfo{person}{Patrick
  Pérez}.} \bibinfo{year}{2010}\natexlab{}.
\newblock \showarticletitle{Aggregating local descriptors into a compact image
  representation}. In \bibinfo{booktitle}{\emph{2010 IEEE Computer Society
  Conference on Computer Vision and Pattern Recognition}}.
  \bibinfo{pages}{3304--3311}.
\newblock
\urldef\tempurl%
\url{https://doi.org/10.1109/CVPR.2010.5540039}
\showDOI{\tempurl}


\bibitem[Kane and Nelson(2014)]%
        {blockjlt}
\bibfield{author}{\bibinfo{person}{Daniel~M. Kane} {and}
  \bibinfo{person}{Jelani Nelson}.} \bibinfo{year}{2014}\natexlab{}.
\newblock \showarticletitle{Sparser Johnson-Lindenstrauss Transforms}.
\newblock \bibinfo{journal}{\emph{J. ACM}} \bibinfo{volume}{61},
  \bibinfo{number}{1}, Article \bibinfo{articleno}{4} (\bibinfo{date}{jan}
  \bibinfo{year}{2014}), \bibinfo{numpages}{23}~pages.
\newblock
\showISSN{0004-5411}
\urldef\tempurl%
\url{https://doi.org/10.1145/2559902}
\showDOI{\tempurl}


\bibitem[LeJeune et~al\mbox{.}(2019)]%
        {dimension_sampling_aistat}
\bibfield{author}{\bibinfo{person}{Daniel LeJeune}, \bibinfo{person}{Reinhard
  Heckel}, {and} \bibinfo{person}{Richard Baraniuk}.}
  \bibinfo{year}{2019}\natexlab{}.
\newblock \showarticletitle{Adaptive Estimation for Approximate
  $k$-Nearest-Neighbor Computations}. In \bibinfo{booktitle}{\emph{Proceedings
  of the Twenty-Second International Conference on Artificial Intelligence and
  Statistics}} \emph{(\bibinfo{series}{Proceedings of Machine Learning
  Research}, Vol.~\bibinfo{volume}{89})},
  \bibfield{editor}{\bibinfo{person}{Kamalika Chaudhuri} {and}
  \bibinfo{person}{Masashi Sugiyama}} (Eds.). \bibinfo{publisher}{PMLR},
  \bibinfo{pages}{3099--3107}.
\newblock
\urldef\tempurl%
\url{https://proceedings.mlr.press/v89/lejeune19a.html}
\showURL{%
\tempurl}


\bibitem[Li et~al\mbox{.}(2020)]%
        {adaptive2020ml}
\bibfield{author}{\bibinfo{person}{Conglong Li}, \bibinfo{person}{Minjia
  Zhang}, \bibinfo{person}{David~G. Andersen}, {and} \bibinfo{person}{Yuxiong
  He}.} \bibinfo{year}{2020}\natexlab{}.
\newblock \showarticletitle{Improving Approximate Nearest Neighbor Search
  through Learned Adaptive Early Termination}. In
  \bibinfo{booktitle}{\emph{Proceedings of the 2020 ACM SIGMOD International
  Conference on Management of Data}} (Portland, OR, USA)
  \emph{(\bibinfo{series}{SIGMOD '20})}. \bibinfo{publisher}{Association for
  Computing Machinery}, \bibinfo{address}{New York, NY, USA},
  \bibinfo{pages}{2539–2554}.
\newblock
\showISBNx{9781450367356}
\urldef\tempurl%
\url{https://doi.org/10.1145/3318464.3380600}
\showDOI{\tempurl}


\bibitem[Li et~al\mbox{.}(2018)]%
        {james_cheng}
\bibfield{author}{\bibinfo{person}{Jinfeng Li}, \bibinfo{person}{Xiao Yan},
  \bibinfo{person}{Jian Zhang}, \bibinfo{person}{An Xu}, \bibinfo{person}{James
  Cheng}, \bibinfo{person}{Jie Liu}, \bibinfo{person}{Kelvin K.~W. Ng}, {and}
  \bibinfo{person}{Ti-chung Cheng}.} \bibinfo{year}{2018}\natexlab{}.
\newblock \showarticletitle{A General and Efficient Querying Method for
  Learning to Hash}. In \bibinfo{booktitle}{\emph{Proceedings of the 2018
  International Conference on Management of Data}} (Houston, TX, USA)
  \emph{(\bibinfo{series}{SIGMOD '18})}. \bibinfo{publisher}{Association for
  Computing Machinery}, \bibinfo{address}{New York, NY, USA},
  \bibinfo{pages}{1333–1347}.
\newblock
\showISBNx{9781450347037}
\urldef\tempurl%
\url{https://doi.org/10.1145/3183713.3183750}
\showDOI{\tempurl}


\bibitem[Li et~al\mbox{.}(2019)]%
        {li2019approximate}
\bibfield{author}{\bibinfo{person}{Wen Li}, \bibinfo{person}{Ying Zhang},
  \bibinfo{person}{Yifang Sun}, \bibinfo{person}{Wei Wang},
  \bibinfo{person}{Mingjie Li}, \bibinfo{person}{Wenjie Zhang}, {and}
  \bibinfo{person}{Xuemin Lin}.} \bibinfo{year}{2019}\natexlab{}.
\newblock \showarticletitle{Approximate nearest neighbor search on high
  dimensional data—experiments, analyses, and improvement}.
\newblock \bibinfo{journal}{\emph{IEEE Transactions on Knowledge and Data
  Engineering}} \bibinfo{volume}{32}, \bibinfo{number}{8}
  (\bibinfo{year}{2019}), \bibinfo{pages}{1475--1488}.
\newblock


\bibitem[Liu et~al\mbox{.}(2007)]%
        {KNNimageretrieval}
\bibfield{author}{\bibinfo{person}{Ying Liu}, \bibinfo{person}{Dengsheng
  Zhang}, \bibinfo{person}{Guojun Lu}, {and} \bibinfo{person}{Wei-Ying Ma}.}
  \bibinfo{year}{2007}\natexlab{}.
\newblock \showarticletitle{A survey of content-based image retrieval with
  high-level semantics}.
\newblock \bibinfo{journal}{\emph{Pattern Recognition}} \bibinfo{volume}{40},
  \bibinfo{number}{1} (\bibinfo{year}{2007}), \bibinfo{pages}{262--282}.
\newblock
\showISSN{0031-3203}
\urldef\tempurl%
\url{https://doi.org/10.1016/j.patcog.2006.04.045}
\showDOI{\tempurl}


\bibitem[Lloyd(1982)]%
        {kmeans}
\bibfield{author}{\bibinfo{person}{S. Lloyd}.} \bibinfo{year}{1982}\natexlab{}.
\newblock \showarticletitle{Least squares quantization in PCM}.
\newblock \bibinfo{journal}{\emph{IEEE Transactions on Information Theory}}
  \bibinfo{volume}{28}, \bibinfo{number}{2} (\bibinfo{year}{1982}),
  \bibinfo{pages}{129--137}.
\newblock
\urldef\tempurl%
\url{https://doi.org/10.1109/TIT.1982.1056489}
\showDOI{\tempurl}


\bibitem[Lu et~al\mbox{.}(2021)]%
        {lu2021hvs}
\bibfield{author}{\bibinfo{person}{Kejing Lu}, \bibinfo{person}{Mineichi Kudo},
  \bibinfo{person}{Chuan Xiao}, {and} \bibinfo{person}{Yoshiharu Ishikawa}.}
  \bibinfo{year}{2021}\natexlab{}.
\newblock \showarticletitle{HVS: Hierarchical Graph Structure Based on Voronoi
  Diagrams for Solving Approximate Nearest Neighbor Search}.
\newblock \bibinfo{journal}{\emph{Proc. VLDB Endow.}} \bibinfo{volume}{15},
  \bibinfo{number}{2} (\bibinfo{date}{oct} \bibinfo{year}{2021}),
  \bibinfo{pages}{246–258}.
\newblock
\showISSN{2150-8097}
\urldef\tempurl%
\url{https://doi.org/10.14778/3489496.3489506}
\showDOI{\tempurl}


\bibitem[Lu et~al\mbox{.}(2020)]%
        {lu2020vhp}
\bibfield{author}{\bibinfo{person}{Kejing Lu}, \bibinfo{person}{Hongya Wang},
  \bibinfo{person}{Wei Wang}, {and} \bibinfo{person}{Mineichi Kudo}.}
  \bibinfo{year}{2020}\natexlab{}.
\newblock \showarticletitle{VHP: approximate nearest neighbor search via
  virtual hypersphere partitioning}.
\newblock \bibinfo{journal}{\emph{Proceedings of the VLDB Endowment}}
  \bibinfo{volume}{13}, \bibinfo{number}{9} (\bibinfo{year}{2020}),
  \bibinfo{pages}{1443--1455}.
\newblock


\bibitem[Malkov et~al\mbox{.}(2014)]%
        {NSW}
\bibfield{author}{\bibinfo{person}{Yury Malkov}, \bibinfo{person}{Alexander
  Ponomarenko}, \bibinfo{person}{Andrey Logvinov}, {and}
  \bibinfo{person}{Vladimir Krylov}.} \bibinfo{year}{2014}\natexlab{}.
\newblock \showarticletitle{Approximate nearest neighbor algorithm based on
  navigable small world graphs}.
\newblock \bibinfo{journal}{\emph{Information Systems}}  \bibinfo{volume}{45}
  (\bibinfo{year}{2014}), \bibinfo{pages}{61--68}.
\newblock
\showISSN{0306-4379}
\urldef\tempurl%
\url{https://doi.org/10.1016/j.is.2013.10.006}
\showDOI{\tempurl}


\bibitem[Malkov and Yashunin(2020)]%
        {malkov2018efficient}
\bibfield{author}{\bibinfo{person}{Yu~A. Malkov} {and} \bibinfo{person}{D.~A.
  Yashunin}.} \bibinfo{year}{2020}\natexlab{}.
\newblock \showarticletitle{Efficient and Robust Approximate Nearest Neighbor
  Search Using Hierarchical Navigable Small World Graphs}.
\newblock \bibinfo{journal}{\emph{IEEE Transactions on Pattern Analysis and
  Machine Intelligence}} \bibinfo{volume}{42}, \bibinfo{number}{4}
  (\bibinfo{year}{2020}), \bibinfo{pages}{824--836}.
\newblock
\urldef\tempurl%
\url{https://doi.org/10.1109/TPAMI.2018.2889473}
\showDOI{\tempurl}


\bibitem[Muja and Lowe(2014)]%
        {muja2014scalable}
\bibfield{author}{\bibinfo{person}{Marius Muja} {and} \bibinfo{person}{David~G
  Lowe}.} \bibinfo{year}{2014}\natexlab{}.
\newblock \showarticletitle{Scalable nearest neighbor algorithms for high
  dimensional data}.
\newblock \bibinfo{journal}{\emph{IEEE transactions on pattern analysis and
  machine intelligence}} \bibinfo{volume}{36}, \bibinfo{number}{11}
  (\bibinfo{year}{2014}), \bibinfo{pages}{2227--2240}.
\newblock


\bibitem[Patella and Ciaccia(2008)]%
        {reviewer_SISAP_metric}
\bibfield{author}{\bibinfo{person}{Marco Patella} {and} \bibinfo{person}{Paolo
  Ciaccia}.} \bibinfo{year}{2008}\natexlab{}.
\newblock \showarticletitle{The Many Facets of Approximate Similarity Search}.
  In \bibinfo{booktitle}{\emph{First International Workshop on Similarity
  Search and Applications (sisap 2008)}}. \bibinfo{pages}{10--21}.
\newblock
\urldef\tempurl%
\url{https://doi.org/10.1109/SISAP.2008.18}
\showDOI{\tempurl}


\bibitem[Patella and Ciaccia(2009)]%
        {reviewer_paper}
\bibfield{author}{\bibinfo{person}{Marco Patella} {and} \bibinfo{person}{Paolo
  Ciaccia}.} \bibinfo{year}{2009}\natexlab{}.
\newblock \showarticletitle{Approximate Similarity Search: A Multi-Faceted
  Problem}.
\newblock \bibinfo{journal}{\emph{J. of Discrete Algorithms}}
  \bibinfo{volume}{7}, \bibinfo{number}{1} (\bibinfo{date}{mar}
  \bibinfo{year}{2009}), \bibinfo{pages}{36–48}.
\newblock
\showISSN{1570-8667}
\urldef\tempurl%
\url{https://doi.org/10.1016/j.jda.2008.09.014}
\showDOI{\tempurl}


\bibitem[Qin et~al\mbox{.}(2021)]%
        {tutorialXiao}
\bibfield{author}{\bibinfo{person}{Jianbin Qin}, \bibinfo{person}{Wei Wang},
  \bibinfo{person}{Chuan Xiao}, \bibinfo{person}{Ying Zhang}, {and}
  \bibinfo{person}{Yaoshu Wang}.} \bibinfo{year}{2021}\natexlab{}.
\newblock \showarticletitle{High-Dimensional Similarity Query Processing for
  Data Science}. In \bibinfo{booktitle}{\emph{Proceedings of the 27th ACM
  SIGKDD Conference on Knowledge Discovery \& Data Mining}} (Virtual Event,
  Singapore) \emph{(\bibinfo{series}{KDD '21})}.
  \bibinfo{publisher}{Association for Computing Machinery},
  \bibinfo{address}{New York, NY, USA}, \bibinfo{pages}{4062–4063}.
\newblock
\showISBNx{9781450383325}
\urldef\tempurl%
\url{https://doi.org/10.1145/3447548.3470811}
\showDOI{\tempurl}


\bibitem[Ram and Sinha(2019)]%
        {ram2019revisiting}
\bibfield{author}{\bibinfo{person}{Parikshit Ram} {and}
  \bibinfo{person}{Kaushik Sinha}.} \bibinfo{year}{2019}\natexlab{}.
\newblock \showarticletitle{Revisiting kd-tree for nearest neighbor search}. In
  \bibinfo{booktitle}{\emph{Proceedings of the 25th acm sigkdd international
  conference on knowledge discovery \& data mining}}.
  \bibinfo{pages}{1378--1388}.
\newblock


\bibitem[Satuluri and Parthasarathy(2011)]%
        {satuluri2011bayesian}
\bibfield{author}{\bibinfo{person}{Venu Satuluri} {and}
  \bibinfo{person}{Srinivasan Parthasarathy}.} \bibinfo{year}{2011}\natexlab{}.
\newblock \showarticletitle{Bayesian locality sensitive hashing for fast
  similarity search}.
\newblock \bibinfo{journal}{\emph{arXiv preprint arXiv:1110.1328}}
  (\bibinfo{year}{2011}).
\newblock


\bibitem[Schafer et~al\mbox{.}(2007)]%
        {KNNrecommendation}
\bibfield{author}{\bibinfo{person}{J.~Ben Schafer}, \bibinfo{person}{Dan
  Frankowski}, \bibinfo{person}{Jon Herlocker}, {and} \bibinfo{person}{Shilad
  Sen}.} \bibinfo{year}{2007}\natexlab{}.
\newblock \bibinfo{booktitle}{\emph{Collaborative Filtering Recommender
  Systems}}.
\newblock \bibinfo{publisher}{Springer Berlin Heidelberg},
  \bibinfo{address}{Berlin, Heidelberg}, \bibinfo{pages}{291--324}.
\newblock
\showISBNx{978-3-540-72079-9}
\urldef\tempurl%
\url{https://doi.org/10.1007/978-3-540-72079-9_9}
\showDOI{\tempurl}


\bibitem[Sun et~al\mbox{.}(2014)]%
        {sun2014srs}
\bibfield{author}{\bibinfo{person}{Yifang Sun}, \bibinfo{person}{Wei Wang},
  \bibinfo{person}{Jianbin Qin}, \bibinfo{person}{Ying Zhang}, {and}
  \bibinfo{person}{Xuemin Lin}.} \bibinfo{year}{2014}\natexlab{}.
\newblock \showarticletitle{SRS: solving c-approximate nearest neighbor queries
  in high dimensional euclidean space with a tiny index}.
\newblock \bibinfo{journal}{\emph{Proceedings of the VLDB Endowment}}
  (\bibinfo{year}{2014}).
\newblock


\bibitem[Tao et~al\mbox{.}(2010)]%
        {tao2010efficient}
\bibfield{author}{\bibinfo{person}{Yufei Tao}, \bibinfo{person}{Ke Yi},
  \bibinfo{person}{Cheng Sheng}, {and} \bibinfo{person}{Panos Kalnis}.}
  \bibinfo{year}{2010}\natexlab{}.
\newblock \showarticletitle{Efficient and accurate nearest neighbor and closest
  pair search in high-dimensional space}.
\newblock \bibinfo{journal}{\emph{ACM Transactions on Database Systems (TODS)}}
  \bibinfo{volume}{35}, \bibinfo{number}{3} (\bibinfo{year}{2010}),
  \bibinfo{pages}{1--46}.
\newblock


\bibitem[Vershynin(2018)]%
        {vershynin_2018}
\bibfield{author}{\bibinfo{person}{Roman Vershynin}.}
  \bibinfo{year}{2018}\natexlab{}.
\newblock \bibinfo{booktitle}{\emph{High-Dimensional Probability: An
  Introduction with Applications in Data Science}}.
\newblock \bibinfo{publisher}{Cambridge University Press}.
\newblock
\urldef\tempurl%
\url{https://doi.org/10.1017/9781108231596}
\showDOI{\tempurl}


\bibitem[Wang et~al\mbox{.}(2016)]%
        {learningtohash}
\bibfield{author}{\bibinfo{person}{Jun Wang}, \bibinfo{person}{Wei Liu},
  \bibinfo{person}{Sanjiv Kumar}, {and} \bibinfo{person}{Shih{-}Fu Chang}.}
  \bibinfo{year}{2016}\natexlab{}.
\newblock \showarticletitle{Learning to Hash for Indexing Big Data - {A}
  Survey}.
\newblock \bibinfo{journal}{\emph{Proc. {IEEE}}} \bibinfo{volume}{104},
  \bibinfo{number}{1} (\bibinfo{year}{2016}), \bibinfo{pages}{34--57}.
\newblock
\urldef\tempurl%
\url{https://doi.org/10.1109/JPROC.2015.2487976}
\showDOI{\tempurl}


\bibitem[Wang et~al\mbox{.}(2021b)]%
        {milvus}
\bibfield{author}{\bibinfo{person}{Jianguo Wang}, \bibinfo{person}{Xiaomeng
  Yi}, \bibinfo{person}{Rentong Guo}, \bibinfo{person}{Hai Jin},
  \bibinfo{person}{Peng Xu}, \bibinfo{person}{Shengjun Li},
  \bibinfo{person}{Xiangyu Wang}, \bibinfo{person}{Xiangzhou Guo},
  \bibinfo{person}{Chengming Li}, \bibinfo{person}{Xiaohai Xu},
  \bibinfo{person}{Kun Yu}, \bibinfo{person}{Yuxing Yuan},
  \bibinfo{person}{Yinghao Zou}, \bibinfo{person}{Jiquan Long},
  \bibinfo{person}{Yudong Cai}, \bibinfo{person}{Zhenxiang Li},
  \bibinfo{person}{Zhifeng Zhang}, \bibinfo{person}{Yihua Mo},
  \bibinfo{person}{Jun Gu}, \bibinfo{person}{Ruiyi Jiang}, \bibinfo{person}{Yi
  Wei}, {and} \bibinfo{person}{Charles Xie}.} \bibinfo{year}{2021}\natexlab{b}.
\newblock \showarticletitle{Milvus: A Purpose-Built Vector Data Management
  System}. In \bibinfo{booktitle}{\emph{Proceedings of the 2021 International
  Conference on Management of Data}} (Virtual Event, China)
  \emph{(\bibinfo{series}{SIGMOD '21})}. \bibinfo{publisher}{Association for
  Computing Machinery}, \bibinfo{address}{New York, NY, USA},
  \bibinfo{pages}{2614–2627}.
\newblock
\showISBNx{9781450383431}
\urldef\tempurl%
\url{https://doi.org/10.1145/3448016.3457550}
\showDOI{\tempurl}


\bibitem[Wang et~al\mbox{.}(2018)]%
        {learningtohashsurvey}
\bibfield{author}{\bibinfo{person}{Jingdong Wang}, \bibinfo{person}{Ting
  Zhang}, \bibinfo{person}{jingkuan song}, \bibinfo{person}{Nicu Sebe}, {and}
  \bibinfo{person}{Heng~Tao Shen}.} \bibinfo{year}{2018}\natexlab{}.
\newblock \showarticletitle{A Survey on Learning to Hash}.
\newblock \bibinfo{journal}{\emph{IEEE Transactions on Pattern Analysis and
  Machine Intelligence}} \bibinfo{volume}{40}, \bibinfo{number}{4}
  (\bibinfo{year}{2018}), \bibinfo{pages}{769--790}.
\newblock
\urldef\tempurl%
\url{https://doi.org/10.1109/TPAMI.2017.2699960}
\showDOI{\tempurl}


\bibitem[Wang et~al\mbox{.}(2021a)]%
        {graphbenchmark}
\bibfield{author}{\bibinfo{person}{Mengzhao Wang}, \bibinfo{person}{Xiaoliang
  Xu}, \bibinfo{person}{Qiang Yue}, {and} \bibinfo{person}{Yuxiang Wang}.}
  \bibinfo{year}{2021}\natexlab{a}.
\newblock \showarticletitle{A Comprehensive Survey and Experimental Comparison
  of Graph-Based Approximate Nearest Neighbor Search}.
\newblock \bibinfo{journal}{\emph{Proc. VLDB Endow.}} \bibinfo{volume}{14},
  \bibinfo{number}{11} (\bibinfo{date}{jul} \bibinfo{year}{2021}),
  \bibinfo{pages}{1964–1978}.
\newblock
\showISSN{2150-8097}
\urldef\tempurl%
\url{https://doi.org/10.14778/3476249.3476255}
\showDOI{\tempurl}


\bibitem[Yang et~al\mbox{.}(2020)]%
        {PASE}
\bibfield{author}{\bibinfo{person}{Wen Yang}, \bibinfo{person}{Tao Li},
  \bibinfo{person}{Gai Fang}, {and} \bibinfo{person}{Hong Wei}.}
  \bibinfo{year}{2020}\natexlab{}.
\newblock \showarticletitle{PASE: PostgreSQL Ultra-High-Dimensional Approximate
  Nearest Neighbor Search Extension}. In \bibinfo{booktitle}{\emph{Proceedings
  of the 2020 ACM SIGMOD International Conference on Management of Data}}
  (Portland, OR, USA) \emph{(\bibinfo{series}{SIGMOD '20})}.
  \bibinfo{publisher}{Association for Computing Machinery},
  \bibinfo{address}{New York, NY, USA}, \bibinfo{pages}{2241–2253}.
\newblock
\showISBNx{9781450367356}
\urldef\tempurl%
\url{https://doi.org/10.1145/3318464.3386131}
\showDOI{\tempurl}


\bibitem[Zheng et~al\mbox{.}(2020)]%
        {zheng2020pm}
\bibfield{author}{\bibinfo{person}{Bolong Zheng}, \bibinfo{person}{Zhao Xi},
  \bibinfo{person}{Lianggui Weng}, \bibinfo{person}{Nguyen Quoc~Viet Hung},
  \bibinfo{person}{Hang Liu}, {and} \bibinfo{person}{Christian~S Jensen}.}
  \bibinfo{year}{2020}\natexlab{}.
\newblock \showarticletitle{PM-LSH: A fast and accurate LSH framework for
  high-dimensional approximate NN search}.
\newblock \bibinfo{journal}{\emph{Proceedings of the VLDB Endowment}}
  \bibinfo{volume}{13}, \bibinfo{number}{5} (\bibinfo{year}{2020}),
  \bibinfo{pages}{643--655}.
\newblock


\end{thebibliography}

\appendix
\section*{appendix}

\section{Theoretical Analysis}
\label{section:theory}

In this section, we prove Lemma~\ref{theorem:ADSampling efficiency} in detail. 
We assume that {\CHENGB we sample one additional dimension of $\mathbf{y}$ each time.}
Let $\gamma(d) = (1 + \epsilon_0 / \sqrt {d} )$. 
We have
\begingroup
\allowdisplaybreaks
\begin{align}
    \mathbb{E} \left[ \hat D \right]  
    =& \sum_{d=1}^{D} d\cdot \mathbb{P} \left\{ \hat D = d \right\}
    = \sum_{d=1}^{D} \mathbb{P} \left\{ \hat D \ge d \right\}
    \\=& \sum_{d=1}^{D} \mathbb{P} \left\{  \forall p<d,  \sqrt {\frac{D}{p}}  \| \mathbf{y}|_{1,2,...,p}\| \le \gamma(p) \cdot r  \right\}      \label{reduction:interpret}
    \\=& \sum_{d=1}^{D} \mathbb{P} \left\{  \forall p<d, \sqrt {\frac{D}{p}} \| \mathbf{y}|_{1,2,...,p}\| \le \gamma(p) \cdot \frac{\| \mathbf{y}\| }{1+\alpha} \right\} 
    \\\le& 1 + \sum_{d=1}^{D-1} \mathbb{P} \left\{ \sqrt {\frac{D}{d}} \| \mathbf{y}|_{1,2,...,d}\| \le \gamma(d) \cdot \frac{\| \mathbf{y}\| }{1+\alpha}  \right\}    \label{reduction: relax hypothesis testing}  
\end{align}
\endgroup
where (\ref{reduction:interpret}) is because $\hat D \ge d$ {\CHENGB represents the event that} all the previous hypothesis testings cannot reject the hypothesis and (\ref{reduction: relax hypothesis testing}) relaxes the event corresponding to all the testings (i.e., $\forall p < d$) to that corresponding to the last testing (i.e., $p=d-1$). 

We denote $\tilde d:= \epsilon _{0}^{2} / \alpha^2, d_0 := \mathrm{ceil} (\tilde d) $ and relax the probability for $d \le d_0$ to $1$:
\begin{align}
    \mathbb{E} \left[ \hat D \right]  \le  1 + d_0   + \sum_{d=d_0 +1 }^{D} \mathbb{P} \left\{ \sqrt {\frac{D}{d}} \| \mathbf{y}|_{1,2,...,d}\| \le \frac{\gamma (d)}{1+\alpha } \| \mathbf{y} \| \right\}  \label{reduction:separate analyze}
\end{align}
Let's focus on the last term of (\ref{reduction:separate analyze}) and deduce from it as follows,
\begingroup
\allowdisplaybreaks
\begin{align}
    &\sum_{d=d_0 +1 }^{D} \mathbb{P} \left\{ \sqrt {\frac{D}{d}} \| \mathbf{y}|_{1,2,...,d}\| \le \frac{\gamma (d)}{1+\alpha } \| \mathbf{y} \| \right\}   \\
    =&\sum_{d=d_0 +1}^{D} \mathbb{P} \left\{ \sqrt {\frac{D}{d}} \| \mathbf{y}|_{1,2,...,d}\| \le \left[ 1- \left( 1-\frac{\gamma(d)}{1+\alpha}  \right)  \right]   \| \mathbf{y} \| \right\}    \label{reduction:rewrite}
    \\\le& \sum_{d=d_0+1}^{D} \exp \left[ -c_0 d \left( 1 - \frac{\gamma(d)}{1+\alpha}  \right)^2 \right]  
    \label{reduction:lemma3.1}
    \\=&\sum_{d=d_0+1}^{D} \exp \left[ -\frac{c_0 \alpha^2}{(1+\alpha)^2}\left( \sqrt {d} - \sqrt{\tilde{d}}  \right) ^2  \right]  \label{reduction:expand and sort out}
    \\\le& \int_{d_0}^{D} \exp \left[ -\frac{c_0 \alpha^2}{(1+\alpha)^2}\left( \sqrt {x} - \sqrt {\tilde{d}}  \right) ^2  \right] \mathrm{d} x  \label{reduction:integration}
    \\=& \int_{d_0}^{D} \exp \left[ -\frac{c_0 \epsilon_0^2}{(1+\alpha)^2}\left( \sqrt {\frac{x}{\tilde{d}} } - 1 \right) ^2  \right] \mathrm{d} x  
    \\=&\tilde{d} \int_{d_0 /\tilde d}^{D/\tilde{d}} \exp \left[ - \frac{c_0 \epsilon_0^2}{(1+\alpha)^2} \left( \sqrt {u} -1 \right)^2  \right]  \mathrm{d} u  
    \label{reduction:substitute u}
    \\\le& \tilde{d} \int_{1}^{+\infty} \exp \left[ - \frac{c_0 \epsilon_0^2}{(1+\alpha)^2} \left( \sqrt {u} -1 \right)^2  \right]  \mathrm{d} u \label{reduction:relaxing to inf}
\end{align}
\endgroup
{\CHENGC where
(\ref{reduction:rewrite}) rewrites it to fit the format of Lemma~\ref{eq:concen}, 
(\ref{reduction:lemma3.1}) applies Lemma~\ref{lemma:concen}, 
(\ref{reduction:expand and sort out}) plugs in $\gamma(d)$, 
(\ref{reduction:integration}) relaxes (\ref{reduction:expand and sort out}) to an integration,
(\ref{reduction:substitute u}) substitutes $u = x / \tilde d$, and
(\ref{reduction:relaxing to inf}) relaxes the integration to $[1,+\infty)$.}
{\JIANYANG 
Next we first analyze the case of $\alpha \le \epsilon_0$ as follows.
}
\begin{align}
    \text{(\ref{reduction:relaxing to inf})}\le& \tilde{d} \int_1^{+\infty} \exp \left[ - \frac{c_0 \epsilon_0^2}{(1+\epsilon_0)^2} \left( \sqrt {u} -1 \right)^2  \right]  \mathrm{d} u  \label{reduction:alpha <= eps}
    \\\le& \tilde{d} \int_1^{+\infty} \exp \left[ - \frac{c_0}{4} \left( \sqrt {u} -1 \right)^2  \right]  \mathrm{d} u \label{reduction:eps >= 1}
\end{align}
{\CHENGC where (\ref{reduction:alpha <= eps}) is because $\alpha \le \epsilon_0$ and (\ref{reduction:eps >= 1}) is yielded when setting $\epsilon_0 \ge 1$ for reasonable accuracy.}
Note that the integration is convergent so as to be bounded by a constant. For the case of $\alpha \le \epsilon_0$, we have 
\begin{align}
    \mathbb{E} \left[ \hat D  \right]  = 1 + d_0 + O(\tilde{d}) = O(\tilde{d}) = O \left( \alpha^{-2} \cdot \epsilon_0^2 \right)  \label{eq:quadratic}
\end{align}
{\JIANYANG
{\CHENGB For the case of} $\alpha > \epsilon_0$, its expected dimensionality must be no greater than the case of $\alpha =\epsilon_0$ because its distance gap $\alpha$ is larger. Thus, its expected dimensionality is upper bounded by $O(1)$.
}
{\JIANYANGREVISION
\section{Results of Tree-based and Hashing-based Methods}
\label{appendix:section tree and hashing}
\begin{figure*}[thb]
  \centering 
    \includegraphics[width=17cm]{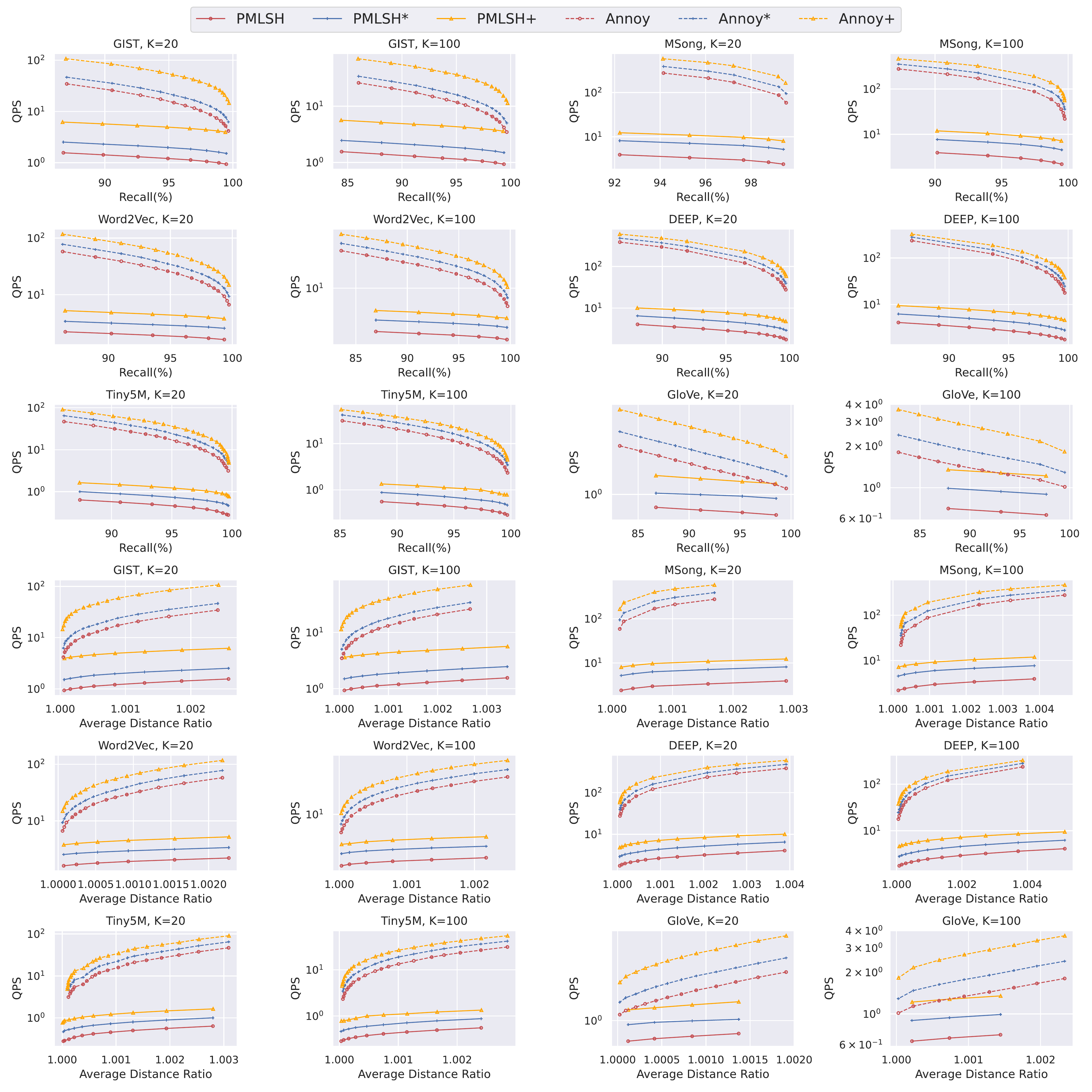}
  \vspace*{-4mm}
  \caption{{\JIANYANGREVISION Time-Accuracy Tradeoff (\texttt{PMLSH} and \texttt{Annoy}).}}
  \vspace*{-4mm}
  \label{figure:time-accuracy-tree-and-hashing}
\end{figure*}
For \texttt{Annoy}, following \cite{li2019approximate}, we set the number of trees $N_{tree} = 50$. 
{\JIANYANGREVISION During the \underline{index phase}, we feed the raw data vectors into the indexing algorithm of \texttt{Annoy} (note that \texttt{Annoy}, \texttt{Annoy+} and \texttt{Annoy}* have the same index structure). Then during the \underline{query phase}, for \texttt{Annoy}/\texttt{Annoy}*, we load the index and the raw data vectors into main memory, generate candidates by feeding the raw query vector into the the query algorithm of \texttt{Annoy} and re-rank the candidates with \texttt{FDScanning}/\texttt{PDScanning}. For \texttt{Annoy+}, we load the index and the transformed data vectors into main memory, generate candidates by feeding the raw query vector into the the query algorithm of \texttt{Annoy} and re-rank the candidates with \texttt{ADSampling}. }
For \texttt{PMLSH}, following \cite{zheng2020pm}, we set the dimensionality of random projection as 15, the size of internal and leaf nodes of the PM-Tree as 16. Similar to \texttt{Annoy}, during the \underline{index phase}, we build the indexes based on the raw data vectors. During the \underline{query phase}, we generate candidates by feeding the raw query vectors to the search algorithm of \texttt{PMLSH} and re-rank them with \texttt{FDScanning}, \texttt{PDScanning} and \texttt{ADSampling} based on raw vectors, raw vectors and transformed vectors respectively. 
For both methods, we vary the number of accessed candidates to control the time-accuracy tradeoff.  
{\chengr We exclude the optimization of data layout for this experiment since it is not applicable for index ensembles (e.g., tree ensembles of \texttt{Annoy}).}
We plot the QPS-recall and QPS-average distance ratio curves {\chengr of the compared algorithms} in Figure~\ref{figure:time-accuracy-tree-and-hashing}. It shows that \texttt{AKNN+} outperforms the \texttt{AKNN}* and \texttt{AKNN} algorithms consistently and significantly.

}

\end{document}